\newcommand{\dd}{\mathrm{d}}
\newcommand{\ud}{\,\dd}
\newcommand{\nbr}[1]{$#1$\nobreakdash-\hspace{0pt}}
\newcommand{\braket}[1]{\langle{#1}\rangle}
\newcommand{\lshad}{[\![}
\newcommand{\rshad}{]\!]}
\newcommand{\sdot}{\,\cdot\,}
\providecommand{\abs}[1]{\lvert#1\rvert}
\providecommand{\Abs}[1]{\left\lvert#1\right\rvert}
\providecommand{\norm}[1]{\lVert#1\rVert}
\DeclareMathOperator{\tr}{tr}
\DeclareMathOperator{\Tr}{Tr}
\DeclareMathOperator{\sinc}{sinc}
\newtheorem{theorem}{Theorem}
\theoremstyle{definition}
\title{Deformation quantization with minimal length}
\author{Ziemowit Doma\'nski\thanks{The author is supported by the Ministry of
Science and Higher Education of Poland, grant number 04/43/DSPB/0094.}\\
\small Institute of Mathematics, Pozna{\'n} University of Technology\\
\small Piotrowo 3A, 60-965 Pozna{\'n}, Poland\\
\small \tt ziemowit.domanski@put.poznan.pl \and Maciej B{\l}aszak\\
\small Faculty of Physics, Division of Mathematical Physics, A. Mickiewicz
University\\
\small Umultowska 85, 61-614 Pozna\'n, Poland\\
\small \tt blaszakm@amu.edu.pl}
\begin{document}
\maketitle

\begin{abstract}
We develop a complete theory of non-formal deformation quantization exhibiting
a nonzero minimal uncertainty in position. An appropriate integral formula for
the star-product is introduced together with a suitable space of functions on
which the star-product is well defined. Basic properties of the star-product
are proved and the extension of the star-product to a certain Hilbert space and
an algebra of distributions is given. A \nbr{C^*}algebra of observables and
a space of states are constructed. Moreover, an operator representation in
momentum space is presented. Finally, examples of position eigenvectors and
states of maximal localization are given.
\\[\baselineskip]
\textbf{Keywords and phrases}: quantum mechanics, deformation quantization,
star-product, minimal length, generalized uncertainty principle
\end{abstract}

\section{Introduction}
\label{sec:1}
The idea that there should exists a minimal length scale was born in 1930s with
the advent of quantum field theory. It was believed that it could help to remove
divergences occurring in quantum theory. The interest in a minimal length
scale was later on renewed when it was suggested that the gravity will play a
significant role at short distances and effectively making it impossible to
measure distances to a precision better than Planck's length
\cite{Snyder:1947,Mead:1964,Mead:1966,Adler:1999}. Recently a minimal length
arose also in many theories of quantum gravity and in string theory
\cite{Amati:1987,Amati:1988,Gross:1988,Amati:1990,Konishi:1990,Maggiore:1993,%
Maggiore:1994,Scardigli:1999,Scardigli:2003}. For a broad review of this topic
and discussion of different implementations of a minimal length scale in quantum
mechanics and quantum field theory we refer to \cite{Hossenfelder:2013}.

A way of achieving minimal length scale is through modification of uncertainty
relations for position and momentum. Such modified uncertainty relations are
referred in the literature as a Generalized Uncertainty Principle and can be
reproduced from modified commutation relations for position and momentum
operators. In \cite{Kempf:1995} authors proposed the following modified
uncertainty relation for position and momentum in a one-dimensional case and
nonrelativistic regime
\begin{equation}
\Delta q \Delta p \geq \frac{\hbar}{2}\left(1 + \beta (\Delta p)^2
    + \beta \braket{\hat{p}}^2\right),
\label{eq:1}
\end{equation}
where $\beta$ is some positive constant. This is the simplest form of the
uncertainty relation which leads to a nonzero minimal uncertainty $\Delta q_0$
in position. More general uncertainty relations were considered in
\cite{Kempf:1992,Kempf:1994,Quesne.Penson:2003,Quesne:2003} exhibiting also a
minimal uncertainty in momentum. The nonzero value of $\Delta q_0$ is a
manifestation of the quantization of space. There exists a lower bound to the
possible resolution with which we can measure distances. From \eqref{eq:1} we
can infer that the absolutely smallest uncertainty in position has the value
\begin{equation}
\Delta q_0 = \hbar\sqrt\beta.
\end{equation}
We see that the constant $\beta$ describes the quantization of space where for
$\beta = 0$ we recover the usual uncertainty relation and nonquantized space.
The uncertainty relation \eqref{eq:1} can be derived from the following
commutation relations
\begin{equation}
[\hat{q},\hat{p}] = i\hbar\left(\hat{1} + \beta\hat{p}^2\right)
\label{eq:2}
\end{equation}
for the operators of position and momentum. In literature can be found several
works in which the value of the constant $\beta$ is assessed by means of
different approaches. We mention papers \cite{Scardigli:2015,Scardigli:2017} in
which authors consider generalized uncertainty principle in quantum gravity and
calculate bounds for the parameter $\beta$ from well-known astronomical
measurements as well as compute its value by using quantum corrections to
the Newtonian potential.

It should be noted that in literature can be found approaches to quantum
mechanics and field theory with different modified commutation relations than
the one considered in this work. For instance in \cite{Amelino-Camelia:2002,%
Freidel:2007,Meljanac:2008} authors developed quantum field theory on a
noncommutative space-time characterized by the commutation relations of
space-time variables either of the canonical type
\begin{equation}
[x_\mu,x_\nu] = i\theta_{\mu\nu}
\end{equation}
or the Lie-algebra type
\begin{equation}
[x_\mu,x_\nu] = iC^\beta_{\mu\nu} x_\beta,
\end{equation}
where $\theta_{\mu\nu}$ and $C^\beta_{\mu\nu}$ are real constant tensors.
However, such commutation relations lead to different physical and mathematical
consequences.

The purpose of the paper is to develop a deformation quantization approach to
quantum mechanics which will exhibit a nonzero minimal uncertainty in position.
A deformation quantization is a method of quantizing a classical Hamiltonian
system by means of a suitable deformation of a Poisson algebra of the system.
The deformation is performed with respect to some parameter which in the context
of quantization is taken as the Planck's constant $\hbar$. For a detailed
description of deformation quantization theory we refer to
\cite{Bayen:1978a,Bayen:1978b,Blaszak:2012,Gosson:2006}.

To our knowledge in the literature nothing can be found on the subject of
deformation quantization with minimal length. Worth noting are papers
\cite{Amelino-Camelia:2002,Meljanac:2008} cited earlier where star-products on
a noncommutative space-time were introduced and papers
\cite{Bastos:2008,Bastos:2010,Dias:2010} where authors consider quantum
mechanics from a deformation quantization perspective with the following
modified commutation relations of position and momentum operators
\begin{equation}
[\hat{q}_i,\hat{q}_j] = i\theta_{ij}, \quad
[\hat{q}_i,\hat{p}_j] = i\hbar\delta_{ij}, \quad
[\hat{p}_i,\hat{p}_j] = i\eta_{ij},
\end{equation}
where $\theta_{ij}$ and $\eta_{ij}$ are anti-symmetric real constant matrices.
However, no one investigated deformation quantization approach to quantum
mechanics with commutation relations \eqref{eq:2}, which is the topic of this
paper.

The paper is organized as follows. In Section~\ref{sec:2} we introduce a formal
star-product which incorporates a minimal length scale. This formal star-product
serves as a motivation for introducing, in next sections, a non-formal
deformation quantization with minimal length. The term non-formal means that the
star-product of two complex-valued functions results in a complex-valued
function and not in a function with values in a ring
$\mathbb{C}\lshad\hbar\rshad$ of formal power series in $\hbar$ with
coefficients in $\mathbb{C}$. First, in Section~\ref{sec:3} is developed a
generalized arithmetics on $\mathbb{R}$ which plays a fundamental role in the
presented theory. Next, in Section~\ref{sec:4} we introduce integral formulas
for the star-product and a suitable space of functions on $\mathbb{R}^2$ for
which the introduced integral formulas are well defined. We also prove basic
properties of the star-product. Section~\ref{sec:5} presents an extension of the
star-product to a certain Hilbert subspace of $L^2(\mathbb{R}^2,\dd{l})$, and
in Section~\ref{sec:6} we extend the star-product to a suitable space of
distributions. In Section~\ref{sec:7} we define a \nbr{C^*}algebra of
observables and a space of states. We also give a characterization of states in
terms of quasi-probabilistic distribution functions. Section~\ref{sec:8}
presents a construction of an operator representation in momentum space of the
developed formalism of quantum mechanics. In Sections~\ref{sec:9}
and~\ref{sec:10} is derived a form of quasi-probabilistic distribution functions
describing position eigenvectors and states of maximal localization. We end the
paper with some final remarks and conclusions given in Section~\ref{sec:11}.

\section{Motivation}
\label{sec:2}
The starting point of our considerations is a classical system defined on
a phase space $\mathbb{R}^2$ with a canonical Poisson bracket
\begin{equation}
\{q,p\} = 1.
\end{equation}
We can perform quantization of the above classical system by means of a
deformation quantization methods. For this we have to introduce a star-product
on the phase space $\mathbb{R}^2$, which will be a deformation, with respect to
$\hbar$, of the ordinary point-wise product in the Poisson algebra
$C^\infty(\mathbb{R}^2)$, such that the star-product will remain associative but
will loose commutativity. The first order term in the expansion with respect to
$\hbar$ of the star-commutator should be equal
\begin{equation}
[f,g]_\star = i\hbar \{f,g\} + \dotsb.
\end{equation}
Moreover, the following canonical commutation relation should hold
\begin{equation}
[q,p]_\star = i\hbar.
\end{equation}
The most natural family of star-products on $\mathbb{R}^2$ satisfying the above
conditions is of the form
\begin{equation}
f \star g = f\exp\left(i\hbar(1 - \lambda)\overleftarrow{\partial_{q}}
    \overrightarrow{\partial_{p}}
    - i\hbar\lambda\overleftarrow{\partial_{p}}
    \overrightarrow{\partial_{q}}\right)g,
\label{eq:3}
\end{equation}
where $0 \leq \lambda \leq 1$ describes different orderings of position
and momentum operators in a corresponding operator representation.

We can now incorporate a minimal length scale into the above picture by
deformation of the \nbr{\star}product \eqref{eq:3} with respect to $\beta$ to a
new product satisfying, on account of \eqref{eq:2}, the following relation
\begin{equation}
[q,p]_\star = i\hbar(1 + \beta p \star p).
\label{eq:43}
\end{equation}
One of such deformations which has particularly simple form is given by
\begin{align}
f \star g & = f\exp\left(
    i\hbar(1 - \lambda)\overleftarrow{\partial_q}\overrightarrow{D_p}
    - i\hbar\lambda\overleftarrow{D_p}\overrightarrow{\partial_q}\right)g
    \nonumber \\
& = \sum_{k=0}^\infty \frac{1}{k!} (i\hbar)^k \sum_{l=0}^k \binom{k}{l}
    (1 - \lambda)^l (-\lambda)^{k-l} (\partial_q^l D_p^{k-l} f)
    (\partial_q^{k-l} D_p^l g),
\label{eq:4}
\end{align}
where $D_p = (1 + \beta p^2)\partial_p$. Note, that
\begin{equation}
q \star q = q^2, \quad p \star p = p^2,
\label{eq:42}
\end{equation}
where $q^2$ and $p^2$ are the usual products of $q$ and $p$ with themselves.
Moreover, after performing the following noncanonical transformation of
coordinates
\begin{equation}
\begin{aligned}
\bar{q} & = q, \\
\bar{p} & = \frac{1}{\sqrt\beta}\arctan(\sqrt\beta p)
\end{aligned}
\end{equation}
the \nbr{\star}product \eqref{eq:4} takes the form as in \eqref{eq:3}:
\begin{equation}
f \star g = f\exp\left(i\hbar(1 - \lambda)\overleftarrow{\partial_{\bar{q}}}
    \overrightarrow{\partial_{\bar{p}}}
    - i\hbar\lambda\overleftarrow{\partial_{\bar{p}}}
    \overrightarrow{\partial_{\bar{q}}}\right)g,
\end{equation}
and the star-commutator of $\bar{q}$, $\bar{p}$ is equal $i\hbar$.

In the limit $\hbar \to 0$ the \nbr{\star}product \eqref{eq:4} reduces to the
ordinary point-wise product and $\frac{1}{i\hbar}[\sdot,\sdot]_\star$ reduces to
a Poisson bracket which acts on observables of position and momentum through the
following relation
\begin{equation}
\{q,p\} = 1 + \beta p^2.
\end{equation}
Thus in the limit $\hbar \to 0$ we received a classical system with a
noncanonical Poisson bracket. Investigations of this classical system could
reveal something about the parameter $\beta$. For example in \cite{Tkachuk:2012}
author argued that in order to preserve the correspondence principle the
parameter $\beta$ has to depend on the mass of a particle which dynamics is
described by the system.

In what follows we will investigate properties of the star-product \eqref{eq:4}
and develop a complete theory of quantum mechanics on phase space exhibiting a
nonzero minimal uncertainty in position. As we will see the presented
quantization procedure is connected with a modified arithmetics on $\mathbb{R}$,
which we will present in the next section.

\section{Generalized arithmetics on $\mathbb{R}$}
\label{sec:3}
In what follows we will develop a modified arithmetic on $\mathbb{R}$. It will
be a particular case of a general approach considered in \cite{Czachor:2016}.

Let $\beta > 0$. For $x,y \in \mathbb{R}$ such that $\beta xy \neq 1$ we define
the generalized addition by
\begin{equation}
x \oplus y = \frac{x + y}{1 - \beta xy}.
\end{equation}
The set $\{(x,y) \in \mathbb{R}^2 \mid \beta xy = 1\}$ on which the operation
$\oplus$ is not well defined is of Lebesgue measure zero. The generalized
addition $\oplus$ has the following properties:
\begin{enumerate}[(i)]
\item\label{item:1a} $x \oplus (y \oplus z) = (x \oplus y) \oplus z$
(associativity),
\item\label{item:1b} $x \oplus y = y \oplus x$ (commutativity),
\item\label{item:1c} $x \oplus 0 = 0 \oplus x = x$ (0 is the neutral element),
\item\label{item:1d} $x \oplus (-x) = (-x) \oplus x = 0$ ($-x$ is the inverse
element to $x$),
\item\label{item:1e} $x \oplus y = \frac{1}{\sqrt\beta}\tan\left(
\arctan(\sqrt\beta x) + \arctan(\sqrt\beta y)\right)$.
\end{enumerate}
Indeed, we calculate that
\begin{equation}
x \oplus (y \oplus z) =
    \frac{x + \frac{y + z}{1 - \beta yz}}{1 - \beta x\frac{y + z}{1 - \beta yz}}
= \frac{x + y + z - \beta xyz}{1 - \beta(xy + yz + zx)}
= (x \oplus y) \oplus z,
\end{equation}
which proves property~(\ref{item:1a}). Properties (\ref{item:1b}),
(\ref{item:1c}), and (\ref{item:1d}) are an immediate consequence of the
definition. Property (\ref{item:1e}) follows from the formula for the tangent
of the sum of angles:
\begin{equation}
\tan(\theta_1 + \theta_2) =
    \frac{\tan\theta_1 + \tan\theta_2}{1 - \tan\theta_1 \tan\theta_2}.
\end{equation}
For every $y \in \mathbb{R}$ such that $y \neq 0$ the map
$x \mapsto x \oplus y$ is a bijection of the set $\{x \in \mathbb{R} \mid
x \neq \tfrac{1}{\beta y}\}$ onto the set $\{z \in \mathbb{R} \mid
z \neq -\tfrac{1}{\beta y}\}$. We define the generalized subtraction by
\begin{equation}
x \ominus y = x \oplus (-y).
\end{equation}

For a scalar $\lambda \in \mathbb{R}$ such that $-1 \leq \lambda \leq 1$ and
a vector $x \in \mathbb{R}$ we define the generalized multiplication by scalar:
\begin{equation}
\lambda \circ x =
    \frac{1}{\sqrt\beta}\tan\left(\lambda\arctan(\sqrt\beta x)\right).
\end{equation}
Note, that for a fixed $\lambda \in [-1,1]$ the map $x \mapsto \lambda
\circ x$ is a bijection of $\mathbb{R}$ into $\mathbb{R}$ and for a fixed
$x \in \mathbb{R}$ the map $\lambda \mapsto \lambda \circ x$ is a
bijection of $[-1,1]$ into $\mathbb{R}$. The operation $\circ$ has the
following properties:
\begin{enumerate}[(i)]
\item\label{item:2a} $\lambda \circ (x \oplus y) =
(\lambda \circ x) \oplus (\lambda \circ y)$,
\item\label{item:2b} $(\lambda_1 \circ x) \oplus
(\lambda_2 \circ x) = (\lambda_1 + \lambda_2) \circ x$,
\item\label{item:2c} $\lambda_1 \circ (\lambda_2 \circ x) =
(\lambda_1 \lambda_2) \circ x$.
\end{enumerate}
Indeed, we calculate that
\begin{align}
(\lambda \circ x) \oplus (\lambda \circ y) & =
    \frac{1}{\sqrt\beta}\tan\left(
    \arctan\bigl(\sqrt\beta (\lambda \circ x)\bigr)
    + \arctan\bigl(\sqrt\beta (\lambda \circ y)\bigr)\right) \nonumber \\
& = \frac{1}{\sqrt\beta}\tan\left(\lambda\arctan(\sqrt\beta x)
    + \lambda\arctan(\sqrt\beta y)\right)
= \frac{1}{\sqrt\beta}\tan\left(
    \lambda\arctan\bigl(\sqrt\beta(x \oplus y)\bigr)\right) \nonumber \\
& = \lambda \circ (x \oplus y),
\end{align}
which proves (\ref{item:2a}). Property (\ref{item:2b}) follows from
\begin{align}
(\lambda_1 \circ x) \oplus (\lambda_2 \circ x) & =
    \frac{1}{\sqrt\beta}\tan\left(
    \arctan\bigl(\sqrt\beta (\lambda_1 \circ x)\bigr)
    + \arctan\bigl(\sqrt\beta (\lambda_2 \circ x)\bigr)\right)
    \nonumber \\
& = \frac{1}{\sqrt\beta}\tan\left((\lambda_1 + \lambda_2)\arctan(\sqrt\beta x)
    \right)
= (\lambda_1 + \lambda_2) \circ x,
\end{align}
and property (\ref{item:2c}) is a consequence of
\begin{align}
\lambda_1 \circ (\lambda_2 \circ x) & =
    \frac{1}{\sqrt\beta}\tan\left(\lambda_1\arctan\left(\sqrt\beta
    \frac{1}{\sqrt\beta}\tan\left(\lambda_2\arctan(\sqrt\beta x)\right)\right)
    \right) \nonumber \\
& = \frac{1}{\sqrt\beta}\tan\left(\lambda_1\lambda_2\arctan(\sqrt\beta x)\right)
= (\lambda_1 \lambda_2) \circ x.
\end{align}
From the half-angle formula for the tangent function
\begin{equation}
\tan\frac{\theta}{2} = \frac{\sqrt{1 + \tan^2\theta} - 1}{\tan\theta}, \quad
\theta \in (-\tfrac{\pi}{2},\tfrac{\pi}{2})
\end{equation}
we get that in particular
\begin{equation}
\frac{1}{2} \circ x = \frac{\sqrt{1 + \beta x^2} - 1}{\beta x}.
\end{equation}

The definition of the generalized addition $\oplus$ can be continuously extended
to the projectively extended space of real numbers
$\overline{\mathbb{R}} = \mathbb{R} \cup \{\infty\}$, so that it will be well
defined for every pair of points $x,y \in \overline{\mathbb{R}}$, if we put
\begin{equation}
\begin{aligned}
x \oplus \frac{1}{\beta x} & = \frac{1}{\beta x} \oplus x = \infty
\text{ for $x \in \mathbb{R}$ such that $x \neq 0$}, \\
x \oplus \infty & = \infty \oplus x = -\frac{1}{\beta x}
\text{ for $x \in \mathbb{R}$ such that $x \neq 0$}, \\
0 \oplus \infty & = \infty \oplus 0 = \infty, \\
\infty \oplus \infty & = 0.
\end{aligned}
\end{equation}
Then, the space $\overline{\mathbb{R}}$ together with the extended operation
$\oplus$ becomes a compact abelian topological group isomorphic to the circle
group $\mathbb{T}^1$. For example if we define $\mathbb{T}^1$ as the quotient
group $\mathbb{R}/2\pi\mathbb{Z}$ equipped with the usual addition of numbers
modulo $2\pi$, then the map
\begin{equation}
\Phi(x) = \frac{1}{\sqrt\beta}\tan\frac{x}{2}
\label{eq:9}
\end{equation}
is an isomorphism of $\mathbb{T}^1$ onto $\overline{\mathbb{R}}$.

For a function $f \colon \mathbb{R} \to \mathbb{C}$ we may define its
generalized derivative by
\begin{equation}
D_p f(p) = \lim_{\eta \to 0} \frac{f(p \oplus \eta) - f(p)}{\eta}.
\end{equation}
There holds
\begin{equation}
D_p f(p) = (1 + \beta p^2)\partial_p f(p).
\end{equation}
Indeed, introducing $L_p(\eta) = p \oplus \eta =
\frac{p + \eta}{1 - \beta p\eta}$ we calculate that
\begin{equation}
L'_p(\eta) = \frac{1 + \beta p^2}{(1 - \beta p\eta)^2}
\end{equation}
and
\begin{align}
D_p f(p) & = \lim_{\eta \to 0} \frac{f(p \oplus \eta) - f(p)}{\eta}
= \lim_{\eta \to 0} \frac{f(L_p(\eta)) - f(L_p(0))}{\eta}
= \frac{\dd}{\dd\eta} f(L_p(\eta)) \bigg|_{\eta = 0} = f'(L_p(0)) L'_p(0)
    \nonumber \\
& = (1 + \beta p^2)\partial_p f(p).
\end{align}

The following identities hold
\begin{equation}
\label{eq:10}
\begin{gathered}
D_p f(p \oplus \eta) = D_\eta f(p \oplus \eta) = (D_p f)(p \oplus \eta), \\
\lambda D_p f(p \oplus \lambda \circ \eta) =
    D_\eta f(p \oplus \lambda \circ \eta)
= \lambda (D_p f)(p \oplus \lambda \circ \eta).
\end{gathered}
\end{equation}
Indeed, introducing the notation $h_\lambda(\eta) = \lambda \circ \eta$ we
have that $h'_\lambda(\eta) =
\lambda \frac{1 + \beta h_\lambda^2(\eta)}{1 + \beta \eta^2}$ and
\begin{equation}
\begin{aligned}
D_\eta f(p \oplus \lambda \circ \eta) & = f'(p \oplus \lambda \circ \eta)
    D_\eta(p \oplus \lambda \circ \eta)
= f'(p \oplus \lambda \circ \eta)
    \frac{(1 + \beta p^2)(1 + \beta \eta^2)}{(1 - \beta p h_\lambda(\eta))^2}
    h'_\lambda(\eta) \\
& = \lambda f'(p \oplus \lambda \circ \eta)
    \frac{(1 + \beta p^2)(1 + \beta h_\lambda^2(\eta))}
    {(1 - \beta p h_\lambda(\eta))^2}
= \lambda D_p f(p \oplus \lambda \circ \eta), \\
(D_p f)(p \oplus \lambda \circ \eta) & =
    \bigl(1 + \beta(p \oplus \lambda \circ \eta)^2\bigr)
    f'(p \oplus \lambda \circ \eta)
= \frac{(1 + \beta p^2)(1 + \beta h_\lambda^2(\eta))}
    {(1 - \beta p h_\lambda(\eta))^2} f'(p \oplus \lambda \circ \eta) \\
& = D_p f(p \oplus \lambda \circ \eta).
\end{aligned}
\end{equation}

For $q,p \in \mathbb{R}$ we will use the following notation
\begin{equation}
(q,p) = q\frac{1}{\sqrt\beta}\arctan(\sqrt\beta p).
\end{equation}
We easily check that
\begin{equation}
\begin{aligned}
(q_1 + q_2,p) & = (q_1,p) + (q_2,p), \\
(q,p_1 \oplus p_2) & = (q,p_1) + (q,p_2), \\
\lambda (q,p) & = (\lambda q,p) = (q,\lambda \circ p).
\end{aligned}
\end{equation}
The operation $(q,p)$ can be treated as a generalized scalar product of $q$ and
$p$.

On $\mathbb{R}$ we can introduce a measure which will be invariant with respect
to generalized translations and which will reduce to the Lebesgue measure for
$\beta = 0$. The proper measure is equal
\begin{equation}
\dd{\mu(p)} = \frac{\dd p}{1 + \beta p^2}.
\label{eq:30}
\end{equation}
One easily checks that indeed for $f \in L^1(\mathbb{R})$ and
$\eta \in \mathbb{R}$
\begin{equation}
\int_\mathbb{R} f(p \oplus \eta) \ud{\mu(p)} = \int_\mathbb{R} f(p) \ud{\mu(p)}.
\end{equation}

On the phase space $\mathbb{R}^2$ we introduce the following measure
\begin{equation}
\dd{l(x)} = \frac{1}{2\pi\hbar} \ud{q}\ud{\mu(p)},
\end{equation}
where $x = (q,p)$. Let us denote by $\mathcal{L}$ the space of square integrable
functions $f \in L^2(\mathbb{R}^2,\dd{l})$ such that for almost every
$p \in \mathbb{R}$ the functions $q \mapsto f(q,p)$ extend to entire functions
on $\mathbb{C}$ of exponential type $\leq \tfrac{\pi}{2\sqrt\beta}$, i.e.
$f \in \mathcal{L}$ if and only if $f \in L^2(\mathbb{R}^2,\dd{l})$ and for
almost every $p \in \mathbb{R}$ the function $q \mapsto f(q,p)$ extend to entire
function on $\mathbb{C}$ satisfying for some constant $C(p)$ the following
condition
\begin{equation}
\abs{f(z,p)} \leq C(p) e^{\frac{\pi}{2\sqrt\beta}\abs{z}}, \quad
z \in \mathbb{C}.
\end{equation}
The space $\mathcal{L}$ is a Hilbert subspace of $L^2(\mathbb{R}^2,\dd{l})$.

For $f \in \mathcal{L}$ we define its generalized Fourier transform in
position variable by
\begin{equation}
\tilde{f}(p',p) = \int_\mathbb{R} f(q,p) e^{-\frac{i}{\hbar}(q,p')} \ud{q}.
\end{equation}
If $f$ is not Lebesgue integrable, then the integral has to be considered as
the improper integral
\begin{equation}
\lim_{R \to \infty} \int_{-R}^R \dotsc \ud{q}.
\label{eq:40}
\end{equation}
This transform is a Hilbert space isomorphism of $\mathcal{L}$ onto
$L^2(\mathbb{R}^2,\dd{\mu} \times \dd{\mu})$ and its inverse is given by
\begin{equation}
f(q,p) = \frac{1}{2\pi\hbar} \int_\mathbb{R} \tilde{f}(p',p)
    e^{\frac{i}{\hbar}(q,p')} \ud{\mu(p')}.
\end{equation}
Indeed, if we denote by $\check{f}$ the usual Fourier transform of $f$ in
position variable, then
\begin{align}
& \frac{1}{2\pi\hbar} \int_\mathbb{R} \tilde{f}(p',p)
    e^{\frac{i}{\hbar}(q,p')} \ud{\mu(p')} =
    \frac{1}{2\pi\hbar} \int_\mathbb{R}
    \check{f}\left(\frac{1}{\sqrt\beta}\arctan(\sqrt\beta p'),p\right)
    e^{\frac{i}{\hbar}(q,p')} \ud{\mu(p')} \nonumber \\
& \quad = \frac{1}{2\pi\hbar} \int_\mathbb{R}
    \check{f}\left(\frac{1}{\sqrt\beta}\arctan(\sqrt\beta p'),p\right)
    \exp\left(\frac{i}{\hbar}q\frac{1}{\sqrt\beta}\arctan(\sqrt\beta p')\right)
    \frac{1}{1 + \beta p'^2} \ud{p'} \nonumber \\
& \quad = \frac{1}{2\pi\hbar}
    \int_{-\frac{\pi}{2\sqrt\beta}}^{+\frac{\pi}{2\sqrt\beta}}
    \check{f}(\bar{p},p)
    e^{\frac{i}{\hbar}q\bar{p}} \ud{\bar{p}}.
\end{align}
By Paley-Wiener theorem $\bar{p} \mapsto \check{f}(\bar{p},p)$ is square
integrable with respect to the Lebesgue measure $\dd{\bar{p}}$ and its support
lies in $\bigl[-\tfrac{\pi}{2\sqrt\beta},\tfrac{\pi}{2\sqrt\beta}\bigr]$. Thus
we can extend the integration with respect to $\bar{p}$ from $-\infty$ to
$\infty$, which yields the inverse Fourier transform in position variable equal
to $f(q,p)$ almost everywhere.

The generalized symplectic Fourier transform of a function
$f \in \mathcal{L}$ will be denoted by $\mathcal{F}_\beta f$ and
defined by the formula
\begin{equation}
\mathcal{F}_\beta f(q',p') = \frac{1}{2\pi\hbar} \int_{\mathbb{R}^2} f(q,p)
    e^{-\frac{i}{\hbar}(q,p')} e^{\frac{i}{\hbar}(q',p)} \ud{q}\ud{\mu(p)}.
\end{equation}
It is its own inverse: $\mathcal{F}_\beta^{-1} = \mathcal{F}_\beta$.
The generalized symplectic Fourier transform has the following properties
\begin{subequations}
\label{eq:5}
\begin{gather}
D_p \mathcal{F}_\beta f = -\frac{i}{\hbar} \mathcal{F}_\beta (qf), \quad
\partial_q \mathcal{F}_\beta f = \frac{i}{\hbar}
    \mathcal{F}_\beta \left(\frac{1}{\sqrt\beta}\arctan(\sqrt\beta p)f\right),
    \label{eq:5a} \\
q \mathcal{F}_\beta f = i\hbar \mathcal{F}_\beta (D_p f), \quad
\frac{1}{\sqrt\beta}\arctan(\sqrt\beta p) \mathcal{F}_\beta f =
    -i\hbar \mathcal{F}_\beta (\partial_q f), \label{eq:5b} \\
\mathcal{F}_\beta (f \cdot g) =
    \frac{1}{2\pi\hbar} \mathcal{F}_\beta f \circledast \mathcal{F}_\beta g,
    \label{eq:5c}
\end{gather}
\end{subequations}
where
\begin{equation}
(f \circledast g)(q,p) = \int_{\mathbb{R}^2} f(q',p')
    g(q - q',p \ominus p') \ud{q'}\ud{\mu(p')}
= \int_{\mathbb{R}^2} f(q - q',p \ominus p') g(q',p') \ud{q'}\ud{\mu(p')}
\label{eq:6}
\end{equation}
is a generalized convolution.

\section{Properties of the $\star$-product}
\label{sec:4}
Based on the formal star-product \eqref{eq:4} we can get an integral formula for
the star-product, which will describe a well defined non-formal star-product on
a certain space of functions. In the rest of the paper we will develop a
non-formal deformation quantization theory based on this integral formula for
the star-product. The equation \eqref{eq:4} will represent a formal expansion of
this star-product in $\hbar$.

\begin{theorem}
The star-product \eqref{eq:4} can be written in the following integral forms
\begin{subequations}
\label{eq:7}
\begin{align}
(f \star g)(q,p) & = \frac{1}{(2\pi\hbar)^2} \int_{\mathbb{R}^4}
    \mathcal{F}_\beta f(q',p') \mathcal{F}_\beta g(q'',p'')
    e^{\frac{i}{\hbar}(q - \lambda q',p'')}
    e^{-\frac{i}{\hbar}(q'',p \ominus (1 - \lambda) \circ p')}
    e^{\frac{i}{\hbar}(q,p')} e^{-\frac{i}{\hbar}(q',p)} \nonumber \\
& \quad \times \ud{q'}\ud{q''}\ud{\mu(p')}\ud{\mu(p'')} \label{eq:7a} \\
& = \frac{1}{(2\pi\hbar)^2} \int_{\mathbb{R}^2}
    \tilde{f}\bigl(p',p \oplus \lambda \circ p''\bigr)
    \tilde{g}\bigl(p'',p \ominus (1 - \lambda) \circ p'\bigr)
    e^{\frac{i}{\hbar}(q,p' \oplus p'')} \ud{\mu(p')}\ud{\mu(p'')}
    \label{eq:7b} \\
& = \frac{1}{(2\pi\hbar)^2} \int_{\mathbb{R}^4}
    f\bigl(q + q',p \oplus \lambda \circ p''\bigr)
    g\bigl(q + q'',p \ominus (1 - \lambda) \circ p'\bigr)
    e^{-\frac{i}{\hbar}(q'',p'')} e^{-\frac{i}{\hbar}(q',p')} \nonumber \\
& \quad \times \ud{q'}\ud{q''}\ud{\mu(p')}\ud{\mu(p'')}. \label{eq:7c}
\end{align}
\end{subequations}
\end{theorem}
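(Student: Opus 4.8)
The plan is to verify \eqref{eq:7a} starting from the formal series \eqref{eq:4} and then to deduce \eqref{eq:7b} and \eqref{eq:7c} from it by successively unfolding the generalized Fourier transforms. Since $\mathcal{F}_\beta$ is an involution, for sufficiently regular $f$ one has the inversion formula
\begin{equation*}
f(q,p) = \frac{1}{2\pi\hbar}\int_{\mathbb{R}^2}\mathcal{F}_\beta f(q',p')\,
  e^{-\frac{i}{\hbar}(q',p)}\,e^{\frac{i}{\hbar}(q,p')}\ud{q'}\ud{\mu(p')},
\end{equation*}
and likewise for $g$ in a second pair of variables $(q'',p'')$. Differentiating under the integral sign and using $\partial_q e^{\frac{i}{\hbar}(q,p')}=\tfrac{i}{\hbar}\tfrac{1}{\sqrt\beta}\arctan(\sqrt\beta p')\,e^{\frac{i}{\hbar}(q,p')}$ together with $D_p e^{-\frac{i}{\hbar}(q',p)}=-\tfrac{i}{\hbar}q'\,e^{-\frac{i}{\hbar}(q',p)}$ (equivalently, by \eqref{eq:5b}), the operator $\partial_q^{l}D_p^{k-l}$ acting on $f$ amounts, inside the integral, to multiplying $\mathcal{F}_\beta f(q',p')$ by $\bigl(\tfrac{i}{\hbar}\tfrac{1}{\sqrt\beta}\arctan(\sqrt\beta p')\bigr)^{l}\bigl(-\tfrac{i}{\hbar}q'\bigr)^{k-l}$, and $\partial_q^{k-l}D_p^{l}$ acting on $g$ multiplies $\mathcal{F}_\beta g(q'',p'')$ by $\bigl(\tfrac{i}{\hbar}\tfrac{1}{\sqrt\beta}\arctan(\sqrt\beta p'')\bigr)^{k-l}\bigl(-\tfrac{i}{\hbar}q''\bigr)^{l}$.

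Next I would substitute both expansions into \eqref{eq:4}, interchange the sum with the $(q',q'',p',p'')$-integral, and collect the $k,l$-dependent part of the integrand. After cancelling the powers of $i/\hbar$ and using $\tfrac{1}{k!}\binom{k}{l}=\tfrac{1}{l!\,(k-l)!}$, the coefficient $\tfrac{1}{k!}(i\hbar)^{k}\binom{k}{l}(1-\lambda)^{l}(-\lambda)^{k-l}$ combines with the monomials into $\sum_{k=0}^{\infty}\sum_{l=0}^{k}\tfrac{X^{l}Y^{k-l}}{l!\,(k-l)!}=e^{X}e^{Y}=e^{X+Y}$, where $X=\tfrac{i}{\hbar}(1-\lambda)(q'',p')$ and $Y=-\tfrac{i}{\hbar}\lambda(q',p'')$ after using $q''\tfrac{1}{\sqrt\beta}\arctan(\sqrt\beta p')=(q'',p')$ and the analogous identity for $(q',p'')$. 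Finally, using $\lambda(q,p)=(q,\lambda\circ p)$, $(q,p_1\oplus p_2)=(q,p_1)+(q,p_2)$, and $(q,\ominus p)=-(q,p)$, one checks the phase identity
\begin{equation*}
e^{X+Y}\,e^{\frac{i}{\hbar}(q,p'')}\,e^{-\frac{i}{\hbar}(q'',p)}
 = e^{\frac{i}{\hbar}(q-\lambda q',p'')}\,
   e^{-\frac{i}{\hbar}(q'',\,p\ominus(1-\lambda)\circ p')};
\end{equation*}
together with the phases $e^{\frac{i}{\hbar}(q,p')}e^{-\frac{i}{\hbar}(q',p)}$ left over from the inversion of $f$, the integrand is exactly that of \eqref{eq:7a}.

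To pass to \eqref{eq:7b} I would insert $\mathcal{F}_\beta f(q',p')=\tfrac{1}{2\pi\hbar}\int_{\mathbb{R}}\tilde f(p',r)\,e^{\frac{i}{\hbar}(q',r)}\ud{\mu(r)}$ and the analogous formula for $\mathcal{F}_\beta g$ into \eqref{eq:7a}. The variable $q'$ then occurs only in $e^{\frac{i}{\hbar}(q',r)}e^{-\frac{i}{\hbar}(q',\lambda\circ p'')}e^{-\frac{i}{\hbar}(q',p)}$, whose integral over $\mathbb{R}$ equals $2\pi\hbar$ times a Dirac mass in $\tfrac{1}{\sqrt\beta}\arctan(\sqrt\beta r)$; since $r\mapsto\tfrac{1}{\sqrt\beta}\arctan(\sqrt\beta r)$ is an increasing diffeomorphism of $\mathbb{R}$ onto $\bigl(-\tfrac{\pi}{2\sqrt\beta},\tfrac{\pi}{2\sqrt\beta}\bigr)$ and, by $(q,p_1\oplus p_2)=(q,p_1)+(q,p_2)$, its unique zero sits at $r=p\oplus\lambda\circ p''$, the weight $1/(1+\beta r^{2})$ in $\ud{\mu(r)}$ exactly cancels the Jacobian and the $r$-integral collapses to $\tilde f(p',p\oplus\lambda\circ p'')$. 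Treating the $q''$-integral the same way gives $\tilde g(p'',p\ominus(1-\lambda)\circ p')$, while $e^{\frac{i}{\hbar}(q,p')}e^{\frac{i}{\hbar}(q,p'')}=e^{\frac{i}{\hbar}(q,p'\oplus p'')}$ survives untouched; tracking the powers of $2\pi\hbar$ reproduces \eqref{eq:7b}. Then \eqref{eq:7c} follows by writing $\tilde f(p',p\oplus\lambda\circ p'')=\int_{\mathbb{R}}f(q_1,p\oplus\lambda\circ p'')\,e^{-\frac{i}{\hbar}(q_1,p')}\ud{q_1}$ (and similarly for $\tilde g$), combining $e^{-\frac{i}{\hbar}(q_1,p')}e^{\frac{i}{\hbar}(q,p')}=e^{-\frac{i}{\hbar}(q_1-q,p')}$, and changing variables $q_1=q+q'$, $q_2=q+q''$.

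The step I expect to be the genuine obstacle is the non-formal one: justifying the term-by-term interchange of summation and integration above, and giving rigorous meaning both to the improper integrals \eqref{eq:40} and to the Dirac masses. On the space $\mathcal{L}$ this is controlled by the Paley--Wiener theorem: the hypothesis that $q\mapsto f(q,p)$ extends to an entire function of exponential type $\le\tfrac{\pi}{2\sqrt\beta}$ forces $\check f(\sdot,p)$, hence (after the change of variable) $\tilde f(\sdot,p)$, to be supported in the range of $p\mapsto\tfrac{1}{\sqrt\beta}\arctan(\sqrt\beta p)$, so the $q$-, $q'$- and $q''$-integrations converge as the improper integrals in \eqref{eq:40}, the Dirac masses act legitimately, and the exponential type of $f$ and $g$ bounds the partial sums of \eqref{eq:4} well enough to apply dominated convergence. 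For functions outside $\mathcal{L}$ the identities \eqref{eq:7} are to be read in the corresponding improper/formal sense, with the computation above serving as their derivation.
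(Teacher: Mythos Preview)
Your argument is correct and, at the level of the underlying computation, equivalent to the paper's: both turn the derivatives $\partial_q$, $D_p$ in \eqref{eq:4} into multiplication by $\frac{1}{\sqrt\beta}\arctan(\sqrt\beta p')$ and $q'$ via \eqref{eq:5b}, and then sum the binomial/exponential series. The organization differs, however. The paper applies $\mathcal{F}_\beta$ to the whole product $f\star g$ and, using \eqref{eq:5b} together with the convolution property \eqref{eq:5c}, obtains first the twisted-convolution identity
\[
\mathcal{F}_\beta(f\star g)(q'',p'')=\frac{1}{2\pi\hbar}\int_{\mathbb{R}^2}\mathcal{F}_\beta f(q',p')\,\mathcal{F}_\beta g(q''-q',p''\ominus p')\,e^{\frac{i}{\hbar}(1-\lambda)(q''-q',p')}e^{-\frac{i}{\hbar}\lambda(q',p''\ominus p')}\ud{q'}\ud{\mu(p')},
\]
and only then inverts $\mathcal{F}_\beta$ (with a change of variables) to reach \eqref{eq:7a}. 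You instead expand $f$ and $g$ separately by the inversion formula and work directly in the $(q,p)$ variables. Your route is a touch more elementary (no appeal to the convolution theorem), while the paper's route has the advantage of producing the twisted convolution $\diamond$ of \eqref{eq:8}/\eqref{eq:20} as a byproduct---a formula that is reused throughout Sections~\ref{sec:4}--\ref{sec:6} (associativity, the involution identity, the trace formula). The passages \eqref{eq:7a}$\to$\eqref{eq:7b}$\to$\eqref{eq:7c} are handled the same way in both proofs; your explicit discussion of the Dirac-mass mechanism and the Jacobian cancellation against $\ud\mu$ simply spells out what the paper compresses into one line. Your closing remarks on Paley--Wiener and the improper integrals \eqref{eq:40} go beyond what the paper's proof says (it treats the derivation formally) and are a reasonable gloss on the analytical issues.
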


\begin{proof}
Taking the generalized symplectic Fourier transform of \eqref{eq:4} and using
\eqref{eq:5b}, \eqref{eq:5c}, \eqref{eq:6} we find that
\begin{align}
& \mathcal{F}_\beta(f \star g)(q'',p'') = \frac{1}{2\pi\hbar} \sum_{k=0}^\infty
    \frac{1}{k!} (i\hbar)^k \sum_{l=0}^k \binom{k}{l} (1 - \lambda)^l
    (-\lambda)^{k-l} \int_{\mathbb{R}^2}
    \left(\frac{i}{\hbar} \frac{1}{\sqrt\beta}\arctan(\sqrt\beta p')\right)^l
    \left(-\frac{i}{\hbar} q'\right)^{k-l} \nonumber \\
& \qquad \times \mathcal{F}_\beta f(q',p') \left(\frac{i}{\hbar}
    \frac{1}{\sqrt\beta}\arctan\bigl(\sqrt\beta
    (p'' \ominus p')\bigr)\right)^{k-l}
    \left(-\frac{i}{\hbar} (q'' - q')\right)^l
    \mathcal{F}_\beta g(q'' - q',p'' \ominus p')
    \ud{q'}\ud{\mu(p')} \nonumber \\
& \quad = \frac{1}{2\pi\hbar} \int_{\mathbb{R}^2} \sum_{k=0}^\infty \frac{1}{k!}
    \left(\frac{i}{\hbar}\right)^k \sum_{l=0}^k \binom{k}{l}
    \bigl((1 - \lambda)(q'' - q',p')\bigr)^l
    \bigl(-\lambda(q',p'' \ominus p')\bigr)^{k-l}
    \mathcal{F}_\beta f(q',p') \nonumber \\
& \qquad \times \mathcal{F}_\beta g(q'' - q',p'' \ominus p')
    \ud{q'}\ud{\mu(p')} \nonumber \\
& \quad = \frac{1}{2\pi\hbar} \int_{\mathbb{R}^2} \mathcal{F}_\beta f(q',p')
    \mathcal{F}_\beta g(q'' - q',p'' \ominus p')
    e^{\frac{i}{\hbar}(1 - \lambda)(q'' - q',p')}
    e^{-\frac{i}{\hbar}\lambda(q',p'' \ominus p')}
    \ud{q'}\ud{\mu(p')} \label{eq:8}
\end{align}
Taking the generalized symplectic Fourier transform of \eqref{eq:8} and
performing the following change of variables under the integral sign
\begin{equation}
\begin{aligned}
\tilde{p}'' & = p'' \ominus p', & \tilde{q}'' & = q'' - q', \\
\tilde{p}' & = p', & \tilde{q}' & = q',
\end{aligned}
\end{equation}
gives \eqref{eq:7a}. By performing integration with respect to $q''$ and $q'$
in \eqref{eq:7a} we get \eqref{eq:7b}, and by expanding Fourier transforms in
\eqref{eq:7b} we receive \eqref{eq:7c}.
\end{proof}

Note, that by defining a twisted convolution $\diamond$ of functions $f$ and $g$
by the relation
\begin{equation}
\mathcal{F}_\beta(f \star g) = \frac{1}{2\pi\hbar}
    \mathcal{F}_\beta f \diamond \mathcal{F}_\beta g
\label{eq:15}
\end{equation}
we get by virtue of \eqref{eq:8} that
\begin{equation}
(f \diamond g)(q,p) = \int_{\mathbb{R}^2} f(q',p') g(q - q',p \ominus p')
    e^{\frac{i}{\hbar}(1 - \lambda)(q - q',p')}
    e^{-\frac{i}{\hbar}\lambda(q',p \ominus p')}
    \ud{q'}\ud{\mu(p')}.
\label{eq:20}
\end{equation}
Moreover, we can also define a twisted convolution with respect to position
variables by the formula
\begin{equation}
\tilde{f} \odot \tilde{g} = (f \star g)^\sim.
\end{equation}
Then, we find from \eqref{eq:7b} that
\begin{equation}
(\tilde{f} \odot \tilde{g})(p',p) = \frac{1}{2\pi\hbar} \int_\mathbb{R}
    \tilde{f}\bigl(p'',p \oplus \lambda \circ (p' \ominus p'')\bigr)
    \tilde{g}\bigl(p' \ominus p'',p \ominus (1 - \lambda) \circ p''\bigr)
    \ud{\mu(p'')}.
\label{eq:11}
\end{equation}

Let us denote by $\widetilde{\mathcal{F}}(\mathbb{R}^2)$ the space of
complex-valued functions on $\mathbb{R}^2$ which extend to smooth functions on
$\overline{\mathbb{R}}^2$. The space $\widetilde{\mathcal{F}}(\mathbb{R}^2)$ is
a Fr\'echet space with topology given by semi-norms
\begin{equation}
\norm{f}_{n,m} = \sup_{(p',p) \in \mathbb{R}^2}\abs{D_{p'}^n D_p^m f(p',p)},
\quad n,m \in \mathbb{N}.
\end{equation}
Indeed, if $\Phi$ is the isomorphism \eqref{eq:9} of the circle $\mathbb{T}^1$
onto $\overline{\mathbb{R}}$ and if we associate with every
$f \in \widetilde{\mathcal{F}}(\mathbb{R}^2)$ a function
$F \in C^\infty(\mathbb{T}^2)$ given by $F(p',p) = f(\Phi(p'),\Phi(p))$, then we
can see that the semi-norms $\norm{\sdot}_{n,m}$ can be written in a form
\begin{equation}
\norm{f}_{n,m} = \bigl(2\sqrt\beta\bigr)^{n+m} \sup_{(p',p) \in \mathbb{T}^2}
    \abs{\partial_{p'}^n \partial_p^m F(p',p)}.
\end{equation}
Thus, the map $f \mapsto F$ is a topological isomorphism of the space
$\widetilde{\mathcal{F}}(\mathbb{R}^2)$ onto the space of smooth functions on
the torus $\mathbb{T}^2 = \mathbb{T}^1 \times \mathbb{T}^1$ and endowed with
a standard Fr\'echet topology of uniform convergence together with all
derivatives.

We will denote by $\mathcal{F}(\mathbb{R}^2)$ the image of the space
$\widetilde{\mathcal{F}}(\mathbb{R}^2)$ with respect to the inverse generalized
Fourier transform in position variable. The space $\mathcal{F}(\mathbb{R}^2)$
carries the Fr\'echet topology induced from
$\widetilde{\mathcal{F}}(\mathbb{R}^2)$.

\begin{theorem}
If $f,g \in \mathcal{F}(\mathbb{R}^2)$, then
$f \star g \in \mathcal{F}(\mathbb{R}^2)$ and $(f,g) \mapsto f \star g$ is a
continuous bilinear operation on $\mathcal{F}(\mathbb{R}^2)$.
\end{theorem}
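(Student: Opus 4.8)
The plan is to pass to the Fourier side. By definition $f\in\mathcal{F}(\mathbb{R}^2)$ if and only if $\tilde f\in\widetilde{\mathcal{F}}(\mathbb{R}^2)$, the generalized Fourier transform in position being a topological isomorphism of $\mathcal{F}(\mathbb{R}^2)$ onto $\widetilde{\mathcal{F}}(\mathbb{R}^2)$; and by the previous theorem together with the definition of $\odot$ we have $(f\star g)^\sim=\tilde f\odot\tilde g$ with the explicit formula \eqref{eq:11}. So it suffices to prove that $\odot$ is a well-defined, continuous bilinear map $\widetilde{\mathcal{F}}(\mathbb{R}^2)\times\widetilde{\mathcal{F}}(\mathbb{R}^2)\to\widetilde{\mathcal{F}}(\mathbb{R}^2)$. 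Well-definedness (absolute convergence of the integral in \eqref{eq:11}) is immediate, since elements of $\widetilde{\mathcal{F}}(\mathbb{R}^2)$ are bounded and $\dd\mu$ is a finite measure.

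Next I would transfer everything to the torus. Via the isomorphism $\Phi$ of \eqref{eq:9} the assignment $f\mapsto F$, $F(\alpha,\theta)=f(\Phi(\alpha),\Phi(\theta))$, identifies $\widetilde{\mathcal{F}}(\mathbb{R}^2)$, equipped with the seminorms $\norm{\cdot}_{n,m}$, with $C^\infty(\mathbb{T}^2)$ with its standard seminorms, as already recorded in the text; under this identification $\oplus$ and $\ominus$ become addition and subtraction on $\mathbb{T}^1=\mathbb{R}/2\pi\mathbb{Z}$, $\lambda\circ x$ becomes multiplication of the principal representative in $(-\pi,\pi)$ by $\lambda$, and $\dd\mu$ becomes a constant multiple of Haar measure. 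Formula \eqref{eq:11} then turns into an integral operator of the shape $H(\alpha,\theta)=c\int_{-\pi}^{\pi}F(\,[\alpha-\psi]\,,\,\theta+\lambda\psi\,)\,G(\,\psi\,,\,\theta-(1-\lambda)[\alpha-\psi]\,)\,\dd\psi$, with $[\,\cdot\,]$ the principal representative and $c$ an explicit constant, and the task is to show that $(F,G)\mapsto H$ maps $C^\infty(\mathbb{T}^2)^2$ continuously into $C^\infty(\mathbb{T}^2)$. (Equivalently one may work directly with the operators $D_{p'}^nD_p^m$ and the seminorms $\norm{\cdot}_{n,m}$; the torus picture only makes the bookkeeping transparent.)

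The main obstacle is that $\lambda\circ$ does not extend continuously to $\overline{\mathbb{R}}$: in the transferred integral the principal representative $[\alpha-\psi]$ jumps by $2\pi$ as $\psi$ crosses the smoothly moving point $\psi_0(\alpha)=\alpha\mp\pi$. Where $[\alpha-\psi]$ enters a slot directly this is harmless, because $F$ and $G$ are $2\pi$-periodic in each argument; but it also enters one slot multiplied by $\lambda$ (resp.\ $1-\lambda$), and a shift by $2\pi\lambda$ (resp.\ $2\pi(1-\lambda)$) is not a period, so for $\lambda\in(0,1)$ the integrand has a genuine jump discontinuity in $\psi$ at $\psi_0(\alpha)$. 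The remedy is that only the integral, not the integrand, must be smooth: one splits $\int_{-\pi}^{\pi}$ at $\psi_0(\alpha)$ into two integrals, on each of which the integrand is the restriction of a function smooth in all three variables (smooth in $\psi\in\mathbb{R}$, periodic in the parameters). Differentiating each piece by the Leibniz rule for a moving endpoint together with differentiation under the integral sign, and using that $\psi_0$ is affine so that all its derivatives are trivial, every derivative of $H$ is again an integral of a smooth function plus endpoint terms that are smooth functions of $(\alpha,\theta)$ — namely values of derivatives of $F$ and $G$ at explicit points. Hence $H$ is $C^\infty$ on the coordinate patch; covering $\mathbb{T}^2$ by finitely many such patches, related by group translations under which all the data are equivariant, gives $H\in C^\infty(\mathbb{T}^2)$, i.e.\ $f\star g\in\mathcal{F}(\mathbb{R}^2)$.

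Finally, continuity follows from keeping track of the estimates in the above differentiation. Each application of $\partial_\alpha$ or $\partial_\theta$ merely raises the order of a derivative of $F$ or $G$, the chain-rule coefficients produced by $[\alpha-\psi]$, $\theta+\lambda\psi$ and $\theta-(1-\lambda)[\alpha-\psi]$ being constants of modulus $\leq 1$, and the Leibniz endpoint terms are of the same type; integrating over the bounded interval $\psi\in(-\pi,\pi)$ yields a bound $\norm{f\star g}_{n,m}\leq C_{n,m}\sum\norm{f}_{n',m'}\norm{g}_{n'',m''}$, a finite sum over $n'+m'+n''+m''\leq n+m$ of products of the defining seminorms. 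This is precisely the continuity of the bilinear map $\star$ on the Fréchet space $\mathcal{F}(\mathbb{R}^2)$. Pulling the whole argument back through the isomorphisms $f\leftrightarrow\tilde f\leftrightarrow F$ completes the proof.
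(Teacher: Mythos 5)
Your proof is correct, and at the decisive step it does not follow the paper's route. Both arguments reduce the theorem to showing that $\odot$ maps $\widetilde{\mathcal{F}}(\mathbb{R}^2)\times\widetilde{\mathcal{F}}(\mathbb{R}^2)$ continuously into $\widetilde{\mathcal{F}}(\mathbb{R}^2)$, but the paper then differentiates \eqref{eq:11} under the integral sign using \eqref{eq:10} to obtain the Leibniz-type identities \eqref{eq:12}, and concludes by induction on these together with the elementary sup-bound. You instead pass to the torus, observe that $\lambda\circ{}$ does not extend continuously to $\overline{\mathbb{R}}$, split the integral at the moving discontinuity, and retain the Leibniz endpoint terms. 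This extra care is not redundant: the jump you isolate genuinely obstructs naive differentiation under the integral sign in the $p'$ variable, and the first identity of \eqref{eq:12} fails without endpoint corrections. Indeed, take $\lambda = \tfrac{1}{2}$, $\tilde{g} \equiv 1$ and $\tilde{f}(p',p) = e^{2i\arctan(\sqrt\beta p)}$, both elements of $\widetilde{\mathcal{F}}(\mathbb{R}^2)$; the substitution $s = p' \ominus p''$ (which preserves $\dd\mu$) turns \eqref{eq:11} into $\frac{1}{2\pi\hbar}\int_{\mathbb{R}} \tilde{f}\bigl(p \oplus \tfrac{1}{2} \circ s\bigr)\ud{\mu(s)}$, so $\tilde{f} \odot \tilde{g}$ is independent of $p'$ and $D_{p'}(\tilde{f} \odot \tilde{g}) = 0$, whereas $\lambda D_p\tilde{f} \odot \tilde{g} + \tilde{f} \odot D_{p'}\tilde{g} = i\sqrt\beta\,(\tilde{f} \odot 1) \neq 0$. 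The discrepancy is exactly the boundary term produced at the point where $p' \ominus p''$ passes through $\infty$, which is what your splitting keeps track of; since those endpoint contributions are values of derivatives of $\tilde f$ and $\tilde g$ at explicit points, they are dominated by the same seminorms, and your final estimate $\norm{f \star g}_{n,m} \leq C_{n,m}\sum\norm{f}_{n',m'}\norm{g}_{n'',m''}$ still yields joint continuity. In short, the paper's argument is shorter but its key differentiation formula needs the corrections your decomposition supplies; your version is the one that actually closes the proof.
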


\begin{proof}
Using \eqref{eq:10} and \eqref{eq:11} we find that
\begin{equation}
\label{eq:12}
\begin{aligned}
D_{p'}(\tilde{f} \odot \tilde{g}) & = \lambda D_p \tilde{f} \odot \tilde{g}
    + \tilde{f} \odot D_{p'} \tilde{g}
= D_{p'} \tilde{f} \odot \tilde{g}
    - (1 - \lambda) \tilde{f} \odot D_p \tilde{g}, \\
D_p (\tilde{f} \odot \tilde{g}) & = D_p \tilde{f} \odot \tilde{g}
    + \tilde{f} \odot D_p \tilde{g}.
\end{aligned}
\end{equation}
Hence, by induction on these formulas if
$\tilde{f},\tilde{g} \in \widetilde{\mathcal{F}}(\mathbb{R}^2)$ then
also $\tilde{f} \odot \tilde{g} \in \widetilde{\mathcal{F}}(\mathbb{R}^2)$.
Thus $\mathcal{F}(\mathbb{R}^2)$ is closed with respect to the
\nbr{\star}product.

From \eqref{eq:12} we get that
\begin{equation}
D_{p'}^n D_p^m (\tilde{f} \odot \tilde{g}) = \sum_{k=0}^n \sum_{l=0}^m
    \binom{n}{k} \binom{m}{l} \lambda^k D_p^{k+l} \tilde{f} \odot D_{p'}^{n-k}
    D_p^{m-l} \tilde{g}.
\end{equation}
and from \eqref{eq:11}
\begin{equation}
\norm{\tilde{f} \odot \tilde{g}}_\infty \leq \frac{1}{2\hbar\sqrt\beta}
    \norm{\tilde{f}}_\infty \norm{\tilde{g}}_\infty.
\end{equation}
Therefore
\begin{equation}
\norm{\tilde{f} \odot \tilde{g}}_{n,m} \leq \frac{1}{2\hbar\sqrt\beta}
    \sum_{k=0}^n \sum_{l=0}^m
    \binom{n}{k} \binom{m}{l} \lambda^k
    \norm{\tilde{f}}_{0,k+l} \norm{\tilde{g}}_{n-k,m-l}
\end{equation}
which shows that $(\tilde{f},\tilde{g}) \mapsto \tilde{f} \odot \tilde{g}$ is
jointly continuous for the topology of $\widetilde{\mathcal{F}}(\mathbb{R}^2)$.
Thus $(f,g) \mapsto f \star g$ is jointly continuous on
$\mathcal{F}(\mathbb{R}^2) \times \mathcal{F}(\mathbb{R}^2)$.
\end{proof}

Note, that the generalized symplectic Fourier transform $\mathcal{F}_\beta$ is
a topological isomorphism of $\mathcal{F}(\mathbb{R}^2)$ onto
$\mathcal{F}(\mathbb{R}^2)$. Thus, by \eqref{eq:15} the space
$\mathcal{F}(\mathbb{R}^2)$ is also closed under the twisted convolution
$\diamond$.

\begin{theorem}
The \nbr{\star}product is associative, i.e.
\begin{equation}
f \star (g \star h) = (f \star g) \star h
\end{equation}
for $f,g,h \in \mathcal{F}(\mathbb{R}^2)$. 
\end{theorem}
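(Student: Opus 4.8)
The plan is to reduce associativity of the \nbr{\star}product to an identity for the twisted convolution with respect to position variables, and ultimately to a statement that is manifestly true because it involves only iterated generalized translations of the first argument. Since the generalized Fourier transform in position variable is a linear isomorphism intertwining $\star$ and $\odot$ (via $(f \star g)^\sim = \tilde{f} \odot \tilde{g}$), it suffices to prove that $\odot$ is associative on $\widetilde{\mathcal{F}}(\mathbb{R}^2)$, i.e.\ $\tilde{f} \odot (\tilde{g} \odot \tilde{h}) = (\tilde{f} \odot \tilde{g}) \odot \tilde{h}$. Both sides are, by \eqref{eq:11}, double integrals over an auxiliary momentum variable against $\dd{\mu}$, and the first step is simply to write each side out explicitly as a double integral over two auxiliary variables, say $p''$ and $p'''$, tracking carefully how the generalized translations compose.

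The second step is the bookkeeping: on the left one substitutes the formula for $\tilde{g} \odot \tilde{h}$ into the outer $\odot$, and on the right one substitutes the formula for $\tilde{f} \odot \tilde{g}$ into the outer $\odot$. Each of the three arguments $\tilde{f}$, $\tilde{g}$, $\tilde{h}$ then appears evaluated at a momentum of the form $p$ translated (via $\oplus$ with a suitable $\lambda$- or $(1-\lambda)$-scaled combination) by expressions built from $p'$, $p''$, $p'''$. The key algebraic facts needed to match the two sides are the associativity and commutativity of $\oplus$ (properties~(\ref{item:1a}) and~(\ref{item:1b})), the distributive and homogeneity laws for $\circ$ over $\oplus$ (properties~(\ref{item:2a}), (\ref{item:2b}), (\ref{item:2c})), and — crucially — the translation-invariance of the measure $\dd{\mu}$, namely $\int_\mathbb{R} h(p \oplus \eta)\ud{\mu(p)} = \int_\mathbb{R} h(p)\ud{\mu(p)}$, together with its reflection invariance $\int h(\ominus p)\ud{\mu(p)} = \int h(p)\ud{\mu(p)}$ (which follows since $\ominus p = -p$ and $\dd{\mu}$ is even). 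After an appropriate change of the inner integration variable using these invariances, the two iterated integrals become literally the same integral.

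An alternative, and in some ways cleaner, route is to pass through the isomorphism $\Phi$ of \eqref{eq:9} identifying $\overline{\mathbb{R}}$ with the circle group $\mathbb{T}^1$: under $F(p) = f(\Phi(p))$ and the change of variables $p \mapsto \Phi(p)$, the measure $\dd{\mu}$ becomes (a constant multiple of) Haar measure on $\mathbb{T}^1$, the operation $\oplus$ becomes ordinary addition mod $2\pi$, and $\lambda \circ$ becomes multiplication by $\lambda$. The twisted convolution \eqref{eq:11} then turns into a standard twisted group convolution on $\mathbb{T}^1$ (the $\lambda$-weighted convolution familiar from $\star$-products on the torus), whose associativity is classical; one may either cite this or reprove it by the Fubini/change-of-variable argument, which on $\mathbb{T}^1$ is transparent. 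I would present the direct computation in the original variables but remark that it is the circle-group picture that makes the result conceptually obvious.

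The main obstacle is purely clerical rather than conceptual: one must resist sign and ordering errors when expanding the nested translations, since each $\odot$ introduces both a $\lambda$-scaled shift in one slot and a $(1-\lambda)$-scaled shift in the other, and the associativity of $\oplus$ interacts nontrivially with the scalar multiplication $\circ$ (note that $\lambda \circ$ does \emph{not} distribute as a linear map over $+$, only over $\oplus$, so every intermediate momentum must be kept inside an $\oplus$-expression). The safe strategy is to introduce names for the momentum arguments of $\tilde{f}$, $\tilde{g}$, $\tilde{h}$ on each side, verify using (\ref{item:1a})--(\ref{item:1b}) and (\ref{item:2a})--(\ref{item:2c}) that the three arguments agree after the correct substitution of integration variables, and then invoke translation-invariance of $\dd{\mu}$ once to align the remaining integration. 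Joint continuity and the closedness of $\mathcal{F}(\mathbb{R}^2)$ under $\star$, already established, guarantee that all the integrals converge and that Fubini's theorem applies, so no further analytic justification is required.
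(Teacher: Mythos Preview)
Your proposal is correct but takes a different route from the paper. You reduce associativity of $\star$ to associativity of the position-variable twisted convolution $\odot$ from \eqref{eq:11}, whereas the paper reduces it to associativity of the \emph{symplectic} twisted convolution $\diamond$ from \eqref{eq:20} via the relation $\mathcal{F}_\beta(f\star g) = \tfrac{1}{2\pi\hbar}\,\mathcal{F}_\beta f \diamond \mathcal{F}_\beta g$. The paper's computation is then a direct four-dimensional integral manipulation: one writes out $(f\diamond(g\diamond h))(q,p)$, makes the single change of variables $q''\mapsto q''-q'$, $p''\mapsto p''\ominus p'$, and reads off $((f\diamond g)\diamond h)(q,p)$.

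The tradeoff is this: in the $\diamond$ formula the $\lambda$-dependence sits entirely in the exponential phase factors, and the arguments of $f$ and $g$ are just $(q',p')$ and $(q-q',p\ominus p')$, so the associativity check is essentially the classical twisted-convolution cocycle identity and needs only the additivity of $(\,\cdot\,,\,\cdot\,)$ and the translation invariance of $\dd{\mu}$. In your $\odot$ route the integrals are only one-dimensional, which is pleasant, but the $\lambda\circ$ and $(1-\lambda)\circ$ sit \emph{inside} the function arguments, so the bookkeeping you warn about is genuinely heavier: one must repeatedly invoke properties (\ref{item:2a})--(\ref{item:2c}) for $\circ$ as well as (\ref{item:1a})--(\ref{item:1b}) for $\oplus$, rather than just the latter. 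Your circle-group remark is apt --- under $\Phi$ the $\diamond$ of the paper becomes exactly a standard twisted group convolution on $\mathbb{R}\times\mathbb{T}^1$, which is perhaps the cleanest conceptual explanation for why either computation goes through.
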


\begin{proof}
We have that
\begin{align}
(f \diamond (g \diamond h))(q,p) & = \int_{\mathbb{R}^4} f(q',p') g(q'',p'')
    h(q - q' - q'',p \ominus p' \ominus p'')
    e^{\frac{i}{\hbar}(1 - \lambda)(q - q',p')}
    e^{-\frac{i}{\hbar}\lambda(q',p \ominus p')} \nonumber \\
& \quad \times e^{\frac{i}{\hbar}(1 - \lambda)(q - q' - q'',p'')}
    e^{-\frac{i}{\hbar}\lambda(q'',p \ominus p' \ominus p'')}
    \ud{q'}\ud{\mu(p')}\ud{q''}\ud{\mu(p'')} \nonumber \\
& = \int_{\mathbb{R}^4} f(q',p') g(q'' - q',p'' \ominus p')
    h(q - q'',p \ominus p'')
    e^{\frac{i}{\hbar}(1 - \lambda)(q'' - q',p')}
    e^{-\frac{i}{\hbar}\lambda(q',p'' \ominus p')} \nonumber \\
& \quad \times e^{\frac{i}{\hbar}(1 - \lambda)(q - q'',p'')}
    e^{-\frac{i}{\hbar}\lambda(q'',p \ominus p'')}
    \ud{q'}\ud{\mu(p')}\ud{q''}\ud{\mu(p'')} \nonumber \\
& = ((f \diamond g) \diamond h)(q,p).
\end{align}
Applying \eqref{eq:15} yields the associativity of $\star$.
\end{proof}

\begin{theorem}
For $f,g \in \mathcal{F}(\mathbb{R}^2)$
\begin{equation}
\int_{\mathbb{R}^2} (f \star g)(q,p) \ud{q}\ud{\mu(p)} =
    \int_{\mathbb{R}^2} (g \star f)(q,p) \ud{q}\ud{\mu(p)}.
\label{eq:13}
\end{equation}
In particular, for $\lambda = \tfrac{1}{2}$
\begin{equation}
\int_{\mathbb{R}^2} (f \star g)(q,p) \ud{q}\ud{\mu(p)} =
    \int_{\mathbb{R}^2} f(q,p)g(q,p) \ud{q}\ud{\mu(p)}.
\label{eq:14}
\end{equation}
\end{theorem}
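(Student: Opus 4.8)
The plan is to reduce the first identity \eqref{eq:13} to a trace-like property of the twisted convolution $\diamond$ and then specialize. Writing $\int_{\mathbb{R}^2}(f\star g)\ud{q}\ud{\mu(p)}$ in terms of the symplectic Fourier transform, note that integration against $\dd{q}\ud{\mu(p)}$ picks out, up to the normalization $2\pi\hbar$, the value of $\mathcal{F}_\beta(f\star g)$ at the origin $(q'',p'')=(0,0)$. Indeed $\mathcal{F}_\beta h(0,0)=\frac{1}{2\pi\hbar}\int_{\mathbb{R}^2}h(q,p)\ud{q}\ud{\mu(p)}$ directly from the definition of $\mathcal{F}_\beta$, since $e^{-\frac{i}{\hbar}(q,0)}=e^{\frac{i}{\hbar}(0,p)}=1$. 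So the claim \eqref{eq:13} is equivalent to $\mathcal{F}_\beta(f\star g)(0,0)=\mathcal{F}_\beta(g\star f)(0,0)$, which by \eqref{eq:15} is equivalent to $(\mathcal{F}_\beta f\diamond\mathcal{F}_\beta g)(0,0)=(\mathcal{F}_\beta g\diamond\mathcal{F}_\beta f)(0,0)$.

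First I would establish the elementary fact that for any $u,v\in\mathcal{F}(\mathbb{R}^2)$ one has $(u\diamond v)(0,0)=(v\diamond u)(0,0)$. From \eqref{eq:20},
\begin{equation*}
(u\diamond v)(0,0)=\int_{\mathbb{R}^2} u(q',p')\,v(-q',\ominus p')\,
  e^{\frac{i}{\hbar}(1-\lambda)(-q',p')}e^{-\frac{i}{\hbar}\lambda(q',\ominus p')}
  \ud{q'}\ud{\mu(p')}.
\end{equation*}
Now substitute $q'\mapsto -q'$, $p'\mapsto \ominus p'$; the measure $\dd{q'}\ud{\mu(p')}$ is invariant under $q'\mapsto -q'$ and under $p'\mapsto\ominus p'$ (the latter because $\dd{\mu}$ is invariant under the generalized translation $p\mapsto p\ominus p'$ in particular under inversion, using that $\ominus$ reverses sign of the ordinary part and $\dd\mu(p)=\dd p/(1+\beta p^2)$ is even). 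Under this substitution the integrand becomes $u(-q',\ominus p')\,v(q',p')\,e^{-\frac{i}{\hbar}(1-\lambda)(q',\ominus p')}e^{\frac{i}{\hbar}\lambda(-q',p')}$; using bilinearity of $(q,p)$ in its first slot and the identity $(q,\ominus p')=-(q,p')$ (from $(q,p_1\oplus p_2)=(q,p_1)+(q,p_2)$ with $p_2=-p_1$), this is exactly the integrand for $(v\diamond u)(0,0)$. This proves the elementary fact, and applying it with $u=\mathcal{F}_\beta f$, $v=\mathcal{F}_\beta g$ gives \eqref{eq:13}.

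For \eqref{eq:14}, I would start from the integral form \eqref{eq:7b} with $\lambda=\tfrac12$, integrate over $q$ and $p$, and carry out the $q$-integration first: $\frac{1}{2\pi\hbar}\int_{\mathbb{R}}e^{\frac{i}{\hbar}(q,p'\oplus p'')}\ud{q}$ produces, via the generalized Fourier inversion established in Section~\ref{sec:3}, a delta-type concentration forcing $p'\oplus p''=0$, i.e. $p''=-p'=\ominus p'$, together with the Jacobian $(1+\beta(p'\oplus p'')^2)$ needed to match $\dd{\mu(p'')}$ against $\dd{p''}$ near that point. After the $p''$-integration this leaves $\frac{1}{2\pi\hbar}\int_{\mathbb{R}}\int_{\mathbb{R}}\tilde f(p',p\oplus\tfrac12\circ(\ominus p'))\,\tilde g(\ominus p',p\ominus\tfrac12\circ p')\ud{\mu(p)}\ud{\mu(p')}$, and since $\tfrac12\circ(\ominus p')=\ominus(\tfrac12\circ p')$ both arguments collapse to $p\ominus\tfrac12\circ p'$; a further generalized-translation change of variable $p\mapsto p\oplus\tfrac12\circ p'$ (measure-preserving) decouples the $p'$ integral, which is then recognized as the Fourier inversion reconstructing $\int_{\mathbb{R}}f(q,p)g(q,p)\ud q$ from $\tilde f,\tilde g$, i.e. the Parseval/Plancherel identity $\frac{1}{2\pi\hbar}\int\tilde f(p',p)\tilde g(\ominus p',p)\ud{\mu(p')}=\int f(q,p)g(q,p)\ud q$. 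Integrating the remaining $p$ against $\dd\mu(p)$ yields \eqref{eq:14}.

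The main obstacle I anticipate is bookkeeping the Jacobian factors when the ordinary Lebesgue-measure Fourier inversion (valid on the $\bar p$-variable of bounded support, as in the Paley--Wiener argument already in Section~\ref{sec:3}) is transported through the substitution $\bar p=\tfrac{1}{\sqrt\beta}\arctan(\sqrt\beta p)$ into a statement about $\dd\mu$; one must be careful that the delta obtained from the $q$-integration is a delta in $p'\oplus p''$ with respect to $\dd\mu$, not in $p'+p''$ with respect to $\dd p$, and that the $\circ$- and $\oplus$-manipulations of the arguments of $\tilde f,\tilde g$ are done using precisely the identities (\ref{item:2a})--(\ref{item:2c}) and the bilinearity of $(q,p)$. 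Everything else is a routine change of variables justified by the invariance of $\dd q\,\dd\mu(p)$ under generalized translations and reflections, already recorded in Section~\ref{sec:3}.
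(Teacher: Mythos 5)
Your argument for \eqref{eq:13} is essentially the paper's: both reduce the integral to $2\pi\hbar\,\mathcal{F}_\beta(f\star g)(0)=(\mathcal{F}_\beta f\diamond\mathcal{F}_\beta g)(0)$ via \eqref{eq:15} and then apply the reflection $q'\mapsto -q'$, $p'\mapsto -p'$ to the integral \eqref{eq:20} evaluated at the origin (note only that the signs of the exponents in your displayed ``transformed integrand'' are flipped --- the correct transform of $e^{\frac{i}{\hbar}(1-\lambda)(-q',p')}e^{-\frac{i}{\hbar}\lambda(q',\ominus p')}$ is $e^{\frac{i}{\hbar}(1-\lambda)(q',\ominus p')}e^{-\frac{i}{\hbar}\lambda(-q',p')}=e^{-\frac{i}{\hbar}(1-2\lambda)(q',p')}$, which is indeed the exponent appearing in $(v\diamond u)(0,0)$; so the conclusion stands, but as written your intermediate expression is the complex conjugate of what it should be). For \eqref{eq:14} you genuinely diverge from the paper: the paper simply observes that at $\lambda=\tfrac12$ the phase $e^{-\frac{i}{\hbar}(1-2\lambda)(q',p')}$ disappears, so the expression is recognized as $(\mathcal{F}_\beta f\circledast\mathcal{F}_\beta g)(0)$, which by the convolution identity \eqref{eq:5c} equals $2\pi\hbar\,\mathcal{F}_\beta(f\cdot g)(0)=\int fg\ud{q}\ud{\mu(p)}$ --- a two-line argument. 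You instead integrate \eqref{eq:7b} over $q$ to produce a delta concentration at $p'\oplus p''=0$ and then invoke a Plancherel identity in the position variable. Your route is correct (the Jacobian bookkeeping works out exactly because $\dd{\mu}$ is the pushforward of Lebesgue measure under $\bar p\mapsto\tfrac{1}{\sqrt\beta}\tan(\sqrt\beta\bar p)$, and $\tfrac12\circ(\ominus p')=\ominus(\tfrac12\circ p')$ collapses the two second arguments), but it costs you the distributional justification of the improper $q$-integral, which the paper's use of the already-established identity \eqref{eq:5c} avoids entirely; what your route buys is that it makes visible exactly where $\lambda=\tfrac12$ enters, namely in the coincidence of the two shifted momentum arguments.
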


\begin{proof}
We have that
\begin{align}
\int_{\mathbb{R}^2} (f \star g)(q,p) \ud{q}\ud{\mu(p)} & =
    2\pi\hbar \mathcal{F}_\beta(f \star g)(0)
= (\mathcal{F}_\beta f \diamond \mathcal{F}_\beta g)(0) \nonumber \\
& = \int_{\mathbb{R}^2} \mathcal{F}_\beta f(q',p') \mathcal{F}_\beta g(-q',-p')
    e^{-\frac{i}{\hbar}(1 - 2\lambda)(q',p')} \ud{q'}\ud{\mu(p')}.
\label{eq:21}
\end{align}
Performing the following change of variables under the integral sign:
$q'' = -q'$, $p'' = -p'$ we get \eqref{eq:13}. If $\lambda = \tfrac{1}{2}$, then
\begin{align}
\int_{\mathbb{R}^2} (f \star g)(q,p) \ud{q}\ud{\mu(p)} & =
    \int_{\mathbb{R}^2} \mathcal{F}_\beta f(q',p') \mathcal{F}_\beta g(-q',-p')
    \ud{q'}\ud{\mu(p')}
= (\mathcal{F}_\beta f \circledast \mathcal{F}_\beta g)(0)
= 2\pi\hbar \mathcal{F}_\beta(f \cdot g)(0) \nonumber \\
& = \int_{\mathbb{R}^2} f(q,p)g(q,p) \ud{q}\ud{\mu(p)},
\end{align}
which proves \eqref{eq:14}.
\end{proof}

\begin{theorem}
For $f,g \in \mathcal{F}(\mathbb{R}^2)$
\begin{equation}
\partial_q(f \star g) = \partial_q f \star g + f \star \partial_q g, \quad
D_p(f \star g) = D_p f \star g + f \star D_p g.
\end{equation}
\end{theorem}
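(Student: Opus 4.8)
The plan is to derive both identities by differentiating the integral representation \eqref{eq:7b} of $f \star g$ under the integral sign. This form is convenient because there the two integrations are against the finite measures $\dd{\mu(p')},\dd{\mu(p'')}$ (each of total mass $\pi/\sqrt\beta$), while $\tilde{f},\tilde{g}$, being restrictions of smooth functions on the compact space $\overline{\mathbb{R}}^2$, are bounded together with all their derivatives; hence the integrand in \eqref{eq:7b} and its $q$- and $p$-derivatives are dominated uniformly on compact sets and differentiation under the integral is legitimate. One first checks that $\partial_q f$, $D_p f$, $\partial_q g$, $D_p g$ again lie in the domain of $\star$, so that the right-hand sides make sense; for $D_p$ this is immediate since under the isomorphism $\widetilde{\mathcal{F}}(\mathbb{R}^2) \cong C^\infty(\mathbb{T}^2)$ the operator $D_p$ corresponds to a constant multiple of an ordinary angular derivative.

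For the $\partial_q$-identity I would use two elementary facts. First, a one-line integration by parts in $q$ in the definition of the generalized position Fourier transform gives $\widetilde{\partial_q h}(p',p) = \frac{i}{\hbar}\frac{1}{\sqrt\beta}\arctan(\sqrt\beta p')\,\tilde{h}(p',p)$. Second, by the additivity of the generalized scalar product, $(q,p' \oplus p'') = (q,p') + (q,p'')$, so that $\partial_q e^{\frac{i}{\hbar}(q,p' \oplus p'')} = \frac{i}{\hbar}\bigl(\frac{1}{\sqrt\beta}\arctan(\sqrt\beta p') + \frac{1}{\sqrt\beta}\arctan(\sqrt\beta p'')\bigr) e^{\frac{i}{\hbar}(q,p' \oplus p'')}$. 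In \eqref{eq:7b} the exponential is the only $q$-dependent factor, so applying $\partial_q$ and splitting this sum, the first summand merges with $\tilde{f}(p',p \oplus \lambda \circ p'')$ to produce $\widetilde{\partial_q f}(p',p \oplus \lambda \circ p'')$ and the second merges with $\tilde{g}(p'',p \ominus (1 - \lambda) \circ p')$ to produce $\widetilde{\partial_q g}(p'',p \ominus (1 - \lambda) \circ p')$. Recognizing the two resulting integrals via \eqref{eq:7b} as $\partial_q f \star g$ and $f \star \partial_q g$ yields the claim.

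For the $D_p$-identity, note that $D_p = (1 + \beta p^2)\partial_p$ acts on the momentum variable that is left untouched by the position Fourier transform, so $\widetilde{D_p h} = D_p \tilde{h}$. In \eqref{eq:7b} the variable $p$ appears only inside the arguments $p \oplus \lambda \circ p''$ of $\tilde{f}$ and $p \ominus (1 - \lambda) \circ p'$ of $\tilde{g}$, since the exponent $(q,p' \oplus p'')$ is independent of $p$. As $D_p$ is a first-order differential operator it obeys the ordinary product rule, and the chain-rule identities \eqref{eq:10}, applied with the $p$-independent constants $\eta = \lambda \circ p''$ and $\eta = -(1 - \lambda) \circ p'$, give $D_p\bigl[\tilde{f}(p',p \oplus \lambda \circ p'')\bigr] = \widetilde{D_p f}(p',p \oplus \lambda \circ p'')$ and similarly for $g$. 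Differentiating \eqref{eq:7b} and reading the result back through \eqref{eq:7b} gives $D_p(f \star g) = D_p f \star g + f \star D_p g$. Alternatively, this identity is immediate from the second line of \eqref{eq:12}: since $(f \star g)^\sim = \tilde{f} \odot \tilde{g}$ and $\widetilde{D_p h} = D_p\tilde{h}$, one applies $D_p$ to $\tilde{f} \odot \tilde{g}$, uses \eqref{eq:12} and the definition of $\odot$, and transforms back.

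There is no serious obstacle here; the only points requiring care are the justification of differentiation under the integral and of the integration by parts (the boundary terms in $q$ vanish because the $q$-sections of $f$ and $g$ are band-limited $L^2$ functions, hence decay), together with the bookkeeping of the $\oplus$/$\circ$ arithmetic in the $\partial_q$ case, which relies precisely on the additivity $(q,p' \oplus p'') = (q,p') + (q,p'')$.
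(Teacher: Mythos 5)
Your proof is correct and takes essentially the same route as the paper, which deduces both identities immediately from the integral form \eqref{eq:7c} together with the chain-rule identities \eqref{eq:10}; your only deviation is to start from \eqref{eq:7b} instead, which makes the $\partial_q$ case pass through the Fourier multiplication property and the additivity $(q,p' \oplus p'') = (q,p') + (q,p'')$ rather than through direct differentiation of the translated position arguments, while the $D_p$ case is identical in substance.
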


\begin{proof}
These equalities follow immediately from \eqref{eq:7c} and \eqref{eq:10}.
\end{proof}

For $f \in \mathcal{F}(\mathbb{R}^2)$ we define its conjugation $f^*$ by the
formula
\begin{equation}
f^*(q,p) = \frac{1}{2\pi\hbar} \int_{\mathbb{R}^2} \mathcal{F}_\beta
    \bar{f}(q',p') e^{\frac{i}{\hbar}(1 - 2\lambda)(q',p')}
    e^{\frac{i}{\hbar}(q,p')} e^{-\frac{i}{\hbar}(q',p)} \ud{q'}\ud{\mu(p')},
\label{eq:16}
\end{equation}
where $\bar{f}$ denotes the complex-conjugation of $f$. For
$\lambda = \tfrac{1}{2}$ it can be seen from \eqref{eq:16} that the conjugation
$*$ is the usual complex-conjugation. Formula \eqref{eq:16} can be also written
in the form
\begin{equation}
f^*(q,p) = \frac{1}{2\pi\hbar} \int_{\mathbb{R}}
    \overline{\tilde{f}\bigl(-p',p \ominus (1 - 2\lambda) \circ p'\bigr)}
    e^{\frac{i}{\hbar}(q,p')} \ud{\mu(p')}.
\label{eq:17}
\end{equation}
From \eqref{eq:17} we get that
\begin{equation}
\widetilde{f^*}(p',p) = \overline{\tilde{f}\bigl(-p',p \ominus (1 - 2\lambda)
    \circ p'\bigr)}.
\label{eq:18}
\end{equation}
Thus, since $(p',p) \mapsto (-p',p \ominus (1 - 2\lambda) \circ p')$ is smooth
on $\overline{\mathbb{R}}^2$,
$\widetilde{f^*} \in \widetilde{\mathcal{F}}(\mathbb{R}^2)$ and hence
$f^* \in \mathcal{F}(\mathbb{R}^2)$. The conjugation $*$ has the following
properties.

\begin{theorem}
For $f,g \in \mathcal{F}(\mathbb{R}^2)$
\begin{enumerate}[(i)]
\item\label{item:3a} $f \mapsto f^*$ is anti-linear and continuous on
$\mathcal{F}(\mathbb{R}^2)$,
\item\label{item:3b} $(f \star g)^* = g^* \star f^*$,
\item\label{item:3c} $(f^*)^* = f$,
\item\label{item:3d} $\int_{\mathbb{R}^2} f^*(q,p) \ud{q}\ud{\mu(p)} =
\int_{\mathbb{R}^2} \bar{f}(q,p) \ud{q}\ud{\mu(p)}$,
\item\label{item:3e} $(\partial_q f)^* = \partial_q f^*$ and
$(D_p f)^* = D_p f^*$.
\end{enumerate}
\end{theorem}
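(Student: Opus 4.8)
The plan is to verify each of the five properties of the conjugation $*$ by working in the Fourier-transformed picture, where formula \eqref{eq:18} gives a clean description of $\widetilde{f^*}$ in terms of $\tilde f$. Write $\sigma(p',p) = (-p',p\ominus(1-2\lambda)\circ p')$ for the underlying map of $\overline{\mathbb R}^2$, so that $\widetilde{f^*} = \overline{\tilde f\circ\sigma}$. One checks immediately that $\sigma$ is an involution: $\sigma\circ\sigma = \id$, using that $\ominus$ is generalized subtraction, that $-(- p') = p'$, and that $(p\ominus(1-2\lambda)\circ p')\ominus(1-2\lambda)\circ(-p') = p\ominus(1-2\lambda)\circ p'\oplus(1-2\lambda)\circ p' = p$ by properties~(\ref{item:1d}), (\ref{item:2a})--(\ref{item:2c}) of $\oplus$ and $\circ$. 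This involution property is the workhorse for~(\ref{item:3c}).

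For~(\ref{item:3a}), anti-linearity of $f\mapsto f^*$ follows from anti-linearity of $\tilde f\mapsto\overline{\tilde f\circ\sigma}$ together with the linearity of the generalized Fourier transform in the position variable; continuity follows because precomposition with the fixed smooth map $\sigma$ of $\overline{\mathbb R}^2$ is continuous on $\widetilde{\mathcal F}(\mathbb R^2)$ (it induces, via the torus identification $\Phi$, a smooth change of variables on $\mathbb T^2$, hence a continuous operator on $C^\infty(\mathbb T^2)$), complex conjugation is continuous, and the Fourier transform in position is a topological isomorphism $\mathcal F(\mathbb R^2)\to\widetilde{\mathcal F}(\mathbb R^2)$. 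For~(\ref{item:3c}), $\widetilde{(f^*)^*} = \overline{\widetilde{f^*}\circ\sigma} = \overline{\overline{\tilde f\circ\sigma}\circ\sigma} = \overline{\overline{\tilde f\circ\sigma\circ\sigma}} = \tilde f$ since $\sigma^2 = \id$, and one concludes by injectivity of the position Fourier transform.

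For~(\ref{item:3b}) the natural route is through the twisted convolution $\odot$ in the position variables, using $(f\star g)^\sim = \tilde f\odot\tilde g$ from~\eqref{eq:11}. The claim $(f\star g)^* = g^*\star f^*$ is equivalent, after applying $\sim$ and \eqref{eq:18}, to the identity $\overline{(\tilde f\odot\tilde g)\circ\sigma} = \widetilde{g^*}\odot\widetilde{f^*} = (\overline{\tilde g\circ\sigma})\odot(\overline{\tilde f\circ\sigma})$. I would establish this by starting from the integral formula~\eqref{eq:11} for $\tilde f\odot\tilde g$, substituting $(p',p)\mapsto\sigma(p',p)$, conjugating, and then making the change of integration variable $p''\mapsto -p''$ (valid since $\dd\mu$ is invariant under $p\mapsto -p$, as is clear from~\eqref{eq:30}); the bilinearity of the $(\cdot,\cdot)$-pairing and the algebraic identities for $\oplus$, $\ominus$, $\circ$ should then rearrange the result into exactly the integral defining $(\overline{\tilde g\circ\sigma})\odot(\overline{\tilde f\circ\sigma})$, with the roles of $f$ and $g$ swapped as required. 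For~(\ref{item:3d}), combine~(\ref{item:3b}) with $\lambda=\tfrac12$ reasoning: actually more directly, $\int f^*\,\dd q\,\dd\mu(p) = 2\pi\hbar\,\widetilde{f^*}(0,0)/(\text{normalization})$ reduces, via $\widetilde{f^*}(0,0) = \overline{\tilde f(\sigma(0,0))} = \overline{\tilde f(0,0)}$ since $\sigma(0,0)=(0,0)$, to $\overline{\int f\,\dd q\,\dd\mu(p)} = \int\bar f\,\dd q\,\dd\mu(p)$. For~(\ref{item:3e}), apply $D_{p'}$ and $D_p$ to~\eqref{eq:18} and use the chain-rule identities~\eqref{eq:10} for $D$ under the generalized shifts appearing in $\sigma$, noting that $D_{p'}$ and $D_p$ in the position-Fourier picture correspond respectively to $\partial_q$ and $D_p$ on the original function by~\eqref{eq:5a}--\eqref{eq:5b}-type correspondences; the factor $(1-2\lambda)$ produced by differentiating through $\circ$ cancels against the same factor appearing in the second slot, leaving the derivative to pass through $*$ unchanged.

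The main obstacle I anticipate is property~(\ref{item:3b}): the bookkeeping in the change of variables inside~\eqref{eq:11} is delicate because the shift arguments $p\oplus\lambda\circ(p'\ominus p'')$ and $p\ominus(1-\lambda)\circ p''$ each get composed with $\sigma$, and one must carefully track how $(1-\lambda)$ versus $\lambda$ and the signs interact so that the swapped product $g^*\star f^*$ comes out rather than $f^*\star g^*$; getting the orientation right (i.e. that it is an anti-automorphism, not an automorphism, of the star-algebra) is where the sign $p''\mapsto -p''$ substitution does the essential work. Everything else is a routine, if somewhat lengthy, manipulation of the generalized arithmetic identities already established in Section~\ref{sec:3}.
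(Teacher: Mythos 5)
Your proof is correct in substance but takes a genuinely different route from the paper's. The paper handles parts (ii)--(iv) in the symplectic Fourier picture: it first derives $\mathcal{F}_\beta(f^*)(q,p) = e^{\frac{i}{\hbar}(1-2\lambda)(q,p)}\mathcal{F}_\beta\bar{f}(q,p)$ (formula \eqref{eq:19}) and then manipulates the twisted convolution $\diamond$ of \eqref{eq:20}, while you stay entirely in the position-variable Fourier picture, encoding the conjugation as complex conjugation composed with precomposition by $\sigma(p',p)=(-p',p\ominus(1-2\lambda)\circ p')$ and exploiting $\sigma\circ\sigma=\id$. This buys you a one-line proof of (iii) (versus the paper's double application of \eqref{eq:19}) and a softer proof of continuity in (i) --- precomposition with a fixed smooth involution of $\overline{\mathbb{R}}^2\cong\mathbb{T}^2$ is continuous on $C^\infty(\mathbb{T}^2)$ --- where the paper instead computes the explicit seminorm bound $\norm{\widetilde{f^*}}_{n,m}\leq\sum_{k=0}^n\binom{n}{k}\abs{1-2\lambda}^k\norm{\tilde{f}}_{n-k,k+m}$; the cost is that (ii) requires essentially the same bookkeeping in $\odot$ as the paper's computation in $\diamond$.

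Three points in your execution need repair, none fatal. In (ii), the substitution that turns $\overline{(\tilde{f}\odot\tilde{g})\circ\sigma}$ into $\widetilde{g^*}\odot\widetilde{f^*}$ is not the sign flip alone but the generalized translation $p''\mapsto p'\oplus p''$ (equivalently, $p''\mapsto-p''$ followed by $p''\mapsto p'\ominus p''$); you therefore need the translation invariance of $\dd{\mu}$, not merely its invariance under $p\mapsto-p$. With that substitution the slots do match --- for instance the second argument of $\tilde{f}$ becomes $p\ominus(1-\lambda)\circ p'\ominus\lambda\circ p''$ on both sides --- so the identity holds. In (iv), integrating over $q$ yields $\int_{\mathbb{R}}\widetilde{f^*}(0,p)\ud{\mu(p)}$ rather than the single value $\widetilde{f^*}(0,0)$; what you actually need is that $\sigma$ fixes the whole line $p'=0$, i.e.\ $\sigma(0,p)=(0,p)$ for every $p$ because $(1-2\lambda)\circ 0=0$, after which the argument closes as you say. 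In (v), no factor $(1-2\lambda)$ arises and there is nothing to cancel: $\partial_q$ corresponds under the position Fourier transform to multiplication by $\frac{i}{\hbar}\frac{1}{\sqrt\beta}\arctan(\sqrt\beta p')$, which is odd in $p'$ and hence commutes with $h\mapsto\overline{h\circ\sigma}$, while $D_p$ acts only in the second slot and commutes with the generalized shift by the first identity of \eqref{eq:10}; the $(1-2\lambda)$ factor you mention appears only when applying $D_{p'}$, which is needed for the seminorm estimate in (i) but not for (v).
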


\begin{proof}
(\ref{item:3a}) By virtue of \eqref{eq:10} and \eqref{eq:18}
\begin{equation}
\begin{aligned}
D_{p'}\widetilde{f^*}(p',p) & = -\overline{(D_{p'}\tilde{f})\bigl(-p',p \ominus
    (1 - 2\lambda) \circ p'\bigr)}
    - (1 - 2\lambda) \overline{(D_p\tilde{f})\bigl(-p',p \ominus (1 - 2\lambda)
    \circ p'\bigr)}, \\
D_p\widetilde{f^*}(p',p) & = \overline{(D_p\tilde{f})\bigl(-p',p \ominus
    (1 - 2\lambda) \circ p'\bigr)}.
\end{aligned}
\end{equation}
Hence
\begin{equation}
D_{p'}^n D_p^m\widetilde{f^*}(p',p) = (-1)^n \sum_{k=0}^n \binom{n}{k}
    (1 - 2\lambda)^k \overline{(D_{p'}^{n-k} D_p^{k+m} \tilde{f})
    \bigl(-p',p \ominus (1 - 2\lambda) \circ p'\bigr)}
\end{equation}
and
\begin{equation}
\norm{\widetilde{f^*}}_{n,m} \leq \sum_{k=0}^n \binom{n}{k} \abs{1 - 2\lambda}^k
    \norm{\tilde{f}}_{n-k,k+m},
\end{equation}
from which follows continuity of the conjugation $*$. Anti-linearity is an
immediate consequence of the definition.

(\ref{item:3b}) From \eqref{eq:16} we find that
\begin{equation}
\mathcal{F}_\beta (f^*)(q,p) = e^{\frac{i}{\hbar}(1 - 2\lambda)(q,p)}
    \mathcal{F}_\beta \bar{f}(q,p).
\label{eq:19}
\end{equation}
With the help of \eqref{eq:15}, \eqref{eq:20}, and \eqref{eq:19} we calculate
that
\begin{align}
& \mathcal{F}_\beta\bigl((f \star g)^*\bigr)(q,p) =
    e^{\frac{i}{\hbar}(1 - 2\lambda)(q,p)}
    \mathcal{F}_\beta\overline{(f \star g)}(q,p)
= e^{\frac{i}{\hbar}(1 - 2\lambda)(q,p)}
    \overline{\mathcal{F}_\beta(f \star g)(-q,-p)} \nonumber \\
& \quad = \frac{1}{2\pi\hbar} e^{\frac{i}{\hbar}(1 - 2\lambda)(q,p)}
    \overline{(\mathcal{F}_\beta f \diamond \mathcal{F}_\beta g)(-q,-p)}
    \nonumber \\
& \quad = \frac{1}{2\pi\hbar} \int_{\mathbb{R}^2}
    \mathcal{F}_\beta \bar{f}(-q',-p')
    \mathcal{F}_\beta \bar{g}(q + q',p \oplus p')
    e^{\frac{i}{\hbar}(1 - \lambda)(q + q',p')}
    e^{-\frac{i}{\hbar}\lambda(q',p \oplus p')}
    e^{\frac{i}{\hbar}(1 - 2\lambda)(q,p)} \ud{q'}\ud{\mu(p')} \nonumber \\
& \quad = \frac{1}{2\pi\hbar} \int_{\mathbb{R}^2}
    \mathcal{F}_\beta(f^*)(-q',-p')
    \mathcal{F}_\beta(g^*)(q + q',p \oplus p')
    e^{-\frac{i}{\hbar}(1 - \lambda)(q',p \oplus p')}
    e^{\frac{i}{\hbar}\lambda(q + q',p')} \ud{q'}\ud{\mu(p')} \nonumber \\
& \quad = \frac{1}{2\pi\hbar} \int_{\mathbb{R}^2}
    \mathcal{F}_\beta(f^*)(q - q',p \ominus p') \mathcal{F}_\beta(g^*)(q',p')
    e^{\frac{i}{\hbar}(1 - \lambda)(q - q',p')}
    e^{-\frac{i}{\hbar}\lambda(q',p \ominus p')}\ud{q'}\ud{\mu(p')} \nonumber \\
& \quad = \frac{1}{2\pi\hbar}
    \bigl(\mathcal{F}_\beta(g^*) \diamond \mathcal{F}_\beta(f^*)\bigr)(q,p)
= \mathcal{F}_\beta(g^* \star f^*)(q,p).
\end{align}

(\ref{item:3c}) Using \eqref{eq:19} we receive
\begin{align}
\mathcal{F}_\beta\bigl((f^*)^*\bigr)(q,p) & =
    e^{\frac{i}{\hbar}(1 - 2\lambda)(q,p)}\mathcal{F}_\beta\overline{(f^*)}(q,p)
= e^{\frac{i}{\hbar}(1 - 2\lambda)(q,p)}\overline{\mathcal{F}_\beta(f^*)(-q,-p)}
    \nonumber \\
& = e^{\frac{i}{\hbar}(1 - 2\lambda)(q,p)}
    \overline{e^{\frac{i}{\hbar}(1 - 2\lambda)(q,p)}
    \mathcal{F}_\beta \bar{f}(-q,-p)}
= \mathcal{F}_\beta f(q,p).
\end{align}

(\ref{item:3d}) Again using \eqref{eq:19} we find that
\begin{equation}
\int_{\mathbb{R}^2} f^*(q,p) \ud{q}\ud{\mu(p)} =
    2\pi\hbar \mathcal{F}_\beta(f^*)(0)
= 2\pi\hbar \mathcal{F}_\beta \bar{f}(0)
= \int_{\mathbb{R}^2} \bar{f}(q,p) \ud{q}\ud{\mu(p)}.
\end{equation}

(\ref{item:3e}) It follows directly from the property \eqref{eq:5b} of the
Fourier transform.
\end{proof}

From the above theorem it follows that the conjugation $*$ is an involution on
the algebra $\mathcal{F}(\mathbb{R}^2)$. As a consequence $*$ is an
anti-automorphism on $\mathcal{F}(\mathbb{R}^2)$.

\section{Extension of the $\star$-product to the Hilbert space $\mathcal{L}$}
\label{sec:5}
In what follows we will extend the \nbr{\star}product and involution $*$ to the
Hilbert space $\mathcal{L}$ introduced at the end of Section~\ref{sec:3}.
On the space $\mathcal{F}(\mathbb{R}^2)$ we define a scalar product by the
following formula
\begin{equation}
(f,g) = \int_{\mathbb{R}^2} f^* \star g \ud{l}.
\label{eq:22}
\end{equation}
From \eqref{eq:21} and \eqref{eq:19} we get that
\begin{equation}
(f,g) = \int_{\mathbb{R}^2} \overline{f(x)} g(x) \ud{l(x)},
\label{eq:25}
\end{equation}
thus $(\sdot,\sdot)$ is the usual scalar product from the Hilbert space
$\mathcal{L}$. The norm corresponding to the scalar product
$(\sdot,\sdot)$ will be denoted by $\norm{\sdot}_2$. Immediately from
\eqref{eq:22} we get that
\begin{equation}
\norm{f^*}_2 = \norm{f}_2,
\end{equation}
from which it follows that the conjugation $*$ is continuous with respect to
the \nbr{L^2}norm. Moreover, the following inequality holds
\begin{equation}
\norm{f \star g}_2 \leq \norm{f}_2 \norm{g}_2,
\label{eq:23}
\end{equation}
the consequent of which is continuity of the \nbr{\star}product in the
\nbr{L^2}norm. The proof of \eqref{eq:23} becomes straightforward if we make use
of an operator representation introduced in Section~\ref{sec:8}.

Since $\mathcal{F}(\mathbb{R}^2)$ is dense in $\mathcal{L}$ and
the conjugation $*$ and the \nbr{\star}product are continuous in the
\nbr{L^2}norm, these operations can be uniquely extended to the whole space
$\mathcal{L}$. Note, that $\mathcal{L}$ is a Banach algebra as well as a
Hilbert algebra.

We will also introduce a trace functional
\begin{equation}
\tr(f) = \int_{\mathbb{R}^2} f(x) \ud{l(x)}
\end{equation}
for $f \in L^1(\mathbb{R}^2,\dd{l})$. The scalar product on $\mathcal{L}$
takes then the following form
\begin{equation}
(f,g) = \tr(f^* \star g).
\end{equation}

\section{Extension of the $\star$-product to an algebra of distributions}
\label{sec:6}
In what follows we will extend the \nbr{\star}product to a suitable space of
distributions. The algebra $\mathcal{F}(\mathbb{R}^2)$ will play the role of
the space of test functions. We will denote by $\mathcal{F}'(\mathbb{R}^2)$ the
space of continuous linear functionals on $\mathcal{F}(\mathbb{R}^2)$, i.e.
distributions. The dual space $\mathcal{F}'(\mathbb{R}^2)$ is endowed with the
strong dual topology, that of uniform convergence on bounded subsets of
$\mathcal{F}(\mathbb{R}^2)$. For $f \in \mathcal{F}'(\mathbb{R}^2)$ we will
denote by $\braket{f,h}$ the value of the functional $f$ at
$h \in \mathcal{F}(\mathbb{R}^2)$. We will identify functions
$f \in \mathcal{F}(\mathbb{R}^2)$ with the following functionals
\begin{equation}
h \mapsto \int_{\mathbb{R}^2} f \star h \ud{q}\ud{\mu(p)}.
\label{eq:24}
\end{equation}
These functionals are continuous on $\mathcal{F}(\mathbb{R}^2)$. Indeed, the
integral is continuous on $\mathcal{F}(\mathbb{R}^2)$ since
\begin{equation}
\Abs{\int_{\mathbb{R}^2} f(q,p) \ud{q}\ud{\mu(p)}} =
    \Abs{\int_\mathbb{R} \tilde{f}(0,p) \ud{\mu(p)}}
\leq \int_\mathbb{R} \abs{\tilde{f}(0,p)} \ud{\mu(p)}
\leq \frac{\pi}{\sqrt\beta} \norm{\tilde{f}}_{0,0},
\end{equation}
which together with the continuity of the \nbr{\star}product implies the
continuity of the functionals \eqref{eq:24}. Thus we may write
$\mathcal{F}(\mathbb{R}^2) \subset \mathcal{F}'(\mathbb{R}^2)$. The functionals
\eqref{eq:24}, by virtue of \eqref{eq:25}, can be also written in the form
\begin{equation}
h \mapsto \int_{\mathbb{R}^2} f(q,p) Sh(q,p) \ud{q}\ud{\mu(p)},
\label{eq:26}
\end{equation}
where
\begin{equation}
Sh(q,p) = \overline{h^*}(q,p)
= \frac{1}{2\pi\hbar} \int_{\mathbb{R}^2} \mathcal{F}_\beta h(q',p')
    e^{-\frac{i}{\hbar}(1 - 2\lambda)(q',p')}
    e^{\frac{i}{\hbar}(q,p')} e^{-\frac{i}{\hbar}(q',p)} \ud{q'}\ud{\mu(p')}
\end{equation}
is a topological isomorphism of the vector space $\mathcal{F}(\mathbb{R}^2)$.
Note, that $S$ commutes with partial derivatives $\partial_q$, $D_p$ and that
it reduces to the identity operator under the integral sign, i.e.
\begin{equation}
\int_{\mathbb{R}^2} Sf(q,p) \ud{q}\ud{\mu(p)} =
    \int_{\mathbb{R}^2} f(q,p) \ud{q}\ud{\mu(p)}.
\end{equation}
Moreover, for $\lambda = \tfrac{1}{2}$ the operator $S$ is equal to the identity
operator. If we explicitly denote dependence of the \nbr{\star}product on
$\lambda$ by writing $\star_\lambda$, then we get the following property
of the operator $S$
\begin{equation}
S(f \star_\lambda g) = Sf \star_{1-\lambda} Sg.
\end{equation}

Formula \eqref{eq:26} allows for identification of a broad class of functions
with distributions. In particular, function identically equal to 1 can be
identified with the following distribution
\begin{equation}
h \mapsto \int_{\mathbb{R}^2} h(q,p) \ud{q}\ud{\mu(p)}.
\end{equation}

The Fourier transform, partial differentiation, and ordinary multiplication by
a function can be extended to the space of distributions
$\mathcal{F}'(\mathbb{R}^2)$ in the following way
\begin{subequations}
\begin{gather}
\braket{\partial_q f,h} = -\braket{f,\partial_q h}, \quad
\braket{D_p f,h} = -\braket{f,D_p h}, \\
\braket{\mathcal{F}_\beta f,h} = \braket{f,S^{-1}\mathcal{F}_\beta S\check{h}},
    \\
\braket{f \cdot g,h} = \braket{f,S^{-1}(gSh)},
\end{gather}
\end{subequations}
where $f \in \mathcal{F}'(\mathbb{R}^2)$, $h \in \mathcal{F}(\mathbb{R}^2)$,
$\check{h}(q,p) = h(-q,-p)$, and $g$ is any function on $\mathbb{R}^2$ such that
$gh \in \mathcal{F}(\mathbb{R}^2)$ for every $h \in \mathcal{F}(\mathbb{R}^2)$.

For $f \in \mathcal{F}'(\mathbb{R}^2)$ and $g \in \mathcal{F}(\mathbb{R}^2)$
we define $f \star g \in \mathcal{F}'(\mathbb{R}^2)$ and $g \star f \in
\mathcal{F}'(\mathbb{R}^2)$ by the formulas
\begin{equation}
\braket{f \star g,h} = \braket{f,g \star h}, \quad
\braket{g \star f,h} = \braket{f,h \star g} \quad
\text{for every $h \in \mathcal{F}(\mathbb{R}^2)$}.
\label{eq:27}
\end{equation}
Note, that the maps $g \mapsto f \star g$ and $g \mapsto g \star f$ are
continuous from $\mathcal{F}(\mathbb{R}^2)$ to $\mathcal{F}'(\mathbb{R}^2)$,
since the maps $g \mapsto \braket{f,g \star h}$ and
$g \mapsto \braket{f,h \star g}$ are continuous, uniformly for $h$ in a bounded
subset of $\mathcal{F}(\mathbb{R}^2)$.

Denote by $\mathcal{F}_\star(\mathbb{R}^2)$ the following subspace of
distributions:
\begin{equation}
\mathcal{F}_\star(\mathbb{R}^2) = \{f \in \mathcal{F}'(\mathbb{R}^2) \mid
    \text{$f \star g$ and $g \star f \in \mathcal{F}(\mathbb{R}^2)$ for every
    $g \in \mathcal{F}(\mathbb{R}^2)$}\}.
\end{equation}
In particular, $\mathcal{F}(\mathbb{R}^2) \subset
\mathcal{F}_\star(\mathbb{R}^2)$. For $f \in \mathcal{F}_\star(\mathbb{R}^2)$
the maps $g \mapsto f \star g$ and $g \mapsto g \star f$ are continuous from
$\mathcal{F}(\mathbb{R}^2)$ to $\mathcal{F}(\mathbb{R}^2)$ by the closed graph
theorem. Thus, for $f,g \in \mathcal{F}_\star(\mathbb{R}^2)$ we may define their
\nbr{\star}product by the formula
\begin{equation}
\braket{f \star g,h} = \braket{f,g \star h} = \braket{g,h \star f} \quad
\text{for every $h \in \mathcal{F}(\mathbb{R}^2)$}.
\label{eq:28}
\end{equation}
The second equality in the above definition is indeed satisfied for every
$h \in \mathcal{F}(\mathbb{R}^2)$, which can be easily proved for
$h = h_1 \star h_2$ ($h_1,h_2 \in \mathcal{F}(\mathbb{R}^2)$) and the general
case follows from the fact that $\mathcal{F}(\mathbb{R}^2) \star
\mathcal{F}(\mathbb{R}^2)$ is linearly dense in $\mathcal{F}(\mathbb{R}^2)$.
Straightforward calculations with the use of \eqref{eq:27} and \eqref{eq:28}
verify that $f \star g \in \mathcal{F}_\star(\mathbb{R}^2)$ and the
associativity of the \nbr{\star}product. Note, that
$1 \in \mathcal{F}_\star(\mathbb{R}^2)$ and $f \star 1 = 1 \star f = f$ for
every $f \in \mathcal{F}_\star(\mathbb{R}^2)$.

The involution $*$ can be extended to the algebra
$\mathcal{F}_\star(\mathbb{R}^2)$ in a natural way:
\begin{equation}
\braket{f^*,h} = \overline{\braket{f,h^*}} \quad
\text{for every $h \in \mathcal{F}(\mathbb{R}^2)$}
\end{equation}
and $f \in \mathcal{F}_\star(\mathbb{R}^2)$. In particular, $1^* = 1$. Thus,
$\mathcal{F}_\star(\mathbb{R}^2)$ is an involutive algebra with unity, being
a natural extension of the algebra $\mathcal{F}(\mathbb{R}^2)$.

\begin{theorem}
If $\phi(p)$ is a smooth function on $\overline{\mathbb{R}}$, then
$\phi \in \mathcal{F}_\star(\mathbb{R}^2)$.
\end{theorem}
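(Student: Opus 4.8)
The plan is to compute the two products $\phi\star g$ and $g\star\phi$ for a general test function $g\in\mathcal{F}(\mathbb{R}^2)$ and then to recognise both as elements of $\mathcal{F}(\mathbb{R}^2)$. First I would note that, being smooth on the compact space $\overline{\mathbb{R}}$, the function $\phi$ is bounded together with all of its $D_p$\nobreakdash-derivatives; hence the right-hand side of \eqref{eq:26} converges and is continuous in $h$ (with the $q$-integral understood as the improper integral \eqref{eq:40}), so $\phi$ does define an element of $\mathcal{F}'(\mathbb{R}^2)$. Since $\phi$ depends on $p$ only, its generalized position Fourier transform is the distribution $\tilde\phi(p',p)=2\pi\hbar\,\phi(p)\,\delta(p')$, i.e. a Dirac mass on $\{p'=0\}$ carrying the weight $\phi(p)$.

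Using $(\phi\star g)^\sim=\tilde\phi\odot\tilde g$ and feeding $\tilde\phi$ into formula \eqref{eq:11}, the $\delta(p')$ collapses the inner integral onto a single point and produces
\[
(\phi\star g)^\sim(p',p)=\phi\bigl(p\oplus\lambda\circ p'\bigr)\,\tilde g(p',p),
\]
and the analogous computation, say from \eqref{eq:7c}, gives $(g\star\phi)^\sim(p',p)=\phi\bigl(p\ominus(1-\lambda)\circ p'\bigr)\,\tilde g(p',p)$. One should check that these formulas really represent the distributions defined by \eqref{eq:27}; the cleanest route is to evaluate $\braket{\phi\star g,h}=\braket{\phi,g\star h}=\int_{\mathbb{R}^2}\phi(p)\,S(g\star h)(q,p)\ud{q}\ud{\mu(p)}$, use $S(g\star_\lambda h)=Sg\star_{1-\lambda}Sh$ together with the action of $S$ on position Fourier transforms (read off from $Sh=\overline{h^*}$ and \eqref{eq:18}), and finally absorb a generalized translation of the $p$-variable to identify the representing function with the expressions above.

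It then remains to prove $(\phi\star g)^\sim\in\widetilde{\mathcal{F}}(\mathbb{R}^2)$, and likewise for $(g\star\phi)^\sim$. Here the covariance identities \eqref{eq:10} are exactly the right tool: they give $D_{p'}\bigl[\phi(p\oplus\lambda\circ p')\bigr]=\lambda\,(D_p\phi)(p\oplus\lambda\circ p')$ and $D_p\bigl[\phi(p\oplus\lambda\circ p')\bigr]=(D_p\phi)(p\oplus\lambda\circ p')$, so that repeated use of the Leibniz rule expresses every derivative $D_{p'}^{n}D_p^{m}(\phi\star g)^\sim$ as a finite linear combination of terms $(D_p^{a}\phi)(p\oplus\lambda\circ p')\cdot D_{p'}^{b}D_p^{c}\tilde g(p',p)$. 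Each factor $(D_p^{a}\phi)(p\oplus\lambda\circ p')$ is bounded because $D_p^{a}\phi$ is continuous on the compact $\overline{\mathbb{R}}$, and each $D_{p'}^{b}D_p^{c}\tilde g$ is bounded because $\tilde g\in\widetilde{\mathcal{F}}(\mathbb{R}^2)$; hence all the seminorms $\norm{(\phi\star g)^\sim}_{n,m}$ and $\norm{(g\star\phi)^\sim}_{n,m}$ are finite. If these bounds, together with the explicit form of the transforms, suffice for smooth extendability to $\overline{\mathbb{R}}^2$, one concludes $\phi\star g,g\star\phi\in\mathcal{F}(\mathbb{R}^2)$, i.e. $\phi\in\mathcal{F}_\star(\mathbb{R}^2)$.

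The step I expect to require the most care is precisely this last one: passing from finiteness of every seminorm $\norm{\cdot}_{n,m}$ to genuine membership in $\widetilde{\mathcal{F}}(\mathbb{R}^2)$, that is, smoothness through the point at infinity of $\overline{\mathbb{R}}^2$. This is where the structure of the generalized arithmetic enters decisively: one must exploit that the $p'$-dependence of $(\phi\star g)^\sim$ is channelled through the argument $p\oplus\lambda\circ p'$ (respectively $p\ominus(1-\lambda)\circ p'$), built from $\oplus$ and $\circ$, rather than through $p'$ directly, and the hypothesis that $\phi$ is defined and smooth on all of $\overline{\mathbb{R}}$ — not merely on $\mathbb{R}$ — is what makes the composition well behaved at infinity. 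A minor, separate technicality is the manipulation of the Dirac mass $\tilde\phi$; this can be made rigorous by a standard approximation argument (replacing $\phi$ by elements of $\mathcal{F}(\mathbb{R}^2)$ and passing to the limit) or bypassed entirely by working throughout with the distributional pairing as indicated above.
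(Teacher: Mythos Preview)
Your proposal is correct and follows essentially the same route as the paper: one computes the distributional pairing $\braket{\phi,g\star h}$, unwinds it via $S(g\star h)=\overline{h^*\star g^*}$ and the explicit form \eqref{eq:11} of $\odot$, and ends up with exactly the formula $(\phi\star g)^\sim(p',p)=\phi(p\oplus\lambda\circ p')\,\tilde g(p',p)$ that you wrote down (and its companion for $g\star\phi$).

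Where you diverge from the paper is in the final step, and you overcomplicate it. Your seminorm argument is a detour: finiteness of all $\norm{\cdot}_{n,m}$ on $\mathbb{R}^2$ is indeed not sufficient for membership in $\widetilde{\mathcal F}(\mathbb{R}^2)$, as you rightly worry. The paper simply observes that $(\phi\star g)^\sim$ is smooth on $\overline{\mathbb{R}}^2$ and stops there. The missing one-line justification, which you circle around in your last paragraph without quite stating, is that the map $(p',p)\mapsto p\oplus\lambda\circ p'$ extends to a smooth map $\overline{\mathbb{R}}^2\to\overline{\mathbb{R}}$: under the isomorphism $\Phi$ of \eqref{eq:9} it becomes $(x,y)\mapsto y+\lambda x$ on $\mathbb{T}^2$, which is manifestly smooth. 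Hence $\phi(p\oplus\lambda\circ p')$ is smooth on $\overline{\mathbb{R}}^2$ (because $\phi$ is smooth on $\overline{\mathbb{R}}$), and its product with $\tilde g\in\widetilde{\mathcal F}(\mathbb{R}^2)$ is automatically in $\widetilde{\mathcal F}(\mathbb{R}^2)$. No Leibniz expansion or seminorm estimate is needed.
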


\begin{proof}
For $g,h \in \mathcal{F}(\mathbb{R}^2)$ we have
\begin{align}
& \braket{\phi \star g,h} = \braket{\phi,g \star h}
= \int_{\mathbb{R}^2} \phi(p) S(g \star h)(q,p) \ud{q}\ud{\mu(p)}
= \int_{\mathbb{R}^2} \phi(p) \overline{(h^* \star g^*)(q,p)} \ud{q}\ud{\mu(p)}
    \nonumber \\
& \quad = \int_\mathbb{R} \phi(p)
    \overline{\bigl(\widetilde{h^*} \odot \widetilde{g^*}\bigr)(0,p)}\ud{\mu(p)}
= \frac{1}{2\pi\hbar} \int_{\mathbb{R}^2} \phi(p)
    \overline{\widetilde{h^*}\bigl(p',p \ominus \lambda \circ p'\bigr)
    \widetilde{g^*}\bigl(-p',p \ominus (1 - \lambda) \circ p'\bigr)}
    \ud{\mu(p)}\ud{\mu(p')} \nonumber \\
& \quad = \frac{1}{2\pi\hbar} \int_{\mathbb{R}^2} \phi(p)
    \widetilde{Sh}\bigl(-p',p \ominus \lambda \circ p'\bigr)
    \widetilde{Sg}\bigl(p',p \ominus (1 - \lambda) \circ p'\bigr)
    \ud{\mu(p)}\ud{\mu(p')} \nonumber \\
& \quad = \frac{1}{2\pi\hbar} \int_{\mathbb{R}^2}\phi(p \oplus \lambda \circ p')
    \widetilde{Sg}\bigl(p',p \ominus (1 - 2\lambda) \circ p'\bigr)
    \widetilde{Sh}(-p',p) \ud{\mu(p)}\ud{\mu(p')} \nonumber \\
& \quad = \frac{1}{2\pi\hbar} \int_{\mathbb{R}^3}\phi(p \oplus \lambda \circ p')
    \widetilde{Sg}\bigl(p',p \ominus (1 - 2\lambda) \circ p'\bigr)
    e^{\frac{i}{\hbar}(q,p')} Sh(q,p) \ud{q}\ud{\mu(p)}\ud{\mu(p')} \nonumber \\
& \quad = \int_{\mathbb{R}^2} \left( \frac{1}{2\pi\hbar} \int_\mathbb{R}
    \phi(p \oplus \lambda \circ p') \tilde{g}(p',p) e^{\frac{i}{\hbar}(q,p')}
    \ud{\mu(p')} \right) Sh(q,p) \ud{q}\ud{\mu(p)}.
\end{align}
Hence
\begin{equation}
(\phi \star g)(q,p) = \frac{1}{2\pi\hbar} \int_\mathbb{R}
    \phi(p \oplus \lambda \circ p') \tilde{g}(p',p) e^{\frac{i}{\hbar}(q,p')}
    \ud{\mu(p')}.
\label{eq:37}
\end{equation}
Since $\phi \star g$ is an inverse generalized Fourier transform in position
variable of a smooth function on $\overline{\mathbb{R}}^2$,
$\phi \star g \in \mathcal{F}(\mathbb{R}^2)$. Similarly we can prove that
$g \star \phi \in \mathcal{F}(\mathbb{R}^2)$. Therefore
$\phi \in \mathcal{F}_\star(\mathbb{R}^2)$.
\end{proof}

Note, that $q$, $p$ and their natural powers are not elements of
$\mathcal{F}'(\mathbb{R}^2)$, since the integral in \eqref{eq:26} will not be
well defined for every $h \in \mathcal{F}(\mathbb{R}^2)$. However, the integral
formula for the \nbr{\star}product can still make sense for these and other
functions which are not in $\mathcal{F}'(\mathbb{R}^2)$. In particular, for
$f(q,p) = q^n \phi(p)$ where $n \in \mathbb{N}$ and $\phi$ is smooth on
$\overline{\mathbb{R}}$ we can define
\begin{equation}
(f \star g)^\sim(p',p) = \frac{1}{2\pi\hbar} \int_\mathbb{R} q^n \left(
    \int_\mathbb{R} \phi\bigl(p \oplus \lambda \circ (p' \ominus p'')\bigr)
    \tilde{g}\bigl(p' \ominus p'',p \ominus (1 - \lambda) \circ p''\bigr)
    e^{-\frac{i}{\hbar}(q,p'')} \ud{\mu(p'')} \right) \ud{q}
\label{eq:41}
\end{equation}
for all $g \in \mathcal{F}(\mathbb{R}^2)$ for which the integrals are
convergent. Note, that the integral with respect to $q$, in general, will have
to be an improper integral \eqref{eq:40}.

\section{$C^*$-algebra of observables and states}
\label{sec:7}
On the space $\mathcal{F}(\mathbb{R}^2)$ we can introduce another norm according
to the formula
\begin{equation}
\norm{f} = \sup\{\norm{f \star g}_2 \mid g \in \mathcal{F}(\mathbb{R}^2),\ 
    \norm{g}_2 = 1\}.
\end{equation}
This is a \nbr{C^*}norm, i.e. it satisfies
\begin{enumerate}[(i)]
\item $\norm{f \star g} \leq \norm{f}\norm{g}$,
\item $\norm{f^*} = \norm{f}$,
\item $\norm{f^* \star f} = \norm{f}^2$,
\end{enumerate}
for $f,g \in \mathcal{F}(\mathbb{R}^2)$. Indeed, this follows directly from the
fact that $f \star {}$ is a bounded linear operator on the Hilbert space
$\mathcal{L}$ defined on a dense domain $\mathcal{F}(\mathbb{R}^2)$
(the boundedness of $f \star {}$ can be seen from \eqref{eq:23}). The norm
$\norm{f}$ is then defined as the operator norm of the operator $f \star {}$.
Since $\norm{f}$ is the smallest constant $C$ satisfying the inequality
\begin{equation}
\norm{f \star g}_2 \leq C\norm{g}_2
\text{ for all $g \in \mathcal{F}(\mathbb{R}^2)$},
\end{equation}
it is clear from \eqref{eq:23} that $\norm{f} \leq \norm{f}_2$, and so
convergence in \nbr{L^2}norm implies convergence in the norm $\norm{\sdot}$.

The space $\mathcal{F}(\mathbb{R}^2)$ is not complete with respect to the
\nbr{C^*}norm $\norm{\sdot}$, thus $\mathcal{F}(\mathbb{R}^2)$ is only a
pre-$C^*$-algebra. The completion of $\mathcal{F}(\mathbb{R}^2)$ to a
\nbr{C^*}algebra will be denoted by $\mathcal{A}(\mathbb{R}^2)$. The algebra
$\mathcal{A}(\mathbb{R}^2)$ is a \nbr{C^*}algebra of observables. With its help
we can define states, in a standard way, as continuous positive linear
functionals on $\mathcal{A}(\mathbb{R}^2)$ normalized to unity, i.e. a
continuous linear functional $\Lambda \colon \mathcal{A}(\mathbb{R}^2) \to
\mathbb{C}$ is a state if
\begin{enumerate}[(i)]
\item $\norm{\Lambda} = 1$,
\item $\Lambda(f^* \star f) \geq 0$ for every $f \in \mathcal{A}(\mathbb{R}^2)$.
\end{enumerate}
The set of all states is convex. Pure states are defined as extreme points of
this set, i.e. as those states which cannot be written as convex linear
combinations of some other states. In other words $\Lambda_\text{pure}$ is a
pure state if and only if there do not exist two different states $\Lambda_1$
and $\Lambda_2$ such that $\Lambda_\text{pure} = p\Lambda_1 + (1 - p)\Lambda_2$
for some $p \in (0,1)$.

The expectation value of an observable $f \in \mathcal{A}(\mathbb{R}^2)$ in
a state $\Lambda$ is from definition equal
\begin{equation}
\braket{f}_\Lambda = \Lambda(f).
\end{equation}
If $f$ is self-adjoint, i.e. $f^* = f$, then $\braket{f}_\Lambda \in\mathbb{R}$.

The next two theorems provide characterization of states in terms of
quasi-probabilistic distribution functions $\rho \in \mathcal{L}$.
They follow directly from the operator representation introduced below.

\begin{theorem}
If $\rho \in \mathcal{L}$ satisfies
\begin{enumerate}[(i)]
\item\label{item:4a} $\rho^* = \rho$,
\item\label{item:4b} $\int_{\mathbb{R}^2} \rho \ud{l} = 1$,
\item\label{item:4c} $\int_{\mathbb{R}^2} f^* \star f \star \rho \ud{l} \geq 0$
for every $f \in \mathcal{F}(\mathbb{R}^2)$,
\end{enumerate}
then the functional
\begin{equation}
\Lambda_\rho(f) = \int_{\mathbb{R}^2} f \star \rho \ud{l}
\label{eq:29}
\end{equation}
is a state. Vice verse, every state $\Lambda$ can be written in the form
\eqref{eq:29} for some $\rho \in \mathcal{L}$ and satisfying properties
(\ref{item:4a})--(\ref{item:4c}). The representation \eqref{eq:29} is unique.
\end{theorem}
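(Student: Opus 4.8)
The plan is to transport the whole statement, via the operator representation of Section~\ref{sec:8}, to the familiar description of the states of the algebra of compact operators. That representation will supply a Hilbert space $\mathcal{H}$ (momentum space) and a map $f \mapsto \hat{f}$ which takes the $\star$-product to operator composition and the involution $*$ to the Hermitian adjoint, which restricts to a unitary of $\mathcal{L}$ onto the Hilbert--Schmidt class $\mathcal{B}_2(\mathcal{H})$ carrying the scalar product $(\sdot,\sdot)$ to $\langle\hat{f},\hat{g}\rangle=\Tr(\hat{f}^\dagger\hat{g})$ and the trace functional $\tr$ to the operator trace $\Tr$, and which extends to a $*$-isomorphism of $\mathcal{A}(\mathbb{R}^2)$ onto $\mathcal{K}(\mathcal{H})$. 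Granting this dictionary, the three conditions on $\rho$ become the three axioms of a density operator, the functional $\Lambda_\rho$ of \eqref{eq:29} becomes $f \mapsto \Tr(\hat{f}\hat{\rho})$, and the theorem reduces to the classical fact that the states of $\mathcal{K}(\mathcal{H})$ are exactly the functionals $\hat{A}\mapsto\Tr(\hat{A}\hat{\rho})$ with $\hat{\rho}$ a density operator, the dual of $\mathcal{K}(\mathcal{H})$ being $\mathcal{B}_1(\mathcal{H})$.

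For the direct implication, write $\hat{\rho}\in\mathcal{B}_2(\mathcal{H})$ for the operator attached to $\rho$. Condition (i) gives $\hat{\rho}^\dagger=\hat{\rho}$. For (iii) I would use that $\{\hat{f}:f\in\mathcal{F}(\mathbb{R}^2)\}$ is dense in $\mathcal{B}_2(\mathcal{H})$: for a unit vector $\psi\in\mathcal{H}$ choose $f_n$ with $\hat{f}_n\to\lvert\psi\rangle\langle\psi\rvert$ in Hilbert--Schmidt norm, whence $\hat{f}_n^\dagger\hat{f}_n\to\lvert\psi\rangle\langle\psi\rvert$ as well (joint continuity of multiplication, with $\norm{\hat{f}_n}$ bounded), and pass to the limit in $0\leq\int_{\mathbb{R}^2}f_n^*\star f_n\star\rho\ud{l}=\Tr(\hat{f}_n^\dagger\hat{f}_n\hat{\rho})$ using $\abs{\Tr((S_n-S)\hat{\rho})}\leq\norm{S_n-S}_2\norm{\hat{\rho}}_2$; this yields $\langle\psi,\hat{\rho}\,\psi\rangle\geq 0$, so $\hat{\rho}\geq 0$. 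With $\hat{\rho}\geq 0$, condition (ii), which under the dictionary reads $\Tr\hat{\rho}=1$, forces $\hat{\rho}$ to be trace class, i.e. a density operator. Then $\Lambda_\rho(f)=\int_{\mathbb{R}^2}f\star\rho\ud{l}=(f^*,\rho)=\Tr(\hat{f}\hat{\rho})$, the second equality being the definition \eqref{eq:22} of $(\sdot,\sdot)$ together with $(f^*)^*=f$; this extends to the bounded functional $\hat{A}\mapsto\Tr(\hat{A}\hat{\rho})$ on $\mathcal{A}(\mathbb{R}^2)\cong\mathcal{K}(\mathcal{H})$. That functional is positive, since $\Lambda_\rho(f^*\star f)=\Tr(\hat{f}^\dagger\hat{f}\hat{\rho})=\Tr(\hat{f}\hat{\rho}\hat{f}^\dagger)\geq 0$, and of norm $\norm{\hat{\rho}}_1=1$ by the standard properties of density operators; hence $\Lambda_\rho$ is a state.

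For the converse, a state $\Lambda$ on $\mathcal{A}(\mathbb{R}^2)\cong\mathcal{K}(\mathcal{H})$ is of the form $\hat{A}\mapsto\Tr(\hat{A}\hat{\rho})$ for a unique density operator $\hat{\rho}\in\mathcal{B}_1(\mathcal{H})\subset\mathcal{B}_2(\mathcal{H})$; let $\rho\in\mathcal{L}$ be its preimage under the unitary $\mathcal{L}\cong\mathcal{B}_2(\mathcal{H})$. Self-adjointness, positivity and unit trace of $\hat{\rho}$ are precisely (i)--(iii), and $\Lambda(f)=\Tr(\hat{f}\hat{\rho})=\int_{\mathbb{R}^2}f\star\rho\ud{l}=\Lambda_\rho(f)$ first for $f\in\mathcal{F}(\mathbb{R}^2)$ and then on all of $\mathcal{A}(\mathbb{R}^2)$ by continuity. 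For uniqueness, if $\rho_1,\rho_2\in\mathcal{L}$ both represent $\Lambda$ then $(f^*,\rho_1-\rho_2)=\int_{\mathbb{R}^2}f\star(\rho_1-\rho_2)\ud{l}=0$ for every $f\in\mathcal{F}(\mathbb{R}^2)$, and since $\mathcal{F}(\mathbb{R}^2)$ is dense in $\mathcal{L}$ this forces $\rho_1=\rho_2$.

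The real work is thus pushed into Section~\ref{sec:8}: the main obstacle is establishing the identification $\mathcal{A}(\mathbb{R}^2)\cong\mathcal{K}(\mathcal{H})$ — that $\hat{f}$ is compact for $f\in\mathcal{F}(\mathbb{R}^2)$ and that these operators are dense in $\mathcal{K}(\mathcal{H})$ — together with the unitarity of $\mathcal{L}\cong\mathcal{B}_2(\mathcal{H})$ and its compatibility with $\tr\leftrightarrow\Tr$. Once that is granted, the only delicate points in the present argument are the Hilbert--Schmidt approximation used for positivity (forced by $\{\hat{f}\}$ being merely dense in $\mathcal{B}_2(\mathcal{H})$) and the normalization $\norm{\Lambda}=1$, which one reads off the non-unital $C^*$-algebra $\mathcal{K}(\mathcal{H})$ through an approximate unit; the remainder is the textbook characterization of the states of the compact operators.
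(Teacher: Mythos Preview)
Your approach is exactly the one the paper indicates: the paper does not give a detailed proof but simply states that this theorem (and the next) ``follow directly from the operator representation introduced below,'' i.e.\ from the dictionary of Section~\ref{sec:8} identifying $\mathcal{L}\cong\mathcal{B}_2(\mathcal{H})$, $\mathcal{A}(\mathbb{R}^2)\cong\mathcal{K}(\mathcal{H})$, $*\leftrightarrow\dagger$, $\star\leftrightarrow$ composition, and $\tr\leftrightarrow\Tr$. You have supplied precisely that reduction, with more detail than the paper itself, including the approximation argument for positivity and the appeal to the standard description of states on $\mathcal{K}(\mathcal{H})$ via density operators; nothing further is needed.
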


\begin{theorem}
A state $\Lambda_\rho$ is pure if and only if the corresponding function $\rho$
is idempotent, i.e.
\begin{equation}
\rho \star \rho = \rho.
\end{equation}
\end{theorem}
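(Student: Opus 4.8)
The plan is to reduce the statement to a standard fact about normal states on the $C^*$-algebra of compact operators (or its multiplier algebra) via the operator representation announced just before the theorem, together with the one-to-one correspondence between quasi-probability distributions $\rho\in\mathcal L$ and the associated functionals $\Lambda_\rho$ from the previous theorem. First I would fix the operator representation: the map $f\mapsto \hat f$ sending a phase-space function to the operator $g\mapsto f\star g$ on $\mathcal L$ realizes $\mathcal A(\mathbb R^2)$ as a $C^*$-algebra of operators, and under this representation a density $\rho\in\mathcal L$ corresponds to an operator $\hat\rho$ which, by the Hilbert-algebra structure of $\mathcal L$, is Hilbert-Schmidt; conditions (i)--(iii) of the preceding theorem translate into $\hat\rho$ being self-adjoint, trace one, and positive, i.e. $\hat\rho$ is a genuine density operator. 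The state is then $\Lambda_\rho(f)=\tr(\hat f\hat\rho)$, using \eqref{eq:14} to identify $\int f\star\rho\ud l$ with the operator trace.

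Granting this dictionary, the theorem becomes the classical statement that a density operator $\hat\rho$ gives an extreme point of the state space precisely when it is a rank-one projection, i.e. when $\hat\rho^2=\hat\rho$. I would argue both directions. For the ``if'' direction: if $\hat\rho$ is a projection of trace one it is rank-one, $\hat\rho=\lvert\psi\rangle\langle\psi\rvert$; if $\Lambda_\rho=t\Lambda_1+(1-t)\Lambda_2$ with $t\in(0,1)$ then writing $\Lambda_i=\Lambda_{\rho_i}$ with density operators $\hat\rho_i$ gives $\hat\rho=t\hat\rho_1+(1-t)\hat\rho_2$, and sandwiching between $\langle\psi\rvert$ and $\lvert\psi\rangle$ forces $\langle\psi\rvert\hat\rho_i\lvert\psi\rangle=1=\tr\hat\rho_i$, hence $\hat\rho_i=\lvert\psi\rangle\langle\psi\rvert=\hat\rho$, so $\Lambda_\rho$ is pure. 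Conversely, if $\hat\rho$ is not a projection, its spectral decomposition $\hat\rho=\sum_n p_n\lvert\psi_n\rangle\langle\psi_n\rvert$ has at least two nonzero eigenvalues (a positive trace-one operator with $\hat\rho^2\ne\hat\rho$ cannot be rank-one); splitting off one eigenprojection, $\hat\rho=p_1\lvert\psi_1\rangle\langle\psi_1\rvert+(1-p_1)\hat\sigma$ with $\hat\sigma$ again a density operator distinct from $\lvert\psi_1\rangle\langle\psi_1\rvert$, exhibits $\Lambda_\rho$ as a nontrivial convex combination, so it is not pure. Translating $\hat\rho^2=\hat\rho$ back through the representation and \eqref{eq:15}/\eqref{eq:23} gives exactly $\rho\star\rho=\rho$.

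The uniqueness clause of the preceding theorem is what makes the translation between $\hat\rho$-level statements and $\rho$-level statements faithful, so I would invoke it explicitly: the decomposition $\Lambda=t\Lambda_1+(1-t)\Lambda_2$ at the level of states corresponds bijectively to $\rho=t\rho_1+(1-t)\rho_2$ at the level of densities (with $\rho_i\in\mathcal L$ satisfying (i)--(iii)), because each $\Lambda_i$ determines its $\rho_i$ uniquely. The main obstacle I anticipate is not the convexity argument, which is routine, but making precise the claim that \emph{every} density operator arising here is trace-class with trace one and that the correspondence $\rho\leftrightarrow\hat\rho$ matches $\star$-powers with operator powers and the bilinear pairing $\int\cdot\,\ud l$ with the operator trace; this requires the properties of the operator representation of Section~\ref{sec:8} (in particular that it is a $*$-isomorphism onto operators on $\mathcal L$ intertwining $\star$ with composition) together with the trace identity \eqref{eq:14}. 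Once that representation is in hand, the proof is a two-paragraph transcription of the standard Hilbert-space argument.
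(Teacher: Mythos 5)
Your proposal is correct and is essentially the argument the paper intends: the paper gives no written proof, stating only that the result ``follows directly from the operator representation,'' and your reduction to the standard extreme-point characterization of density operators (pure state $\Leftrightarrow$ rank-one projection $\Leftrightarrow$ idempotent) via the faithful $*$-representation of Section~\ref{sec:8} is exactly that omitted argument. One small correction: the relevant representation is $f\mapsto\hat f$ acting on $\mathcal{H}=L^2(\mathbb{R},\dd\mu)$ via the integral kernel \eqref{eq:33} (for which $\Tr(\hat f)=\tr(f)$ holds), not the left-multiplication operators $f\star{}$ on $\mathcal{L}$, which would not reproduce the trace identity.
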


From \eqref{eq:29} the expectation value of an observable
$f \in \mathcal{F}(\mathbb{R}^2)$ in a state $\Lambda_\rho$ can be written in
a form
\begin{equation}
\braket{f}_\rho = \int_{\mathbb{R}^2} f \star \rho \ud{l}.
\end{equation}
The above formula can be extended in a direct way to general observables
$f \in \mathcal{F}_\star(\mathbb{R}^2)$ provided that $f \star \rho$ is well
defined. If $\rho \in \mathcal{F}(\mathbb{R}^2)$ then the product $f \star \rho$
is a well defined function in $\mathcal{F}(\mathbb{R}^2)$ and we can treat
$f \star {}$ as a densely defined operator on $\mathcal{L}$,
which sometimes might be extended to a wider subspace of functions.

\section{Operator representation}
\label{sec:8}
By virtue of Gelfand-Naimark theorem the \nbr{C^*}algebra of observables
$\mathcal{A}(\mathbb{R}^2)$ can be represented as an algebra of bounded
linear operators on a certain Hilbert space $\mathcal{H}$. In what follows we
will present an explicit construction of this representation for
$\mathcal{H} = L^2(\mathbb{R},\dd{\mu})$. The Hilbert space $\mathcal{H}$ will
play the role of the space of states from a standard description of quantum
mechanics. The constructed representation will result, in fact, in a momentum
representation of a quantum system.

Let $\mathcal{H} = L^2(\mathbb{R},\dd{\mu})$ be a Hilbert space of square
integrable functions on $\mathbb{R}$ with a scalar product given by
\begin{equation}
(\varphi,\psi) = \int_\mathbb{R} \overline{\varphi(p)} \psi(p) \ud{\mu(p)},
\end{equation}
where $\dd{\mu}$ is a measure given by \eqref{eq:30}. For a function
$f \in \mathcal{F}(\mathbb{R}^2)$ we define an operator $\hat{f}$ acting in
$\mathcal{H}$ by the formula
\begin{equation}
\hat{f}\psi(p) = \frac{1}{2\pi\hbar} \int_\mathbb{R}
    \tilde{f}(p',p \ominus \lambda \circ p') \psi(p \ominus p') \ud{\mu(p')},
\label{eq:33}
\end{equation}
where $\lambda$ is the same as in the definition of the \nbr{\star}product.

Let us denote by $\mathsf{f}$ the integral kernel of the operator $\hat{f}$,
i.e.
\begin{equation}
\hat{f}\psi(p) = \int_\mathbb{R} \mathsf{f}(p,p') \psi(p') \ud{\mu(p')}.
\end{equation}
The function $f(q,p)$ can be expressed by the integral kernel of the
corresponding operator $\hat{f}$ in the following way
\begin{equation}
f(q,p) = \int_\mathbb{R} \mathsf{f}\bigl(p \oplus \lambda \circ p',
    p \ominus (1 - \lambda) \circ p'\bigr)
    e^{\frac{i}{\hbar}(q,p')} \ud{\mu(p')}.
\end{equation}
Indeed, from the above equation the generalized Fourier transform in position
variable of the function $f$ is expressed by the integral kernel $\mathsf{f}$
in the following way
\begin{equation}
\tilde{f}(p',p) = 2\pi\hbar
    \mathsf{f}\bigl(p \oplus \lambda \circ p',
    p \ominus (1 - \lambda) \circ p'\bigr).
\end{equation}
Then
\begin{align}
\hat{f}\psi(p) & = \frac{1}{2\pi\hbar} \int_\mathbb{R}
    \tilde{f}(p',p \ominus \lambda \circ p') \psi(p \ominus p')
    \ud{\mu(p')}
= \int_\mathbb{R} \mathsf{f}(p,p \ominus p') \psi(p \ominus p')
    \ud{\mu(p')} \nonumber \\
& = \int_\mathbb{R} \mathsf{f}(p,p') \psi(p') \ud{\mu(p')}.
\end{align}

Performing the following change of variables
\begin{equation}
\begin{aligned}
a & = p \oplus \lambda \circ p', & p' & = a \ominus b, \\
b & = p \ominus (1 - \lambda) \circ p', &
p & = (1 - \lambda) \circ a \oplus \lambda \circ b,
\end{aligned}
\end{equation}
the integral kernel $\mathsf{f}$ is expressed by the function $f$ in the
following way
\begin{equation}
\mathsf{f}(a,b) = \frac{1}{2\pi\hbar} \tilde{f}\bigl(a \ominus b,
    (1 - \lambda) \circ a \oplus \lambda \circ b\bigr).
\label{eq:31}
\end{equation}
Note, that the transformation
$(a,b) \mapsto (a \ominus b, (1 - \lambda) \circ a \oplus \lambda \circ b)$ is
a smooth bijection of $\overline{\mathbb{R}}^2$ onto $\overline{\mathbb{R}}^2$,
which inverse is also smooth. Thus, $f \mapsto \hat{f}$ is a linear isomorphism
of $\mathcal{F}(\mathbb{R}^2)$ onto the space of integral operators whose
integral kernels $\mathsf{f} \in \widetilde{\mathcal{F}}(\mathbb{R}^2)$.

\begin{theorem}
\label{thm:8}
For $f,g \in \mathcal{F}(\mathbb{R}^2)$
\begin{enumerate}[(i)]
\item\label{item:5a} $\widehat{f \star g} = \hat{f} \hat{g}$,
\item\label{item:5b} $\widehat{f^*} = \hat{f}^\dagger$,
\item\label{item:5c} $\hat{f}$ is a trace class operator and
$\Tr(\hat{f}) = \tr(f)$,
\item\label{item:5d} the Hilbert-Schmidt scalar product of operators $\hat{f}$
and $\hat{g}$ is equal $(\hat{f},\hat{g}) \equiv \Tr(\hat{f}^\dagger \hat{g})
= (f,g)$,
\item\label{item:5e} $\norm{\hat{f}} = \norm{f}$.
\end{enumerate}
\end{theorem}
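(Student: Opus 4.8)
The plan is to verify the five claims in the order $(\ref{item:5a})$, $(\ref{item:5b})$, $(\ref{item:5c})$, $(\ref{item:5d})$, $(\ref{item:5e})$, because each depends on the previous ones. For $(\ref{item:5a})$ I would start from the defining formula \eqref{eq:33} for $\hat f$, compute $\hat f(\hat g\psi)(p)$ as an iterated integral, and simplify. Writing $\hat g\psi(p\ominus p') = \frac{1}{2\pi\hbar}\int_\mathbb{R}\tilde g(p'',(p\ominus p')\ominus\lambda\circ p'')\psi((p\ominus p')\ominus p'')\ud\mu(p'')$, one then substitutes this into the outer integral. The substitution $\tilde p'' = p'\oplus p''$ (equivalently, replace the double integration over $(p',p'')$ by integration over $(p',\tilde p'')$ with $p'' = p'\ominus\tilde p''$ — wait, more precisely one introduces the new momentum $s = p'\oplus p''$ so that $\psi$ is evaluated at $p\ominus s$) together with the cocycle-type identities $(q,p_1\oplus p_2) = (q,p_1)+(q,p_2)$ and the associativity/commutativity of $\oplus$ from Section~\ref{sec:3} should collapse the kernel down to exactly $\frac{1}{2\pi\hbar}\tilde{(f\star g)}(s,p\ominus\lambda\circ s)$. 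The key external inputs are the integral formula \eqref{eq:7b} for $f\star g$ rewritten via the twisted convolution \eqref{eq:11}, and the translation-invariance of $\dd\mu$; I expect the bookkeeping of which $\oplus$-combination appears in each slot to be the main nuisance, but it is a direct computation, and matching against \eqref{eq:11} is the natural way to recognize the answer.

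Part $(\ref{item:5b})$ follows by computing the adjoint of the integral operator: $\hat f^\dagger$ has kernel $\overline{\mathsf f(p',p)}$ with respect to $\dd\mu$, i.e. $\mathsf{f}^\dagger(a,b) = \overline{\mathsf f(b,a)}$. Using the expression \eqref{eq:31} of the kernel in terms of $\tilde f$, together with the formula \eqref{eq:18} for $\widetilde{f^*}$, namely $\widetilde{f^*}(p',p) = \overline{\tilde f(-p',p\ominus(1-2\lambda)\circ p')}$, one checks that the kernel of $\hat f^\dagger$ coincides with the kernel $\mathsf{f^*}$ of $\widehat{f^*}$. Concretely, substitute $a\leftrightarrow b$ in \eqref{eq:31}, conjugate, and use $-(a\ominus b) = b\ominus a$ along with the scalar-multiplication identities $(1-\lambda)\circ b\oplus\lambda\circ a = b\ominus(1-2\lambda)\circ(b\ominus a)$-type manipulations (these reduce to identities~(\ref{item:2a})--(\ref{item:2c}) of the generalized arithmetic). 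This is purely algebraic once $(\ref{item:5a})$ and the arithmetic lemmas are in hand.

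For $(\ref{item:5c})$ and $(\ref{item:5d})$ I would use the kernel representation. Since $\mathsf f\in\widetilde{\mathcal F}(\mathbb R^2)$ is smooth on the compact space $\overline{\mathbb R}^2$, in particular bounded, and since $\mathbb R$ has finite $\dd\mu$-measure ($\int_\mathbb{R}\dd\mu = \pi/\sqrt\beta$), the kernel is square-integrable and $\hat f$ is Hilbert–Schmidt; to upgrade to trace class one writes $\hat f = \hat f_1\hat f_2$ as a product of two Hilbert–Schmidt operators (e.g. by factoring $f$ through $(\ref{item:5a})$, or directly by noting the kernel is smooth on a compact manifold so the operator is trace class by Mercer-type arguments on the torus $\mathbb{T}^2$ via the isomorphism $\Phi$). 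Then $\Tr\hat f = \int_\mathbb{R}\mathsf f(p,p)\ud\mu(p)$, and plugging in \eqref{eq:31} gives $\frac{1}{2\pi\hbar}\int_\mathbb{R}\tilde f(0,p)\ud\mu(p) = \int_{\mathbb{R}^2}f(q,p)\frac{\dd q\,\dd\mu(p)}{2\pi\hbar} = \tr(f)$, using the inversion of the generalized Fourier transform in position. For $(\ref{item:5d})$, combine $(\ref{item:5a})$, $(\ref{item:5b})$, $(\ref{item:5c})$: $\Tr(\hat f^\dagger\hat g) = \Tr(\widehat{f^*}\widehat{g}) = \Tr(\widehat{f^*\star g}) = \tr(f^*\star g) = (f,g)$ by the definition \eqref{eq:22} of the scalar product and the identity \eqref{eq:25}.

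Finally, $(\ref{item:5e})$: by definition $\norm f = \sup\{\norm{f\star g}_2 : \norm g_2 = 1\}$ is the operator norm of left $\star$-multiplication by $f$ on $\mathcal L$. By $(\ref{item:5d})$ the map $g\mapsto\hat g$ is a unitary (Hilbert–Schmidt-isometric) isomorphism of $\mathcal F(\mathbb R^2)$ onto the smooth integral operators, and by $(\ref{item:5a})$ it intertwines left $\star$-multiplication by $f$ with left composition by $\hat f$ on Hilbert–Schmidt operators. Hence $\norm f = \norm{L_{\hat f}}$, the operator norm of $X\mapsto\hat f X$ on the Hilbert–Schmidt ideal, which is well known to equal $\norm{\hat f}$ (the operator norm on $\mathcal H$) — one direction is submultiplicativity of the Hilbert–Schmidt norm, the other follows by testing on rank-one operators $X = |\varphi\rangle\langle\psi|$, for which $\norm{\hat f X}_{\mathrm{HS}} = \norm{\hat f\varphi}\,\norm\psi$, and taking $\varphi$ to nearly attain $\norm{\hat f}$; approximation of such rank-one operators by smooth-kernel operators in $\mathcal F(\mathbb R^2)$ handles the density issue. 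I expect $(\ref{item:5a})$ to be the main obstacle, since it is where all the generalized-arithmetic identities must be marshalled to recognize the composed kernel; the remaining parts are then formal consequences plus standard trace-ideal facts.
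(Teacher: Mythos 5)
Your proposal is correct and follows essentially the same route as the paper: part~(\ref{item:5a}) by direct computation of the iterated integral with a translation-invariant change of momentum variable and identification with the twisted convolution \eqref{eq:11}, part~(\ref{item:5b}) via \eqref{eq:18} and \eqref{eq:31} (the paper phrases it through the sesquilinear form rather than the kernel, but it is the same calculation), part~(\ref{item:5c}) from smoothness of the kernel on the compact space $\overline{\mathbb{R}}^2$ (the paper cites Sugiura for exactly the Mercer-type fact you invoke) followed by the same diagonal-trace computation, and parts~(\ref{item:5d})--(\ref{item:5e}) as formal consequences using density of $\mathcal{F}(\mathbb{R}^2)$ in $\mathcal{L}$. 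The only differences are cosmetic (kernel versus bilinear-form bookkeeping in~(\ref{item:5b}), and your spelling out of the rank-one test in~(\ref{item:5e}), which the paper leaves implicit).
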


\begin{proof}
(\ref{item:5a}) We have that
\begin{equation}
\hat{f}\hat{g}\psi(p) = \frac{1}{(2\pi\hbar)^2} \int_{\mathbb{R}^2}
    \tilde{f}(p'',p \ominus \lambda \circ p'')
    \tilde{g}(p',p \ominus p'' \ominus \lambda \circ p')
    \psi(p \ominus p'' \ominus p') \ud{\mu(p')}\ud{\mu(p'')}.
\end{equation}
Using the translational invariance of the measure $\mu$ when integrating over
$p'$ we can replace $p'$ by $p' \ominus p''$ under the integral sign receiving
\begin{align}
\hat{f}\hat{g}\psi(p) & = \frac{1}{(2\pi\hbar)^2} \int_{\mathbb{R}^2}
    \tilde{f}\bigl(p'',p \ominus \lambda \circ p''\bigr)
    \tilde{g}\bigl(p' \ominus p'',p \ominus p'' \ominus \lambda \circ
    (p' \ominus p'')\bigr) \psi\bigl(p \ominus p'\bigr)
    \ud{\mu(p')}\ud{\mu(p'')} \nonumber \\
& = \frac{1}{2\pi\hbar} \int_\mathbb{R} \Biggl(
    \frac{1}{2\pi\hbar} \int_\mathbb{R}
    \tilde{f}\bigl(p'',p \ominus \lambda \circ p' \oplus
    \lambda \circ (p' \ominus p'')\bigr) \nonumber \\
& \quad \times \tilde{g}\bigl(p' \ominus p'',p \ominus \lambda \circ
    p' \ominus (1 - \lambda) \circ p''\bigr) \ud{\mu(p'')} \Biggr)
    \psi\bigl(p \ominus p'\bigr) \ud{\mu(p')} \nonumber \\
& = \frac{1}{2\pi\hbar} \int_\mathbb{R}
    (\tilde{f} \odot \tilde{g})(p',p \ominus \lambda \circ p')
    \psi(p \ominus p') \ud{\mu(p')}
= (\widehat{f \star g})\psi(p).
\end{align}

(\ref{item:5b}) From \eqref{eq:18} and translational invariance of the measure
$\mu$ we get
\begin{align}
(\varphi,\widehat{f^*}\psi) & = \frac{1}{2\pi\hbar} \int_{\mathbb{R}^2}
    \overline{\varphi(p)} \widetilde{f^*}(p',p \ominus \lambda \circ p')
    \psi(p \ominus p') \ud{\mu(p')}\ud{\mu(p)} \nonumber \\
& = \frac{1}{2\pi\hbar} \int_{\mathbb{R}^2}
    \overline{\tilde{f}(-p',p \ominus (1 - \lambda) \circ p') \varphi(p)}
    \psi(p \ominus p') \ud{\mu(p')}\ud{\mu(p)} \nonumber \\
& = \frac{1}{2\pi\hbar} \int_{\mathbb{R}^2}
    \overline{\tilde{f}(-p',p \oplus \lambda \circ p') \varphi(p \oplus p')}
    \psi(p) \ud{\mu(p')}\ud{\mu(p)} \nonumber \\
& = (\hat{f}\varphi,\psi) = (\varphi,\hat{f}^\dagger \psi).
\end{align}

(\ref{item:5c}) Since the Hilbert space $L^2(\mathbb{R},\dd{\mu})$ is unitarily
equivalent with a Hilbert space of square integrable functions on a circle and
the integral kernel $\mathsf{f}$ of the operator $\hat{f}$ is smooth on
$\overline{\mathbb{R}}^2$, then by
\cite[Proposition~3.5, page~174]{Sugiura:1990} the operator $\hat{f}$ is of
trace class with the trace given by the formula
\begin{equation}
\Tr(\hat{f}) = \int_\mathbb{R} \mathsf{f}(p,p) \ud{\mu(p)}.
\end{equation}
Using \eqref{eq:31} we calculate that
\begin{equation}
\Tr(\hat{f}) = \int_\mathbb{R} \mathsf{f}(p,p) \ud{\mu(p)}
= \frac{1}{2\pi\hbar} \int_\mathbb{R} \tilde{f}(0,p) \ud{\mu(p)}
= \frac{1}{2\pi\hbar} \int_{\mathbb{R}^2} f(q,p) \ud{q}\ud{\mu(p)}
= \tr(f).
\end{equation}

(\ref{item:5d}) This property immediately follows from
(\ref{item:5a})--(\ref{item:5c}).

(\ref{item:5e}) The operator norm of a bounded operator $\hat{f}$ can be
expressed in terms of the Hilbert-Schmidt norm $\norm{\sdot}_2$ according to the
formula
\begin{equation}
\norm{\hat{f}} = \sup\{\norm{\hat{f}\hat{g}}_2 \mid
\text{$\hat{g}$ is a Hilbert-Schmidt operator and $\norm{\hat{g}}_2 = 1$}\}.
\end{equation}
Using properties (\ref{item:5a}) and (\ref{item:5d}) and the fact that
$\mathcal{F}(\mathbb{R}^2)$ is dense in $\mathcal{L}$ we get the result.
\end{proof}

The above theorem states that the map $f \mapsto \hat{f}$ is a faithful
\nbr{*}representation of the algebra $\mathcal{F}(\mathbb{R}^2)$ on the Hilbert
space $\mathcal{H}$. From property (\ref{item:5d}) this representation can be
extended to the algebra $\mathcal{L}$ resulting in a Hilbert algebra isomorphism
of $\mathcal{L}$ onto the space of Hilbert-Schmidt operators
$\mathcal{B}_2(\mathcal{H})$. Moreover, from property (\ref{item:5e}) we can
further extend this representation to a representation of the \nbr{C^*}algebra
$\mathcal{A}(\mathbb{R}^2)$, which will give us a \nbr{C^*}algebra isomorphism
of $\mathcal{A}(\mathbb{R}^2)$ onto the \nbr{C^*}algebra of compact operators
$\mathcal{K}(\mathcal{H})$.

Let us consider the operator $\hat{f} = \psi(\varphi,\sdot)$ for
$\varphi,\psi \in \mathcal{H}$. Its integral kernel is equal
$\mathsf{f}(a,b) = \psi(a)\overline{\varphi(b)}$ and, therefore, the
corresponding function on phase space $f(q,p) = W_\lambda(\varphi,\psi)(q,p)$ is
equal
\begin{equation}
W_\lambda(\varphi,\psi)(q,p) = \int_\mathbb{R}
    \overline{\varphi\bigl(p \ominus (1 - \lambda) \circ p'\bigr)}
    \psi\bigl(p \oplus \lambda \circ p'\bigr)
    e^{\frac{i}{\hbar}(q,p')} \ud{\mu(p')}.
\end{equation}
The functions $W_\lambda(\varphi,\psi)$ are the generalized \nbr{\lambda}Wigner
functions and are elements of the Hilbert space $\mathcal{L}$. The following
properties of the functions $W_\lambda(\varphi,\psi)$ are an immediate
consequence of Theorem~\ref{thm:8}.

\begin{theorem}
\label{thm:9}
For $\varphi,\psi,\phi,\chi \in \mathcal{H}$ and
$f \in \mathcal{F}(\mathbb{R}^2)$
\begin{enumerate}[(i)]
\item\label{item:6a} $W_\lambda(\varphi,\psi)^* = W_\lambda(\psi,\varphi)$,
\item\label{item:6b} $\int_{\mathbb{R}^2} W_\lambda(\varphi,\psi) \ud{l} =
(\varphi,\psi)$,
\item\label{item:6c} $\bigl(W_\lambda(\varphi,\psi),W_\lambda(\phi,\chi)\bigr)
= \overline{(\varphi,\phi)}(\psi,\chi)$,
\item\label{item:6d} $W_\lambda(\varphi,\psi) \star W_\lambda(\phi,\chi)
= (\varphi,\chi) W_\lambda(\phi,\psi)$,
\item\label{item:6e} $f \star W_\lambda(\varphi,\psi) =
W_\lambda(\varphi,\hat{f}\psi)$ and $W_\lambda(\varphi,\psi) \star f =
W_\lambda(\hat{f}^\dagger\varphi,\psi)$.
\end{enumerate}
\end{theorem}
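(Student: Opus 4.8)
The plan is to translate every claim about the generalized $\lambda$-Wigner functions $W_\lambda(\varphi,\psi)$ into a statement about the rank-one operator $\hat{f}=\psi(\varphi,\sdot)$ via the isomorphism $f\mapsto\hat f$ established in Theorem~\ref{thm:8}, prove the operator-side identity (which is elementary Hilbert-space algebra), and then read the phase-space identity back off. The key observation, already recorded just before the statement, is that $W_\lambda(\varphi,\psi)$ is exactly the phase-space function whose integral kernel is $\mathsf{f}(a,b)=\psi(a)\overline{\varphi(b)}$, i.e. $\widehat{W_\lambda(\varphi,\psi)}=\psi(\varphi,\sdot)$; since the representation is faithful and $*$-preserving, any operator equation between such rank-one operators is equivalent to the corresponding $\star$-equation between the Wigner functions.

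First I would treat (\ref{item:6a}): by Theorem~\ref{thm:8}(\ref{item:5b}) we have $\widehat{W_\lambda(\varphi,\psi)^*}=\bigl(\psi(\varphi,\sdot)\bigr)^\dagger=\varphi(\psi,\sdot)=\widehat{W_\lambda(\psi,\varphi)}$, and faithfulness gives the claim. For (\ref{item:6b}), apply Theorem~\ref{thm:8}(\ref{item:5c}): $\int_{\mathbb{R}^2}W_\lambda(\varphi,\psi)\ud l=\tr\bigl(W_\lambda(\varphi,\psi)\bigr)=\Tr\bigl(\psi(\varphi,\sdot)\bigr)=(\varphi,\psi)$, the last equality being the standard trace of a rank-one operator. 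For (\ref{item:6d}) I would compute the operator product directly: $\bigl(\psi(\varphi,\sdot)\bigr)\bigl(\chi(\phi,\sdot)\bigr)=(\varphi,\chi)\,\psi(\phi,\sdot)$, so by Theorem~\ref{thm:8}(\ref{item:5a}) $\widehat{W_\lambda(\varphi,\psi)\star W_\lambda(\phi,\chi)}=(\varphi,\chi)\widehat{W_\lambda(\phi,\psi)}$, and again faithfulness closes it. Property (\ref{item:6c}) then follows from Theorem~\ref{thm:8}(\ref{item:5d}) together with (\ref{item:6a}): $\bigl(W_\lambda(\varphi,\psi),W_\lambda(\phi,\chi)\bigr)=\Tr\bigl(\widehat{W_\lambda(\varphi,\psi)}^\dagger\,\widehat{W_\lambda(\phi,\chi)}\bigr)=\Tr\bigl(\varphi(\psi,\sdot)\,\chi(\phi,\sdot)\bigr)=(\psi,\chi)\Tr\bigl(\varphi(\phi,\sdot)\bigr)=(\psi,\chi)\overline{(\varphi,\phi)}$.

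For (\ref{item:6e}) the point is that left $\star$-multiplication by $f$ corresponds, under $\widehat{\ \cdot\ }$, to left operator multiplication by $\hat f$ (Theorem~\ref{thm:8}(\ref{item:5a})): $\widehat{f\star W_\lambda(\varphi,\psi)}=\hat f\,\psi(\varphi,\sdot)=(\hat f\psi)(\varphi,\sdot)=\widehat{W_\lambda(\varphi,\hat f\psi)}$, and faithfulness gives $f\star W_\lambda(\varphi,\psi)=W_\lambda(\varphi,\hat f\psi)$. Similarly $\widehat{W_\lambda(\varphi,\psi)\star f}=\psi(\varphi,\sdot)\hat f=\psi(\hat f^\dagger\varphi,\sdot)=\widehat{W_\lambda(\hat f^\dagger\varphi,\psi)}$, using that the adjoint of $v\mapsto(\varphi,\hat f v)=(\hat f^\dagger\varphi,v)$ is the functional $(\hat f^\dagger\varphi,\sdot)$. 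I do not expect any genuine obstacle here; the only care needed is a domain/density remark — $\psi(\varphi,\sdot)$ need not lie in the image $\widehat{\mathcal{F}(\mathbb{R}^2)}$ for arbitrary $\varphi,\psi\in\mathcal{H}$, so strictly speaking parts (\ref{item:6c})--(\ref{item:6e}) should first be checked on a dense set (e.g. $\varphi,\psi\in$ the image of $\mathcal{F}(\mathbb{R}^2)$ under the kernel correspondence, or simply noting that rank-one operators built from such vectors are Hilbert--Schmidt so Theorem~\ref{thm:8}(\ref{item:5d}) applies) and then extended by the continuity/density already used to extend the representation to $\mathcal{L}$ and $\mathcal{A}(\mathbb{R}^2)$. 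That bookkeeping is the mildly delicate part, but it is routine given the framework.
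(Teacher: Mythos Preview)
Your proposal is correct and is exactly the approach the paper takes: the paper's entire proof is the one-line remark that these properties ``are an immediate consequence of Theorem~\ref{thm:8}'', and you have simply unpacked that remark by pushing each identity through the faithful $*$-representation $f\mapsto\hat f$ to the corresponding rank-one operator identity. Your density/continuity caveat is appropriate and matches how the paper extends the representation from $\mathcal{F}(\mathbb{R}^2)$ to $\mathcal{L}\cong\mathcal{B}_2(\mathcal{H})$ just before stating the theorem.
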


We can define a tensor product of the Hilbert space $\mathcal{H}$ and its dual
$\mathcal{H}^*$ by the formula
\begin{equation}
\varphi^* \otimes_\lambda \psi = W_\lambda(\varphi,\psi),
\end{equation}
where $\varphi \mapsto \varphi^*$ is an anti-linear isomorphism of $\mathcal{H}$
onto $\mathcal{H}^*$ appearing in the Riesz representation theorem. The map
$\otimes_\lambda \colon \mathcal{H}^* \times \mathcal{H} \to \mathcal{L}$ is
clearly bilinear and from property (\ref{item:6c}) from Theorem~\ref{thm:9}
it satisfies
\begin{equation}
(\varphi^* \otimes_\lambda \psi,\phi^* \otimes_\lambda \chi) =
    (\varphi^*,\phi^*)(\psi,\chi).
\end{equation}
Moreover, since the set of generalized \nbr{\lambda}Wigner functions
$W_\lambda(\varphi,\psi)$ is linearly dense in $\mathcal{L}$ the map
$\otimes_\lambda$ indeed defines a tensor product of $\mathcal{H}^*$ and
$\mathcal{H}$ equal to $\mathcal{L}$.

For $f \in \mathcal{F}(\mathbb{R}^2)$ we can treat $f \star {}$ as an operator
on the Hilbert space $\mathcal{L}$. Then, by property (\ref{item:6e}) from
Theorem~\ref{thm:9} we get that
\begin{equation}
f \star {} = \hat{1} \otimes_\lambda \hat{f}.
\end{equation}

Note, that the operator representation of the algebra $\mathcal{L}$ gives a one
to one correspondence between states $\rho \in \mathcal{L}$ and density
operators $\hat{\rho}$, i.e. trace class operators satisfying
\begin{enumerate}[(i)]
\item $\hat{\rho}^\dagger = \hat{\rho}$,
\item $\Tr(\hat{\rho}) = 1$,
\item $(\varphi,\hat{\rho}\varphi) \geq 0$ for every $\varphi \in \mathcal{H}$.
\end{enumerate}
From this correspondence we can see that a function $\rho \in \mathcal{L}$ is
a state if and only if it is in the form
\begin{equation}
\rho = \sum_k p_k W_\lambda(\varphi_k,\varphi_k),
\end{equation}
where $\varphi_k \in \mathcal{H}$, $\norm{\varphi_k} = 1$, $p_k \geq 0$, and
$\sum_k p_k = 1$. In particular, $\rho \in \mathcal{L}$ is a pure state if and
only if
\begin{equation}
\rho = W_\lambda(\varphi,\varphi)
\label{eq:34}
\end{equation}
for some normalized vector $\varphi \in \mathcal{H}$.

The generalized \nbr{\lambda}Wigner functions obey the following probability
interpretation
\begin{equation}
\frac{1}{2\pi\hbar} \int_\mathbb{R} W_\lambda(\varphi,\varphi)(q,p) \ud{q} =
    \abs{\varphi(p)}^2.
\end{equation}

The operator representation can be extended to the algebra
$\mathcal{F}_\star(\mathbb{R}^2)$ in the following manner. Let
$\mathcal{D}$ denote a set of all functions on $\mathbb{R}$ which extend to
smooth functions on $\overline{\mathbb{R}}$. Then $\mathcal{D}$ is a dense
subspace of $\mathcal{H}$. If $\varphi,\psi \in \mathcal{D}$, then
$W_\lambda(\varphi,\psi) \in \mathcal{F}(\mathbb{R}^2)$. Hence, for
$f \in \mathcal{F}_\star(\mathbb{R}^2)$ we can define the operator $\hat{f}$ by
the following bilinear form
\begin{equation}
(\varphi,\hat{f}\psi) = \braket{f,W_\lambda(\varphi,\psi)}, \quad
\varphi,\psi \in \mathcal{D},
\label{eq:32}
\end{equation}
provided that this bilinear form is continuous with respect to the first
variable. The formula \eqref{eq:32} defines a unique possibly unbounded operator
on $\mathcal{H}$ with a dense domain $\mathcal{D}$. In the particular case where
$f \in \mathcal{F}(\mathbb{R}^2)$, this definition of the operator $\hat{f}$
coincides with the definition \eqref{eq:33}.

Let $\mathcal{S}$ denote a space of smooth functions on $\mathbb{R}$ which,
together with all derivatives, decay faster than any power of $p$. This space is
dense in $\mathcal{H}$. Let $\hat{q} = i\hbar D_p$ and $\hat{p} = p$ be
operators on $\mathcal{H}$ defined on the domain $\mathcal{S}$. They are
symmetric and satisfy the commutation relation \eqref{eq:2}. Moreover, as was
discussed in \cite{Kempf:1995}, only $\hat{p}$ is essentially self-adjoint. In
what follows we will show that these operators correspond to observables of
position and momentum $q$, $p$, and that an operator $\hat{f}$ corresponding to
a function $f$ is, in fact, a \nbr{\lambda}ordered function $f$ of operators
$\hat{q}$ and $\hat{p}$.

\begin{theorem}
If $f(q,p) = q^n \phi(p)$ where $n \in \mathbb{N}$ and $\phi$ is smooth on
$\overline{\mathbb{R}}$, then the corresponding operator $\hat{f}$ can be
defined by an analogical formula to \eqref{eq:32} and takes the form
\begin{equation}
\hat{f} = f_\lambda(\hat{q},\hat{p})
= \sum_{l=0}^n \binom{n}{l} \lambda^{l} (1 - \lambda)^{n - l}
    \hat{q}^l \phi(\hat{p}) \hat{q}^{n-l}.
\end{equation}
\end{theorem}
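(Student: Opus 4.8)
The plan is to compute $\hat{f}$ directly from the extended definition analogous to \eqref{eq:32}, namely $(\varphi,\hat{f}\psi) = \langle f, W_\lambda(\varphi,\psi)\rangle$ for $\varphi,\psi \in \mathcal{D}$, and to check that the right-hand side equals the bilinear form determined by the claimed operator $\sum_{l=0}^n \binom{n}{l} \lambda^l (1-\lambda)^{n-l} \hat{q}^l \phi(\hat{p}) \hat{q}^{n-l}$. The natural starting point is formula \eqref{eq:41}, which gives $(f\star g)^\sim$ for exactly this class of functions $f(q,p) = q^n\phi(p)$; specializing $g = W_\lambda(\varphi,\psi)$ and using Theorem~\ref{thm:9}(\ref{item:6e}) identifies $f\star W_\lambda(\varphi,\psi)$ with $W_\lambda(\varphi,\hat{f}\psi)$, and then property (\ref{item:6b}) of Theorem~\ref{thm:9} recovers $(\varphi,\hat{f}\psi)$ by integrating against $\ud{l}$. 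Alternatively, and perhaps more transparently, I would first treat the two building blocks separately: for $f = \phi(p)$ the operator is $\phi(\hat{p})$ by the earlier formula \eqref{eq:37} (with $g = W_\lambda$), and for $f = q$ one checks from \eqref{eq:33} and the definition $\hat{q} = i\hbar D_p$ that $\hat{q}$ acts as claimed; the content of the theorem is then how these compose under the $\lambda$-ordering.

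The key computational step is to handle the factor $q^n$ under the generalized Fourier transform in the position variable. Since multiplication by $q$ corresponds, via \eqref{eq:5b} or the analogous transform identities, to the operator $i\hbar D_{p'}$ acting on $\tilde{g}$ (equivalently $i\hbar\partial$ in the appropriate $\arctan$-variable), applying $q^n$ produces $(i\hbar D)^n$ acting on the product $\phi(p\oplus\lambda\circ p')\,\tilde{g}(p',p\ominus(1-\lambda)\circ p')$ that appears in \eqref{eq:41}. Expanding this $n$-th derivative by the Leibniz rule is exactly what generates the binomial sum $\sum_{l=0}^n \binom{n}{l}$: each derivative either lands on the $\phi$-factor, contributing a factor of $\lambda$ (because $D_{p'}$ of $p\oplus\lambda\circ p'$ brings down $\lambda$ by the chain-rule identities \eqref{eq:10}), or on the $\tilde{g}$-factor, contributing a factor of $(1-\lambda)$ from differentiating $p\ominus(1-\lambda)\circ p'$. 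After collecting terms and translating back through Theorem~\ref{thm:9}, the $l$ derivatives hitting $\phi$ become a block $\hat{q}^l$ to the left of $\phi(\hat{p})$ and the remaining $n-l$ become $\hat{q}^{n-l}$ to the right, which is precisely the $\lambda$-ordered expression $f_\lambda(\hat{q},\hat{p})$.

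The main obstacle is not the algebra of the Leibniz expansion but the analytic bookkeeping: since $q^n\phi(p)$ is \emph{not} in $\mathcal{F}'(\mathbb{R}^2)$, one must justify that the defining bilinear form $(\varphi,\hat{f}\psi)$ makes sense on the domain $\mathcal{D}$ (or $\mathcal{S}$), that the improper $q$-integral \eqref{eq:40} converges and may be interchanged with the $p'$-integration, and that integration by parts in $p'$ (to move the $(i\hbar D_{p'})^n$ off the exponential and onto the smooth factors) produces no boundary terms — this is where smoothness of $\phi$ on the compact space $\overline{\mathbb{R}}$ and the decay of $\varphi,\psi \in \mathcal{S}$ are essential, since they guarantee $W_\lambda(\varphi,\psi) \in \mathcal{F}(\mathbb{R}^2)$ so that all the manipulations of Section~\ref{sec:4} and the pairings of Section~\ref{sec:8} apply verbatim. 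Once the form is shown to be continuous in the first variable, the operator it defines is unique with dense domain, and comparing it term-by-term with $\sum_{l=0}^n \binom{n}{l}\lambda^l(1-\lambda)^{n-l}\hat{q}^l\phi(\hat{p})\hat{q}^{n-l}$ — each summand of which is a legitimate operator on $\mathcal{S}$ since $\hat{q}$ and $\phi(\hat{p})$ preserve $\mathcal{S}$ — finishes the proof.
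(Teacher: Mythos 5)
Your overall strategy --- defining $\hat{f}$ through the bilinear form
$(\varphi,\hat{f}\psi)=\braket{f,W_\lambda(\varphi,\psi)}$ on $\mathcal{S}$,
trading the factor $q^n$ for $(i\hbar D)^n$ under the generalized Fourier
transform, and expanding by the Leibniz rule --- is exactly the paper's route,
and the analytic concerns you flag (convergence of the improper \nbr{q}integral,
interchange of the integrations, absence of boundary terms thanks to smoothness
on $\overline{\mathbb{R}}$) are the right ones; the paper handles them by
precisely the chain of manipulations you describe.

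There is, however, a genuine gap in the combinatorial step. After the changes of
variables one arrives at
$\hat{f}\psi(p)=(i\hbar D_{p'})^n\bigl(\phi(p\oplus\lambda\circ p')\,
\psi(p\oplus p')\bigr)\big|_{p'=0}$, and by \eqref{eq:10} a derivative landing
on the \nbr{\phi}factor carries a factor $\lambda$ while a derivative landing on
$\psi(p\oplus p')$ carries a factor $1$ --- \emph{not} $(1-\lambda)$ as you
assert. The Leibniz rule therefore gives
\begin{equation*}
\sum_{k=0}^n\binom{n}{k}\lambda^k\bigl((i\hbar D_p)^k\phi\bigr)
\bigl((i\hbar D_p)^{n-k}\psi\bigr),
\end{equation*}
which does not match the claimed
$\sum_l\binom{n}{l}\lambda^l(1-\lambda)^{n-l}\hat{q}^l\phi(\hat{p})\hat{q}^{n-l}$
term by term: each summand $\hat{q}^l\phi(\hat{p})\hat{q}^{n-l}$ is itself a
Leibniz sum, so the index $l$ is not ``the number of derivatives that hit
$\phi$''. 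The missing idea is the re-summation the paper performs: insert
$1=(\lambda+(1-\lambda))^{n-k}$, expand, swap the double sum using
$\binom{n}{k}\binom{n-k}{l-k}=\binom{n}{l}\binom{l}{k}$, and reassemble the
inner sum by the Leibniz rule in reverse into
$(i\hbar D_p)^l\bigl(\phi\cdot(i\hbar D_p)^{n-l}\psi\bigr)$. If one follows your
stated mechanism literally (a factor $(1-\lambda)$ for every derivative missing
$\phi$) one obtains
$\sum_k\binom{n}{k}\lambda^k(1-\lambda)^{n-k}((i\hbar D_p)^k\phi)
((i\hbar D_p)^{n-k}\psi)$, which is a different operator. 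The same caveat
applies to your alternative route through \eqref{eq:41}: there the derivative on
the \nbr{\tilde{g}}factor splits over both of its slots, with the coefficient
$(1-\lambda)$ attached only to the second, so the bookkeeping becomes strictly
more involved, not less.
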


\begin{proof}
For $\varphi \in \mathcal{H}$ and $\psi \in \mathcal{S}$ we have that
$SW_\lambda(\varphi,\psi) = W_{1-\lambda}(\varphi,\psi)$. Hence
\begin{align}
\braket{f,W_\lambda(\varphi,\psi)} & = \int_{\mathbb{R}^2} q^n \phi(p)
    W_{1-\lambda}(\varphi,\psi)(q,p) \ud{q}\ud{\mu(p)}
    \nonumber \displaybreak[0] \\
& = \int_\mathbb{R} q^n \left( \int_\mathbb{R} \left( \int_\mathbb{R} \phi(p)
    \overline{\varphi(p \ominus \lambda \circ p')}
    \psi(p \oplus (1 - \lambda) \circ p') e^{\frac{i}{\hbar}(q,p')} \ud{\mu(p')}
    \right) \ud{\mu(p)} \right) \ud{q} \nonumber \displaybreak[0] \\
& = \int_\mathbb{R} q^n \left( \int_\mathbb{R} \left( \int_\mathbb{R} \phi(p)
    \overline{\varphi(p \ominus \lambda \circ p')}
    \psi(p \oplus (1 - \lambda) \circ p') e^{\frac{i}{\hbar}(q,p')} \ud{\mu(p)}
    \right) \ud{\mu(p')} \right) \ud{q} \nonumber \displaybreak[0] \\
& = \int_\mathbb{R} q^n \left( \int_\mathbb{R} \left( \int_\mathbb{R}
    \overline{\varphi(p)} \phi(p \oplus \lambda \circ p')
    \psi(p \oplus p') e^{\frac{i}{\hbar}(q,p')} \ud{\mu(p)} \right)
    \ud{\mu(p')} \right) \ud{q} \nonumber \displaybreak[0] \\
& = \int_\mathbb{R} q^n \left( \int_\mathbb{R} \left( \int_\mathbb{R}
    \overline{\varphi(p)} \phi(p \oplus \lambda \circ p')
    \psi(p \oplus p') e^{\frac{i}{\hbar}(q,p')} \ud{\mu(p')} \right)
    \ud{\mu(p)} \right) \ud{q} \nonumber \displaybreak[0] \\
& = \int_\mathbb{R} q^n \left( \int_\mathbb{R} \left( \int_\mathbb{R}
    \overline{\varphi(p)} \phi((1 - \lambda) \circ p \oplus \lambda \circ p')
    \psi(p') e^{\frac{i}{\hbar}(q,p')} \ud{\mu(p')} \right)
    e^{-\frac{i}{\hbar}(q,p)} \ud{\mu(p)} \right) \ud{q}
    \nonumber \displaybreak[0] \\
& = \int_\mathbb{R} \left( \int_\mathbb{R} \overline{\varphi(p)} \left(
    \int_\mathbb{R} (i\hbar D_{p'})^n
    \Bigl(\phi((1 - \lambda) \circ p \oplus \lambda \circ p') \psi(p')\Bigr)
    e^{\frac{i}{\hbar}(q,p')} \ud{\mu(p')} \right) e^{-\frac{i}{\hbar}(q,p)}
    \ud{\mu(p)} \right) \ud{q} \nonumber \displaybreak[0] \\
& = \int_\mathbb{R} \overline{\varphi(p)} \left( \int_\mathbb{R} \left(
    \int_\mathbb{R} (i\hbar D_{p'})^n
    \Bigl(\phi((1 - \lambda) \circ p \oplus \lambda \circ p') \psi(p')\Bigr)
    e^{\frac{i}{\hbar}(q,p')} \ud{\mu(p')} \right) e^{-\frac{i}{\hbar}(q,p)}
    \ud{q} \right) \ud{\mu(p)} \nonumber \displaybreak[0] \\
& = \int_\mathbb{R} \overline{\varphi(p)} (i\hbar D_{p'})^n
    \Bigl(\phi((1 - \lambda) \circ p \oplus \lambda \circ p')
    \psi(p')\Bigr) \bigg|_{p' = p} \ud{\mu(p)} \nonumber \displaybreak[0] \\
& = \int_\mathbb{R} \overline{\varphi(p)} (i\hbar D_{p'})^n
    \Bigl(\phi(p \oplus \lambda \circ p')
    \psi(p \oplus p')\Bigr) \bigg|_{p' = 0} \ud{\mu(p)}.
\end{align}
Thus, the bilinear form $(\varphi,\psi) \mapsto
\braket{f,W_\lambda(\varphi,\psi)}$ is continuous with respect to the first
variable and defines an operator $\hat{f}$ through the formula
\begin{equation}
\hat{f}\psi(p) = (i\hbar D_{p'})^n \Bigl(\phi(p \oplus \lambda \circ p')
    \psi(p \oplus p')\Bigr) \bigg|_{p' = 0}.
\end{equation}
By \eqref{eq:10} we get
\begin{align}
\hat{f}\psi(p) & = \sum_{k=0}^n \binom{n}{k} \Bigl((i\hbar D_{p'})^k
    \phi(p \oplus \lambda \circ p')\Bigr) \Bigl((i\hbar D_{p'})^{n-k}
    \psi(p \oplus p')\Bigr) \bigg|_{p' = 0} \nonumber \displaybreak[0] \\
& = \sum_{k=0}^n \binom{n}{k} \Bigl(\lambda^k (i\hbar D_p)^k
    \phi(p \oplus \lambda \circ p')\Bigr) \Bigl((i\hbar D_p)^{n-k}
    \psi(p \oplus p')\Bigr) \bigg|_{p' = 0} \nonumber \displaybreak[0] \\
& = \sum_{k=0}^n \binom{n}{k} \Bigl(\lambda^k (i\hbar D_p)^k \phi(p)\Bigr)
    \Bigl((i\hbar D_p)^{n-k} \psi(p)\Bigr) \nonumber \displaybreak[0] \\
& = \sum_{k=0}^n \binom{n}{k} \bigl(\lambda + (1 - \lambda)\bigr)^{n-k}
    \Bigl(\lambda^k (i\hbar D_p)^k \phi(p)\Bigr)
    \Bigl((i\hbar D_p)^{n-k} \psi(p)\Bigr) \nonumber \displaybreak[0] \\
& = \sum_{k=0}^n \binom{n}{k} \sum_{l=k}^n \binom{n-k}{l-k} \lambda^{l-k}
    (1 - \lambda)^{n - k - (l - k)} \Bigl(\lambda^k (i\hbar D_p)^k \phi(p)\Bigr)
    \Bigl((i\hbar D_p)^{n-k} \psi(p)\Bigr) \nonumber \displaybreak[0] \\
& = \sum_{l=0}^n \sum_{k=0}^l \binom{n}{l} \binom{l}{k} \lambda^{l}
    (1 - \lambda)^{n - l} \Bigl((i\hbar D_p)^k \phi(p)\Bigr)
    \Bigl((i\hbar D_p)^{l-k} (i\hbar D_p)^{n-l} \psi(p)\Bigr)
    \nonumber \displaybreak[0] \\
& = \sum_{l=0}^n \binom{n}{l} \lambda^{l} (1 - \lambda)^{n - l} (i\hbar D_p)^l
    \Bigl(\phi(p) (i\hbar D_p)^{n-l} \psi(p)\Bigr) \nonumber \displaybreak[0] \\
& = \sum_{l=0}^n \binom{n}{l} \lambda^{l} (1 - \lambda)^{n - l}
    \hat{q}^l \phi(\hat{p}) \hat{q}^{n-l} \psi(p).
\end{align}
\end{proof}

For $f \in \mathcal{F}(\mathbb{R}^2)$ the corresponding operator $\hat{f}$ can
be written in a form
\begin{equation}
\hat{f} = f_\lambda(\hat{q},\hat{p})
= \frac{1}{2\pi\hbar} \int_{\mathbb{R}^2} \mathcal{F}_\beta f(q',p')
    e^{\frac{i}{\hbar}\lambda(\hat{q},p')}
    e^{-\frac{i}{\hbar}(q',\hat{p})}
    e^{\frac{i}{\hbar}(1 - \lambda)(\hat{q},p')}
    \ud{q'}\ud{\mu(p')},
\end{equation}
where $e^{\frac{i}{\hbar}\lambda(\hat{q},p')}$ and
$e^{-\frac{i}{\hbar}(q',\hat{p})}$ are unitary operators which action on a
function $\psi \in \mathcal{H}$ is given by the formulas
\begin{equation}
e^{\frac{i}{\hbar}\lambda(\hat{q},p')}\psi(p) =
    \psi(p \ominus \lambda \circ p'), \quad
e^{-\frac{i}{\hbar}(q',\hat{p})}\psi(p) = e^{-\frac{i}{\hbar}(q',p)}\psi(p).
\end{equation}

\section{Position eigenvectors}
\label{sec:9}
Solutions to the eigenvalue equations
\begin{equation}
q \star \rho_\xi = \xi\rho_\xi, \quad \rho_\xi \star q = \xi\rho_\xi, \quad
\xi \in \mathbb{R}
\end{equation}
define position eigenvectors. They are in the form
\begin{equation}
\rho_\xi(q,p) = \sinc\left(\frac{q - \xi}{2\hbar\sqrt\beta}\right),
\end{equation}
where $\sinc x = \frac{\sin \pi x}{\pi x}$. Indeed, since
\begin{equation}
\tilde{\rho}_\xi(p',p) = 2\hbar\sqrt\beta e^{-\frac{i}{\hbar}(\xi,p')}
\end{equation}
we get
\begin{align}
(q \star \rho_\xi)^\sim(p',p) & = \frac{1}{2\pi\hbar} \int_\mathbb{R} q' \left(
    \int_\mathbb{R}
    \tilde{\rho}_\xi\bigl(p' \ominus p'',p \ominus (1 - \lambda) \circ p''\bigr)
    e^{-\frac{i}{\hbar}(q',p'')} \ud{\mu(p'')} \right) \ud{q'} \nonumber \\
& = \frac{\sqrt\beta}{\pi} \int_\mathbb{R} q' \left( \int_\mathbb{R}
    e^{-\frac{i}{\hbar}(\xi,p')} e^{-\frac{i}{\hbar}(q' - \xi,p'')}
    \ud{\mu(p'')} \right) \ud{q'} \nonumber \\
& = e^{-\frac{i}{\hbar}(\xi,p')} \int_\mathbb{R} q'
    \sinc\left(\frac{q' - \xi}{2\hbar\sqrt\beta}\right) \ud{q'} \nonumber \\
& = e^{-\frac{i}{\hbar}(\xi,p')} \lim_{R \to \infty} \int_{-R}^R (q' + \xi)
    \sinc\left(\frac{q'}{2\hbar\sqrt\beta}\right) \ud{q'} \nonumber \\
& = 2\hbar\sqrt\beta\xi e^{-\frac{i}{\hbar}(\xi,p')}
= \xi \tilde{\rho}_\xi(p',p)
\end{align}
and similarly
\begin{equation}
(\rho_\xi \star q)^\sim(p',p) = \xi \tilde{\rho}_\xi(p',p).
\end{equation}

The elements of the Hilbert space $\mathcal{H}$ corresponding to $\rho_\xi$ are
in the form
\begin{equation}
\psi_\xi(p) = \sqrt{\frac{\sqrt\beta}{\pi}} e^{-\frac{i}{\hbar}(\xi,p)}.
\end{equation}
A direct calculation shows that indeed $\rho_\xi = W_\lambda(\psi_\xi,\psi_\xi)$
and that $\psi_\xi$ are eigenvectors of the operator $\hat{q}$.

The position eigenvectors $\rho_\xi$ are not physical states as their
uncertainty in position is smaller than the minimal uncertainty $\Delta q_0$
(cf. \cite{Kempf:1995} where this is discussed in full detail).

\section{Maximal localization states}
\label{sec:10}
In \cite{Kempf:1995} authors calculated states of maximal localization around a
position $\xi$, i.e. states with the following properties
\begin{equation}
\braket{\hat{q}} = \xi, \quad \braket{\hat{p}} = 0, \quad \Delta q = \Delta q_0,
\end{equation}
where $\Delta q_0 = \hbar\sqrt\beta$ is the smallest uncertainty in position
which can be reached. In momentum representation the received states were given
by the formula
\begin{equation}
\psi_\xi^\text{ML}(p) = \sqrt{\frac{2\sqrt\beta}{\pi}} (1 + \beta p^2)^{-1/2}
    e^{-\frac{i}{\hbar}(\xi,p)}.
\end{equation}

To demonstrate the developed formalism we can calculate the form of the maximal
localization states on phase space. By virtue of \eqref{eq:34} we receive the
following formula for the quasi-probability distribution function representing
a state of maximal localization around a point $(\xi,0)$ in phase space:
\begin{align}
\rho_\xi^\text{ML}(q,p) & =
    W_\lambda(\psi_\xi^\text{ML},\psi_\xi^\text{ML})(q,p) \nonumber \\
& = \frac{1}{2} \frac{1 - \beta p^2}{1 + \beta p^2} \left(
    \sinc\left(\frac{1}{2} - \lambda - \frac{q - \xi}{2\hbar\sqrt\beta}
    \right)
    + \sinc\left(\frac{1}{2} - \lambda + \frac{q - \xi}{2\hbar\sqrt\beta}
    \right)\right) \nonumber \\
& \quad {} + \frac{1}{2}\left(
    \sinc\left(\frac{1}{2} - \frac{q - \xi}{2\hbar\sqrt\beta}\right)
    + \sinc\left(\frac{1}{2} + \frac{q - \xi}{2\hbar\sqrt\beta}\right)\right)
    \nonumber \\
& \quad {} + \frac{i\sqrt\beta p}{1 + \beta p^2} \left(
    \sinc\left(\frac{1}{2} - \lambda - \frac{q - \xi}{2\hbar\sqrt\beta}
    \right)
    - \sinc\left(\frac{1}{2} - \lambda + \frac{q - \xi}{2\hbar\sqrt\beta}
    \right)\right).
\label{eq:35}
\end{align}
Indeed, we have that
\begin{align}
& \overline{\psi_\xi^\text{ML}\bigl(p \ominus (1 - \lambda) \circ p'\bigr)}
    \psi_\xi^\text{ML}\bigl(p \oplus \lambda \circ p'\bigr)
    e^{\frac{i}{\hbar}(q,p')} = \nonumber \\
& \qquad \frac{2\sqrt\beta}{\pi}
    \bigl(1 + \beta (p \ominus (1 - \lambda) \circ p')^2\bigr)^{-1/2}
    \bigl(1 + \beta (p \oplus \lambda \circ p')^2\bigr)^{-1/2}
    e^{\frac{i}{\hbar}(q - \xi,p')}.
\end{align}
Since
\begin{equation}
1 + \beta (p \oplus \lambda \circ p')^2 = (1 + \beta p^2)
    \frac{1 + \beta (\lambda \circ p')^2}{(1 - \beta p (\lambda \circ p'))^2}
= (1 + \beta p^2) \frac{1 + \tan^2(\lambda\arctan(\sqrt\beta p'))}
    {1 - \sqrt\beta p \tan(\lambda\arctan(\sqrt\beta p'))}
\end{equation}
we get for $p' = \frac{1}{\sqrt\beta}\tan\bar{p}$, after using various
trigonometric identities, that
\begin{align}
& \overline{\psi_\xi^\text{ML}\bigl(p \ominus (1 - \lambda) \circ p'\bigr)}
    \psi_\xi^\text{ML}\bigl(p \oplus \lambda \circ p'\bigr)
    e^{\frac{i}{\hbar}(q,p')} = \nonumber \\
& \qquad \frac{2\sqrt\beta}{\pi}
    \left(\frac{1}{2}\frac{1 - \beta p^2}{1 + \beta p^2}
    \cos\bigl((1 - 2\lambda)\bar{p}\bigr) + \frac{1}{2} \cos(\bar{p})
    + \frac{\sqrt\beta p}{1 + \beta p^2} \sin\bigl((1 - 2\lambda)\bar{p}\bigr)
    \right) e^{i\frac{q - \xi}{\hbar\sqrt\beta}\bar{p}}.
\end{align}
Hence, the function $\rho_\xi^\text{ML}(q,p)$ is expressed by the following
integrals
\begin{align}
\rho_\xi^\text{ML}(q,p) & = \frac{1}{\pi}\Biggl(
    \frac{1 - \beta p^2}{1 + \beta p^2}
    \int_{-\frac{\pi}{2}}^{\frac{\pi}{2}} \cos\bigl((1 - 2\lambda)\bar{p}\bigr)
    e^{i\frac{q - \xi}{\hbar\sqrt\beta}\bar{p}} \ud{\bar{p}}
    + \int_{-\frac{\pi}{2}}^{\frac{\pi}{2}} \cos(\bar{p})
    e^{i\frac{q - \xi}{\hbar\sqrt\beta}\bar{p}} \ud{\bar{p}} \nonumber \\
& \quad {} + \frac{2\sqrt\beta p}{1 + \beta p^2}
    \int_{-\frac{\pi}{2}}^{\frac{\pi}{2}} \sin\bigl((1 - 2\lambda)\bar{p}\bigr)
    e^{i\frac{q - \xi}{\hbar\sqrt\beta}\bar{p}} \ud{\bar{p}} \Biggr),
\end{align}
which can be easily calculated to give \eqref{eq:35}.

Note, that the state $\rho_\xi^\text{ML}(q,p)$ is a shift in position by $\xi$
of the state $\rho_0^\text{ML}(q,p)$ localized around a point $(0,0)$ in phase
space. For particular values of $\lambda$ the states $\rho_\xi^\text{ML}(q,p)$
take a simpler form. For instance, when $\lambda = \tfrac{1}{2}$ we have
\begin{equation}
\rho_\xi^\text{ML}(q,p) = \frac{1 - \beta p^2}{1 + \beta p^2}
    \sinc\left(\frac{q - \xi}{2\hbar\sqrt\beta}\right) + \frac{1}{2}\left(
    \sinc\left(\frac{1}{2} - \frac{q - \xi}{2\hbar\sqrt\beta}\right)
    + \sinc\left(\frac{1}{2} + \frac{q - \xi}{2\hbar\sqrt\beta}\right)\right).
\end{equation}
In this case the quasi-probability distribution functions
$\rho_\xi^\text{ML}(q,p)$ are real-valued, which is the common property of every
state. The plot of the state localized around a point $(0,0)$ is presented in
Fig.~\ref{fig:1}. For $\lambda = 0$ we have
\begin{align}
\rho_\xi^\text{ML}(q,p) & = \frac{1}{1 + \beta p^2}\left(
    \sinc\left(\frac{1}{2} - \frac{q - \xi}{2\hbar\sqrt\beta}\right)
    + \sinc\left(\frac{1}{2} + \frac{q - \xi}{2\hbar\sqrt\beta}\right)\right)
    \nonumber \\
& \quad {} + \frac{i\sqrt\beta p}{1 + \beta p^2} \left(
    \sinc\left(\frac{1}{2} - \frac{q - \xi}{2\hbar\sqrt\beta}\right)
    - \sinc\left(\frac{1}{2} + \frac{q - \xi}{2\hbar\sqrt\beta}\right)\right).
\end{align}
The plots of real and imaginary parts of the state localized around a point
$(0,0)$ are presented in Fig.~\ref{fig:2}.

\begin{figure}
\centering
\includegraphics{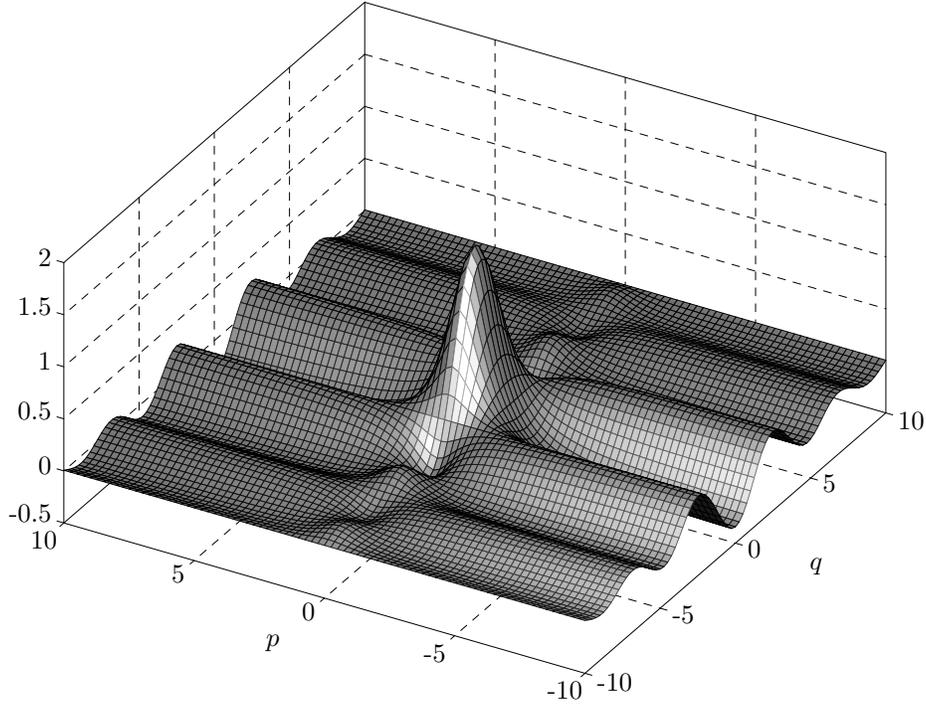}
\caption{Plot of maximally localized state $\rho_0^\text{ML}(q,p)$ for
$\lambda = \tfrac{1}{2}$. Units $\hbar = \beta = 1$ are used.}
\label{fig:1}
\end{figure}

\begin{figure}
\centering
\includegraphics{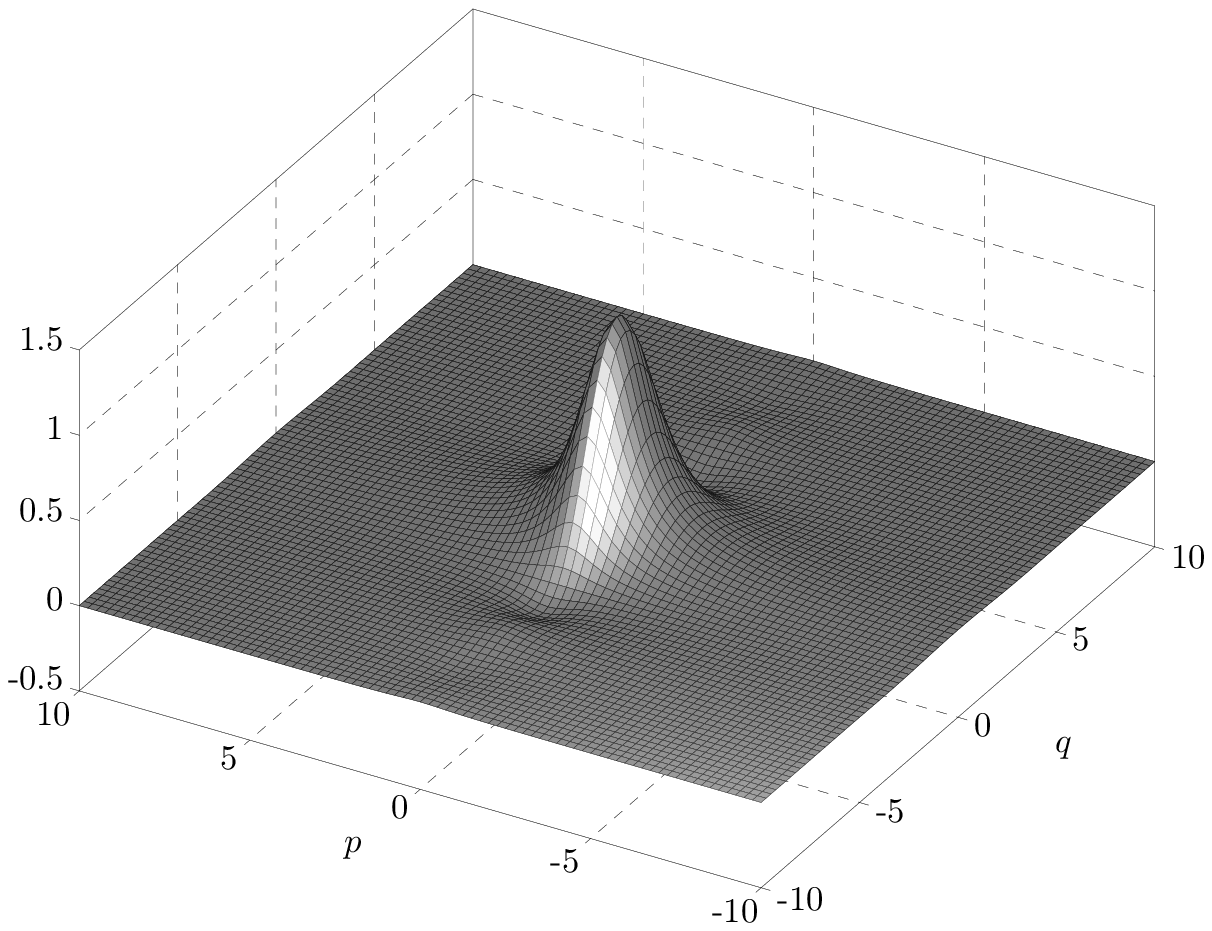} \\
\bigskip
\includegraphics{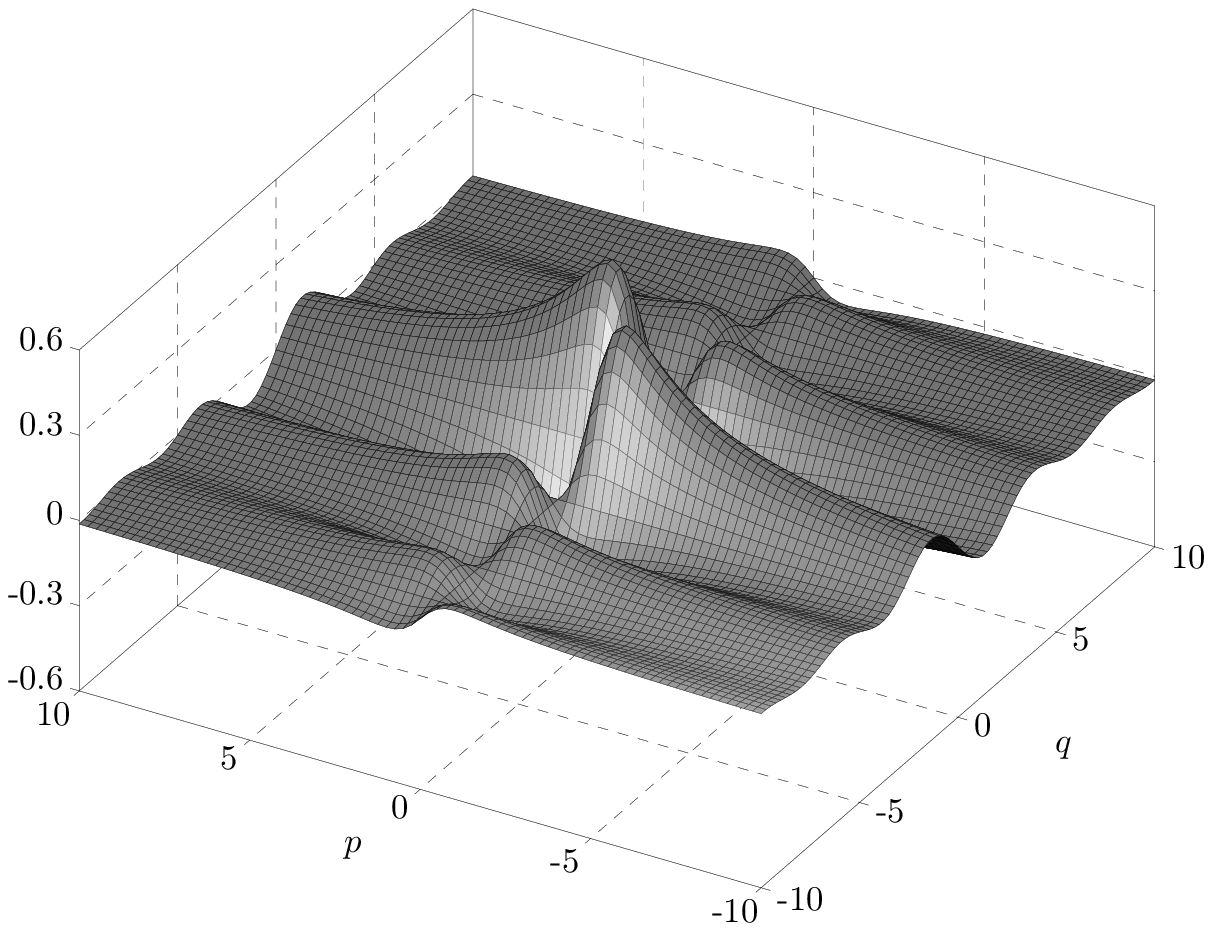}
\caption{Plot of maximally localized state $\rho_0^\text{ML}(q,p)$ for
$\lambda = 0$. Real part in upper plot, imaginary part in lower plot. Units
$\hbar = \beta = 1$ are used.}
\label{fig:2}
\end{figure}

\section{Conclusions and final remarks}
\label{sec:11}
In the paper we developed a formalism of non-formal deformation quantization
exhibiting a minimal length scale. The theory was presented for the case of
two dimensional phase space $\mathbb{R}^2$. The generalization to $2n$
dimensions leading to the following commutation relations for the operators of
position and momentum
\begin{equation}
\begin{split}
[\hat{q}_j,\hat{p}_k] & = i\hbar\delta_{jk}
    \left(\hat{1} + \beta \hat{p}_j \hat{p}_k\right), \\
[\hat{q}_j,\hat{q}_k] & = [\hat{p}_j,\hat{p}_k] = 0
\end{split}
\end{equation}
is straightforward. However, if we would like to consider different commutation
relations, like the one found in \cite{Kempf:1995}
\begin{equation}
\begin{split}
[\hat{q}_j,\hat{p}_k] & = i\hbar\delta_{jk}
    \left(\hat{1} + \beta \hat{p}^2\right), \\
[\hat{q}_j,\hat{q}_k] & = 2i\hbar\beta
    \bigl(\hat{p}_j \hat{q}_k - \hat{p}_k \hat{q}_j\bigr), \\
[\hat{p}_j,\hat{p}_k] & = 0,
\end{split}
\end{equation}
where $\hat{p}^2 = \sum_{i=1}^n \hat{p}_i^2$, then the presented theory does not
seem to generalize in a simple way.

An interesting topic of further investigation would be a development of
deformation quantization theory with minimal length based on more general
uncertainty relations exhibiting also a minimal uncertainty in momentum, like
the one considered in \cite{Kempf:1992}
\begin{equation}
\Delta q \Delta p \geq \frac{\hbar}{2}\left(1 + \alpha (\Delta q)^2
    + \alpha \braket{\hat{q}}^2 + \beta (\Delta p)^2
    + \beta \braket{\hat{p}}^2\right).
\end{equation}
However, in this case there is neither position nor momentum operator
representation, and one have to use Bergmann-Fock representation or only a
deformation quantization approach.

The \nbr{\star}product \eqref{eq:4} is not the only product exhibiting a minimal
uncertainty in position which can be introduced. Another example of such a
star-product is
\begin{equation}
f \star g = f\exp\left(i\hbar(1 - \lambda)\overleftarrow{\mathcal{D}_q}
    \overrightarrow{\mathcal{D}_p}
    - i\hbar\lambda\overleftarrow{\mathcal{D}_p}
    \overrightarrow{\mathcal{D}_q}\right)g,
\end{equation}
where
\begin{equation}
\begin{split}
\mathcal{D}_q & = (1 + \beta p^2)^{-1/2} \partial_q, \\
\mathcal{D}_p & = -\beta qp(1 + \beta p^2)^{1/2} \partial_q
    + (1 + \beta p^2)^{3/2} \partial_p.
\end{split}
\end{equation}
It can be easily verified that it satisfies the commutation relation
\eqref{eq:43}. This star-product can be transformed to the product of the form
\eqref{eq:3} by the following noncanonical transformation of coordinates
\begin{equation}
\begin{split}
\bar{q} & = q \sqrt{1 + \beta p^2}, \\
\bar{p} & = \frac{p}{\sqrt{1 + \beta p^2}}.
\end{split}
\end{equation}
Note, however, that contrary to the star-product \eqref{eq:4} for this
star-product the property \eqref{eq:42} will not hold. In fact, we can calculate
that
\begin{equation}
q \star q = q^2 + i\hbar(2\lambda - 1)\beta qp
    + \frac{1}{2}\hbar^2 \lambda(1 - \lambda)\beta^2 p^2, \quad
p \star p = p^2.
\end{equation}
It seems that the star-products from the family \eqref{eq:4} are the only
products with the property \eqref{eq:42}. Thus, with respect to this property,
the family \eqref{eq:4} of star-products is distinguished from other products
satisfying the commutation relation \eqref{eq:43}.

%\bibliographystyle{ncq}
%\bibliography{../../../Bibliography/jour,%

\begin{thebibliography}{34}
\providecommand{\natexlab}[1]{#1}
\providecommand{\url}[1]{\texttt{#1}}
\providecommand{\urlprefix}{URL }
\providecommand{\selectlanguage}[1]{\relax}
\providecommand{\eprint}[2][]{\url{#2}}

\bibitem[{Snyder(1947)}]{Snyder:1947}
H.~S. Snyder.
\newblock \emph{{Quantized space-time}}.
\newblock Phys. Rev. \textbf{71}(1), (1947) pp. 38--41

\bibitem[{Mead(1964)}]{Mead:1964}
C.~A. Mead.
\newblock \emph{{Possible connection between gravitation and fundamental
  length}}.
\newblock Phys. Rev. \textbf{135}(3B), (1964) pp. B849--B862

\bibitem[{Mead(1966)}]{Mead:1966}
C.~A. Mead.
\newblock \emph{{Observable consequences of fundamental-length hypotheses}}.
\newblock Phys. Rev. \textbf{143}(4), (1966) pp. 990--1005

\bibitem[{Adler and Santiago(1999)}]{Adler:1999}
R.~J. Adler and D.~I. Santiago.
\newblock \emph{{On gravity and the uncertainty principle}}.
\newblock Modern Phys. Lett. A \textbf{14}(20), (1999) pp. 1371--1381.
\newblock \eprint{arXiv:gr-qc/9904026}

\bibitem[{Amati et~al.(1987)Amati, Ciafaloni, and Veneziano}]{Amati:1987}
D.~Amati, M.~Ciafaloni, and G.~Veneziano.
\newblock \emph{{Superstring collisions at planckian energies}}.
\newblock Phys. Lett. B \textbf{197}(1--2), (1987) pp. 81--88

\bibitem[{Amati et~al.(1988)Amati, Ciafaloni, and Veneziano}]{Amati:1988}
D.~Amati, M.~Ciafaloni, and G.~Veneziano.
\newblock \emph{{Classical and quantum gravity effects from planckian energy
  superstring collisions}}.
\newblock Int. J. Mod. Phys. A \textbf{3}(7), (1988) pp. 1615--1661

\bibitem[{Gross and Mende(1988)}]{Gross:1988}
D.~J. Gross and P.~F. Mende.
\newblock \emph{{String theory beyond the Planck scale}}.
\newblock Nucl. Phys. B \textbf{303}(3), (1988) pp. 407--454

\bibitem[{Amati et~al.(1990)Amati, Ciafaloni, and Veneziano}]{Amati:1990}
D.~Amati, M.~Ciafaloni, and G.~Veneziano.
\newblock \emph{{Higher order gravitational deflection and soft bremsstrahlung
  in planckian energy superstring collisions}}.
\newblock Nucl. Phys. B \textbf{347}(3), (1990) pp. 550--580

\bibitem[{Konishi et~al.(1990)Konishi, Paffuti, and Provero}]{Konishi:1990}
K.~Konishi, G.~Paffuti, and P.~Provero.
\newblock \emph{{Minimum physical length and the generalized uncertainty
  principle in string theory}}.
\newblock Phys. Lett. B \textbf{234}(3), (1990) pp. 276--284

\bibitem[{Maggiore(1993)}]{Maggiore:1993}
M.~Maggiore.
\newblock \emph{{The algebraic structure of the generalized uncertainty
  principle}}.
\newblock Phys. Lett. B \textbf{319}(1--3), (1993) pp. 83--86.
\newblock \eprint{arXiv:hep-th/9309034}

\bibitem[{Maggiore(1994)}]{Maggiore:1994}
M.~Maggiore.
\newblock \emph{{Quantum groups, gravity, and the generalized uncertainty
  principle}}.
\newblock Phys. Rev. D \textbf{49}(10), (1994) p. 5182.
\newblock \eprint{arXiv:hep-th/9305163}

\bibitem[{Scardigli(1999)}]{Scardigli:1999}
F.~Scardigli.
\newblock \emph{{Generalized uncertainty principle in quantum gravity from
  micro-black hole gedanken experiment}}.
\newblock Phys. Lett. B \textbf{452}(1--2), (1999) pp. 39--44.
\newblock \eprint{arXiv:hep-th/9904025}

\bibitem[{Scardigli and Casadio(2003)}]{Scardigli:2003}
F.~Scardigli and R.~Casadio.
\newblock \emph{{Generalized uncertainty principle, extra dimensions and
  holography}}.
\newblock Class. Quant. Grav. \textbf{20}(18), (2003) pp. 3915--3926.
\newblock \eprint{arXiv:hep-th/0307174}

\bibitem[{Hossenfelder(2013)}]{Hossenfelder:2013}
S.~Hossenfelder.
\newblock \emph{{Minimal length scale scenarios for quantum gravity}}.
\newblock Living Rev. Relativ. \textbf{16}(1), (2013) p.~2.
\newblock \eprint{arXiv:1203.6191 [gr-qc]}

\bibitem[{Kempf et~al.(1995)Kempf, Mangano, and Mann}]{Kempf:1995}
A.~Kempf, G.~Mangano, and R.~B. Mann.
\newblock \emph{{Hilbert space representation of the minimal length uncertainty
  relation}}.
\newblock Phys. Rev. D \textbf{52}(2), (1995) pp. 1108--1118.
\newblock \eprint{arXiv:hep-th/9412167}

\bibitem[{Kempf(1992)}]{Kempf:1992}
A.~Kempf.
\newblock \emph{{Quantum group-symmetric fock spaces with bargmann-fock
  representation}}.
\newblock Lett. Math. Phys. \textbf{26}(1), (1992) pp. 1--12

\bibitem[{Kempf(1994)}]{Kempf:1994}
A.~Kempf.
\newblock \emph{{Uncertainty relation in quantum mechanics with quantum group
  symmetry}}.
\newblock J. Math. Phys. \textbf{35}(9), (1994) pp. 4483--4496.
\newblock \eprint{arXiv:hep-th/9311147}

\bibitem[{Quesne et~al.(2003)Quesne, Penson, and Tkachuk}]{Quesne.Penson:2003}
C.~Quesne, K.~A. Penson, and V.~M. Tkachuk.
\newblock \emph{{Maths-type $q$-deformed coherent states for $q > 1$}}.
\newblock Phys. Lett. A \textbf{313}(1--2), (2003) pp. 29--36.
\newblock \eprint{arXiv:quant-ph/0303120}

\bibitem[{Quesne and Tkachuk(2003)}]{Quesne:2003}
C.~Quesne and V.~M. Tkachuk.
\newblock \emph{{Harmonic oscillator with nonzero minimal uncertainties in both
  position and momentum in a SUSYQM framework}}.
\newblock J. Phys. A \textbf{36}(41), (2003) pp. 10373--10389.
\newblock \eprint{arXiv:math-ph/0306047}

\bibitem[{Scardigli and Casadio(2015)}]{Scardigli:2015}
F.~Scardigli and R.~Casadio.
\newblock \emph{{Gravitational tests of the generalized uncertainty
  principle}}.
\newblock Eur. Phys. J. C \textbf{75}(9), (2015) p. 425.
\newblock \eprint{arXiv:1407.0113 [hep-th]}

\bibitem[{Scardigli et~al.(2017)Scardigli, Lambiase, and
  Vagenas}]{Scardigli:2017}
F.~Scardigli, G.~Lambiase, and E.~C. Vagenas.
\newblock \emph{{GUP parameter from quantum corrections to the Newtonian
  potential}}.
\newblock Phys. Lett. B \textbf{767}, (2017) pp. 242--246.
\newblock \eprint{arXiv:1611.01469 [hep-th]}

\bibitem[{Amelino-Camelia and Arzano(2002)}]{Amelino-Camelia:2002}
G.~Amelino-Camelia and M.~Arzano.
\newblock \emph{{Coproduct and star product in field theories on Lie-algebra
  noncommutative space-times}}.
\newblock Phys. Rev. D \textbf{65}(8), (2002) p. 084044.
\newblock \eprint{arXiv:hep-th/0105120}

\bibitem[{Freidel et~al.(2007)Freidel, Kowalski-Glikman, and
  Nowak}]{Freidel:2007}
L.~Freidel, J.~Kowalski-Glikman, and S.~Nowak.
\newblock \emph{{From noncommutative $\kappa$-Minkowski to Minkowski
  space-time}}.
\newblock Phys. Lett. B \textbf{648}(1), (2007) pp. 70--75.
\newblock \eprint{arXiv:hep-th/0612170}

\bibitem[{Meljanac and Kre{\v{s}}i{\'c}-Juri{\'c}(2008)}]{Meljanac:2008}
S.~Meljanac and S.~Kre{\v{s}}i{\'c}-Juri{\'c}.
\newblock \emph{{Generalized kappa-deformed spaces, star-products and their
  realizations}}.
\newblock J. Phys. A \textbf{41}(23), (2008) p. 235203.
\newblock \eprint{arXiv:0804.3072 [hep-th]}

\bibitem[{Bayen et~al.(1978{\natexlab{a}})Bayen, Flato, Fr{\o}nsdal,
  Lichnerowicz, and Sternheimer}]{Bayen:1978a}
F.~Bayen, M.~Flato, C.~Fr{\o}nsdal, A.~Lichnerowicz, and D.~Sternheimer.
\newblock \emph{{Deformation theory and quantization. I. Deformations of
  symplectic structures}}.
\newblock Ann. Phys. \textbf{111}(1), (1978{\natexlab{a}}) pp. 61--110

\bibitem[{Bayen et~al.(1978{\natexlab{b}})Bayen, Flato, Fr{\o}nsdal,
  Lichnerowicz, and Sternheimer}]{Bayen:1978b}
F.~Bayen, M.~Flato, C.~Fr{\o}nsdal, A.~Lichnerowicz, and D.~Sternheimer.
\newblock \emph{{Deformation theory and quantization. II. Physical
  applications}}.
\newblock Ann. Phys. \textbf{111}(1), (1978{\natexlab{b}}) pp. 111--151

\bibitem[{B{\l}aszak and Doma{\'n}ski(2012)}]{Blaszak:2012}
M.~B{\l}aszak and Z.~Doma{\'n}ski.
\newblock \emph{{Phase space quantum mechanics}}.
\newblock Ann. Phys. \textbf{327}(2), (2012) pp. 167--211.
\newblock \eprint{arXiv:1009.0150 [math-ph]}

\bibitem[{de~Gosson(2006)}]{Gosson:2006}
M.~de~Gosson.
\newblock \emph{{Symplectic Geometry and Quantum Mechanics}}, volume 166 of
  \emph{{Operator Theory: Advances and Applications}}.
\newblock Birkh{\"a}user, Basel (2006)

\bibitem[{Bastos et~al.(2008)Bastos, Bertolami, Dias, and Prata}]{Bastos:2008}
C.~Bastos, O.~Bertolami, N.~C. Dias, and J.~N. Prata.
\newblock \emph{{Weyl-Wigner formulation of noncommutative quantum mechanics}}.
\newblock J. Math. Phys. \textbf{49}(7), (2008) p. 072101.
\newblock \eprint{arXiv:hep-th/0611257}

\bibitem[{Bastos et~al.(2010)Bastos, Dias, and Prata}]{Bastos:2010}
C.~Bastos, N.~C. Dias, and J.~N. Prata.
\newblock \emph{{Wigner measures in noncommutative quantum mechanics}}.
\newblock Commun. Math. Phys. \textbf{299}(3), (2010) pp. 709--740.
\newblock \eprint{arXiv:0907.4438 [math-ph]}

\bibitem[{Dias et~al.(2010)Dias, de~Gosson, Luef, and Prata}]{Dias:2010}
N.~C. Dias, M.~de~Gosson, F.~Luef, and J.~N. Prata.
\newblock \emph{{A deformation quantization theory for noncommutative quantum
  mechanics}}.
\newblock J. Math. Phys. \textbf{51}(7), (2010) pp. 072101--072112

\bibitem[{Tkachuk(2012)}]{Tkachuk:2012}
V.~M. Tkachuk.
\newblock \emph{{Deformed Heisenberg algebra with minimal length and the
  equivalence principle}}.
\newblock Phys. Rev. A \textbf{86}(6), (2012) p. 062112.
\newblock \eprint{arXiv:1301.1891 [gr-qc]}

\bibitem[{Czachor(2016)}]{Czachor:2016}
M.~Czachor.
\newblock \emph{{Relativity of arithmetic as a fundamental symmetry of
  physics}}.
\newblock Quantum Stud.: Math. Found. \textbf{3}(2), (2016) pp. 123--133.
\newblock \eprint{arXiv:1412.8583 [math-ph]}

\bibitem[{Sugiura(1990)}]{Sugiura:1990}
M.~Sugiura.
\newblock \emph{{Unitary Representations and Harmonic Analysis}}, volume~44 of
  \emph{North-Holland Mathematical Library}.
\newblock North-Holland Publishing Co., Amsterdam, second edition (1990)
\end{thebibliography}
%              ../../../Bibliography/pub,%
%              ../../../Bibliography/bibliography}

\end{document}